\DeclareMathAlphabet{\mathcal}{OMS}{cmsy}{m}{n}
\newcommand{\svs}{\vspace{0.7mm}}
\newcommand{\vs}{\vspace{1.5mm}}
\theoremstyle{plain} 
\newtheorem{theorem}{Theorem}[section]
\newtheorem{lemma}[theorem]{Lemma}
\theoremstyle{definition} 
\newtheorem{definition}{Definition}[section]
\newtheorem{assumption}{Assumption}
\theoremstyle{remark} 
\newcommand{\G}{\mathbb{G}}
\newcommand{\Z}{\mathbb{Z}}
\newcommand{\bits}{\{0,1\}}
\newcommand{\Adv}{\textbf{Adv}}
\newcommand{\mc}[1]{\mathcal{#1}}
\newcommand{\tb}[1]{\textbf{#1}}
\newcommand{\lb}{\linebreak[0]}
\newcommand{\db}{\displaybreak[0]}
\title{Revocable Hierarchical Identity-Based Encryption\\ from Multilinear Maps}
\author{
    Seunghwan Park\footnote{Korea University, Seoul, Korea.
        Email: \texttt{sgusa@korea.ac.kr}.}
    \and
    Dong Hoon Lee\footnote{Korea University, Seoul, Korea.
        Email: \texttt{donghlee@korea.ac.kr}.}
    \and
    Kwangsu Lee\footnote{Sejong University, Seoul, Korea.
        Email: \texttt{kwangsu@sejong.ac.kr}.}
}
\date{}
\begin{document}

\maketitle

\begin{abstract}
In identity-based encryption (IBE) systems, an efficient key delegation
method to manage a large number of users and an efficient key revocation
method to handle the dynamic credentials of users are needed. Revocable
hierarchical IBE (RHIBE) can provide these two methods by organizing the
identities of users as a hierarchy and broadcasting an update key for
non-revoked users per each time period. To provide the key revocation
functionality, previous RHIBE schemes use a tree-based revocation scheme.
However, this approach has an inherent limitation such that the number of
update key elements depends on the number of revoked users.
In this paper, we propose two new RHIBE schemes in multilinear maps that
use the public-key broadcast encryption scheme instead of using the
tree-based revocation scheme to overcome the mentioned limitation. In our
first RHIBE scheme, the number of private key elements and update key
elements is reduced to $O(\ell)$ and $O(\ell)$ respectively where $\ell$ is
the depth of a hierarchical identity. In our second RHIBE scheme, we can
further reduce the number of private key elements from $O(\ell)$ to $O(1)$.
\end{abstract}

\vs \noindent {\bf Keywords:} Hierarchical identity-based encryption,
Key revocation, Key delegation, Multilinear maps.

\newpage

\section{Introduction}

Identity-based encryption (IBE) is a specific type of public-key encryption
(PKE) that uses an identity string of a user (e.g., e-mail address, phone
number) as a public key to simplify the management of public keys
\cite{Shamir84,BonehF01}. IBE can be extended to hierarchical IBE (HIBE) that
supports the delegation of private keys by allowing a parent user to generate
private keys of child users \cite{HorwitzL02,GentryS02}. For the deployment
of IBE (or HIBE) in real environments, an efficient revocation mechanism is
needed to handle dynamically changing credentials (private keys) of users.
Revocable HIBE (RHIBE) is an extension of HIBE that provides both the
delegation of private keys and the revocation of private keys. Although there
already exists a revocable IBE (RIBE) scheme \cite{BoldyrevaGK08}, it is not
easy to directly apply the technique of RIBE to RHIBE since the key
delegation of HIBE makes it hard to handle the revocation.

The first RHIBE scheme was proposed by Seo and Emura \cite{SeoE13e} that uses
a tree-based revocation scheme of Naor, Naor, and Lotspiech \cite{NaorNL01}
for the revocation functionality. To create an update key in this RHIBE
scheme, a user who has a private key should retrieve all update keys of all
ancestors. This method is called as history-preserving updates. After that,
Seo and Emura proposed another RHIBE scheme via history-free updates
\cite{SeoE15}. In this RHIBE scheme via history-free updates, a user can
simply create an update key after retrieving the update key of his parent
only. By using this new approach, they also reduced the size of a private key
from $O(\ell^2 \log N)$ to $O(\ell \log N)$ where $\ell$ is the depth of the
identity and $N$ is the maximum number of users in each level. Recently, Lee
and Park proposed new RHIBE schemes with shorter private keys and update keys
by removing the undesirable multiplicative factor $\ell$ from the size of
private keys and update keys \cite{LeeP16}.

Although the progress of RHIBE is impressive, the size of a private key and
an update key in previous RHIBE schemes still depends on the size of a
private key and a ciphertext in the tree-based revocation scheme. Recently,
Park, Lee, and Lee \cite{ParkLL15} proposed a new RIBE scheme with short keys
from multilinear maps by using the public-key broadcast encryption (PKBE)
scheme of Boneh, Gentry, and Waters \cite{BonehGW05} for the key revocation.
Their new technique enables for RIBE to have a constant number of private key
elements and update key elements. Therefore, we ask the following question in
this paper: \textit{``Can we also reduce the size of keys further in RHIBE by
using the PKBE scheme for the key revocation?"}

\subsection{Our Results}

In this paper, we propose two RHIBE schemes from multilinear
maps\footnote{Note that many candidate multilinear maps are currently broken
\cite{CheonHL+15,CheonFL+16}, but the multilinear map from
indistinguishability obfuscation is still alive \cite{AlbrechtFH+16}.} with
shorter private key elements and shorter update key elements. The followings
are our results:

\vs \noindent \textbf{RHIBE via History-Preserving Updates.} We first
construct an RHIBE scheme via history-preserving updates from three-leveled
multilinear maps by combining the HIBE scheme of Boneh and Boyen (BB-HIBE)
\cite{BonehB04e} and the PKBE scheme of Boneh, Gentry and Waters (BGW-PKBE)
\cite{BonehGW05}. We also prove its security in the selective revocation
model under the multilinear Diffie-Hellman exponent (MDHE) assumption. In
this RHIBE scheme, the number of group elements in a private key, an update
key, and a ciphertext is $O(\ell)$, $O(\ell)$, and $O(\ell)$ respectively
where $\ell$ is the maximum number of hierarchy identity. Note that the
number of private key elements and update key elements in our RHIBE scheme
only depends on the depth of a hierarchy identity.

\vs \noindent \textbf{RHIBE via History-Free Updates.} Next, we present
another RHIBE scheme via history-free updates from three-leveled multilinear
maps with a constant number of private key elements. This RHIBE scheme is
also secure in the selective revocation list model under the MDHE assumption.
In this RHIBE scheme, the number of group elements in a private key, an
update key, and a ciphertext is $O(1)$, $O(\ell)$, and $O(\ell)$
respectively. Compared with our first RHIBE scheme that has $O(\ell)$ group
elements in a private key, our second RHIBE scheme just has $O(1)$ group
elements in a private key. The detailed comparison of RHIBE schemes is given
in Table \ref{tab:rhibe-comp}.

\begin{table*}
\caption{Comparison of revocable hierarchical identity-based encryption schemes}
\label{tab:rhibe-comp}
\vs \small \addtolength{\tabcolsep}{5.2pt}
\renewcommand{\arraystretch}{1.4}
\newcommand{\otoprule}{\midrule[0.09em]}
    \begin{tabularx}{6.50in}{lcccccc}
    \toprule
    Scheme & PP Size & SK Size & UK Size & Model & Maps & Assumption \\
    \otoprule
    SE \cite{SeoE13e}   & $O(\ell)$ & $O(\ell^2 \log N)$ & $O(\ell r \log \frac{N}{r})$
                        & SE-IND & BLM & DBDH \\
    SE (CS) \cite{SeoE15} & $O(\ell)$ & $O(\ell \log N)$ & $O(\ell r \log \frac{N}{r})$
                        & SE-IND & BLM & $q$-Type \\
    SE (SD) \cite{SeoE15} & $O(\ell)$ & $O(\ell \log^2 N)$ & $O(\ell r)$
                        & SRL-IND & BLM & $q$-Type \\
    LP (CS) \cite{LeeP16} & $O(1)$ & $O(\log N)$ & $O(\ell + r \log \frac{N}{r})$
                        & SE-IND & BLM & $q$-Type \\
    LP (SD) \cite{LeeP16} & $O(1)$ & $O(\log^2 N)$ & $O(\ell + r)$
                        & SRL-IND & BLM & $q$-Type \\
    Ours                & $O(N + \lambda \ell)$ & $O(\ell)$ & $O(\ell)$
                        & SRL-IND & MLM & MDHE \\
    Ours                & $O(N + \lambda \ell)$ & $O(1)$ & $O(\ell)$
                        & SRL-IND & MLM & MDHE \\
    \bottomrule
    \multicolumn{7}{p{6.20in}}{
    Let $\lambda$ be a security parameter, $\ell$ be the maximum hierarchical level,
    $N$ be the maximum number of users, and $r$ be the number of revoked users.
    Sizes for public parameters (PP), private keys (SK), and update keys (UK)
    count group elements.
    BLM stands for bilinear maps and MLM stands for multilinear maps.
    }
    \end{tabularx}
\end{table*}

\subsection{Our Techniques}

To construct RHIBE schemes with shorter keys from multilinear maps, we
basically follow the design technique of Park, Lee, and Lee \cite{ParkLL15}
that uses the BGW-PKBE scheme \cite{BonehGW05} instead of the tree-based
revocation system of Naor et al. \cite{NaorNL01}. However, the naive
employment of this technique does not work since the delegation of private
keys should be considered. To solve this problem, we devise new techniques
for RHIBE in multilinear maps. We briefly review the RIBE scheme of Park,
Lee, and Lee \cite{ParkLL15} and then overview our two-level RHIBE scheme for
a simple exposition.

If we simply follow the design strategy of Park et al. \cite{ParkLL15}, a
trusted center which has a master key $\beta_\epsilon, \gamma_\epsilon$
creates a private key for a $1$-level identity $ID_1 = (I_1)$ as %
    $SK_{ID_1} = \big( g_1^{\alpha^{d_1} \gamma_{\epsilon}}
    F_{1,1} (I_1)^{-r_{1,1}}, g_1^{-r_{1,1}} \big)$
and broadcasts a $0$-level update key for time $T$ and a revoked set
$R_{\epsilon}$ as %
    $UK_{T,R_{\epsilon}} = \big( g_1^{\beta_{\epsilon}},
    ( g_1^{\gamma_{\epsilon}} \prod_{j \in SI_{\epsilon}} g_1^{\alpha^{N+1-j}}
    )^{\beta_{\epsilon}} H_1(T)^{r_2}, g_1^{-r_2} \big)$
where $d_i$ is an index assigned to $ID_1$ and $SI_{\epsilon}$ is the set of
receiver indexes. Note that $SK_{ID_1}$ is tied to the private key of PKBE
and $UK_{T,R_\epsilon}$ is tied to the ciphertext header of PKBE. After that,
the $1$-level user of the identity $ID_1$ can delegate his private key to
a $2$-level user with an identity $ID_2 = (I_1, I_2)$ by creating a 2-level
private key as %
    $SK_{ID_2} = \big(
    \{ g_1^{\alpha^{d_1} \gamma_{\epsilon}} F_{1,1} (I_1)^{-r_{1,1}},
       g_1^{-r_{1,1}} \},~
    \{ g_1^{\alpha^{d_2} \gamma_{ID_1}} F_{1,2} (I_2)^{-r_{1,2}}, \lb
       g_1^{-r_{1,2}} \} \big)$.
Next, the $1$-level user broadcasts a $1$-level update key for time $T$ and a
revoked set $R_{ID_1}$ as %
    $UK_{T,R_{ID_1}} = \big(
    \{ ( g_1^{\gamma_{\epsilon}} \prod_{j \in SI_{\epsilon}} g_1^{\alpha^{N+1-j}}
       )^{\beta_{\epsilon}} H_1(T)^{r_{2,1}}, g_1^{-r_{2,1}} \},~ \lb
    \{ ( g_1^{\gamma_{ID_1}} \prod_{j \in SI_{ID_1}} g_1^{\alpha^{N+1-j}}
       )^{\beta_{ID_1}} H_1(T)^{r_{2,2}}, g_1^{-r_{2,2}} \} \big)$.
If $(I_1) \not\in R_{\epsilon}$ and $(I_1,I_2) \not\in R_{ID_1}$, then the
$2$-level user of the identity $ID_2 = (I_1, I_2)$ can derive a decryption key %
    $DK_{ID_2,T} = \big( g_2^{\alpha^{N+1} (\beta_{\epsilon} + \beta_{ID_1})}
    \prod_{i=1}^2 F_{2,i} (I_i)^{r_{1,i}} H_1(T)^{r_2}, g_2^{r_{1,1}},
    g_2^{r_{1,2}}, \lb g_2^{r_2} \big)$
by performing paring operations.

However, there are some problems in the above approach. The first problem is
that the $2$-level user can extract the private key of the $1$-level user
from his private key since $SK_{ID_1}$ is contained in $SK_{ID_2}$. The
second problem is that the master key part $g_1^{ \alpha^{N+1}
(\beta_{\epsilon} + \beta_{ID_1}) }$ of the decryption key $DK_{ID_2,T}$ is
wrongly structured since a random value $\beta_{ID_1}$ that is hidden to a
sender is used. To overcome these problems, we devise a new {\it random
blinding technique} for RHIBE that safely blinds a private key in delegation
and cancels unnecessary random values in decryption. In this technique, the
private key component $g_1^{\alpha^{d} \gamma_\epsilon} F_{1,1}
(I_1)^{-r_{1,1}}$ of $SK_{ID_1}$ is multiplied by a random element
$g_1^{-r_{0,2}}$ and a new element $g_2^{\alpha^{N+1} \beta_{ID_1}}
g_2^{\beta_\epsilon r_{0,2}}$ is included in the private key delegation
process. This newly added element enables to cancel the random values
$r_{0,2}$ and $\beta_{ID_1}$ in the decryption key derivation process. Note
that a $2$-level user who is not revoked in $R_{ID_1}$ only can derive a
correct decryption key which has a master key $g_2^{\alpha^{N+1}
\beta_\epsilon}$ by cancelling the random values. Furthermore, if we encode
the identity of a user carefully, we can reduce the size of private key
elements from $O(\ell^2)$ to $O(\ell)$ where $\ell$ is the hierarchical depth
of the identity. Therefore, we can build an RHIBE scheme via the
history-preserving updates \cite{SeoE13e} in which a private key and an
update key include all private keys and update keys of lower level users from
3-leveled multilinear maps.

To achieve an RHIBE scheme with a constant number of private key elements, we
apply the history-free updates approach of Seo and Emura \cite{SeoE15}. Let
$SK_{ID_1}$ be the $1$-level private key for $ID_1$, $UK_{T,R_\epsilon}$ be
the $0$-level update key, and $DK_{ID_1,T}$ be the $1$-level decryption key
as the same as our first RHIBE scheme. By following this approach, the
$1$-level user with an identity $ID_1 = (I_1)$ simply creates a $2$-level
private key for $ID_2 = (I_1, I_2)$ as %
    $SK_{ID_2} = ( g_1^{\alpha^{d_2} \gamma_{ID_1}} F_{1,2}(I_2)^{-r_{1,2}},
    g_1^{-r_{1,2}} )$.
Next, he creates a $1$-level update key $UK_{T,R_{ID_1}}$ by using his
decryption key $DK_{ID_1,T}$ instead of using the $0$-level update key
$UK_{T,R_\epsilon}$. In this step, we use the random blinding technique to
prevent a collusion attack. That is, the decryption key component
$g_2^{\alpha^{N+1} \beta_\epsilon} F_{2,1}(I_1)^{-r_{1,1}} H_2(T)^{r_0}$ is
safely blinded by a random element $g_2^{-\alpha^{N+1} \beta_{ID_1}}$. Then,
the 1-level update key is formed as %
    $UK_{T,R_{ID_1}} = \big(
    \{ g_2^{\alpha^{N+1} \beta_\epsilon}  g_2^{-\alpha^{N+1} \beta_{ID_1}}
    F_{1,1}(I_1)^{-r_{1,1}} H_2(T)^{r_0}, g_2^{r_1}, g_2^{r_0} \},~
    \{ g_1^{\beta_{ID_1}}, ( g_1^{\gamma_{ID_1}} \prod_{j \in SI_{ID_1}}
    g_1^{\alpha^{N+1-j}} )^{\beta_{ID_1}} \lb H_1(T)^{r_2}, g_1^{-r_2} \} \big)$.
Note that this random blinding element is removed in the decryption key
derivation process. Therefore, we have an RHIBE scheme with shorter private
keys.

\subsection{Related Work}

The concept of IBE was introduced by Shamir to solve the certificate
management problem in PKE \cite{Shamir84}. After the first realization of an
IBE scheme in bilinear maps by Boneh and Franklin \cite{BonehF01}, many IBE
schemes in bilinear maps were proposed \cite{BonehB04e,Waters05,Gentry06,
Waters09}. As mentioned before, providing an efficient revocation mechanism
for IBE is a very important issue. Boneh and Franklin introduced a simple
revocation method for IBE by concatenating an identity $ID$ with time $T$
\cite{BonehF01}. However, this method is not scalable since a trusted center
periodically generates a private key for each user on each time period. The
first scalable RIBE scheme was proposed by Boldyreva, Goyal, and Kumar
\cite{BoldyrevaGK08} by combining the Fuzzy IBE scheme of Sahai and Waters
\cite{SahaiW05} and the complete subtree (CS) scheme of Naor et al.
\cite{NaorNL01}. Many other RIBE schemes also followed this design technique
\cite{LibertV09,SeoE13r}. A different RIBE scheme that uses the subset
difference (SD) scheme instead of using the CS scheme proposed by Lee et al.
\cite{LeeLP14}. Recently, Park, Lee, and Lee proposed a new RIBE scheme from
multilinear maps that has a constant number of private key elements and
update key elements \cite{ParkLL15}.

As mentioned before, the notion of IBE can be extended to HIBE where a
trusted center can delegate the generation of private keys to other users.
After the introduction of HIBE \cite{HorwitzL02}, many HIBE scheme with
different properties were proposed \cite{GentryS02,BonehB04e,BonehBG05,
BoyenW06,GentryH09,Waters09,LewkoW11u,LeePL15}. The first RHIBE scheme was
presented by Seo and Emura \cite{SeoE13e} that combines the BB-HIBE scheme
and the CS scheme. To reduce the size of private keys in the first RHIBE
scheme, Seo and Emura proposed another RHIBE scheme via history-free update
method that combines the BBG-HIBE scheme and the CS (or SD) scheme
\cite{SeoE15}. In previous RHIBE schemes, the maximum hierarchy depth should
be fixed in the setup phase. To overcome this limitation, Ryu et al. proposed
an unbounded RHIBE scheme by using an unbounded HIBE scheme \cite{RyuLPL15}.
Recently, Lee and Park proposed an RHIBE scheme with shorter private keys and
update keys \cite{LeeP16}. To reduce the size of private keys and update
keys, they first presented a new HIBE scheme that supports a short
intermediate private key and build an RHIBE scheme in a modular way.

\section{Preliminaries}

In this section, we review multilinear maps and complexity assumptions in
multilinear maps.

\subsection{Notation}

Let $\lambda$ be a security parameter and $[n]$ be the set $\{1, \ldots, n\}$
for $n \in \Z$. Let $\mc{I}$ be the identity space. A hierarchical identity
$ID$ with a depth $k$ is defined as an identity vector $ID = (I_1, \ldots,
I_k) \in \mc{I}^k$. We let $ID|_j$ be a vector $(I_1, \ldots, I_j)$ of size
$j$ derived from $ID$. If $ID = (I_1, \ldots, I_k)$, then we have $ID =
ID|_k$. We define $ID|_0 = \epsilon$ for simplicity.
A function $\tb{Prefix} (ID|_k)$ returns a set of prefix vectors $\{ ID|_j
\}$ where $1 \leq j \leq k$ where $ID|_k = (I_1, \ldots, I_k) \in \mc{I}^k$
for some $k$. For two hierarchical identities $ID|_i$ and $ID|_j$ with $i <
j$, $ID|_i$ is an ancestor identity of $ID|_j$ and $ID|_j$ is a descendant
identity of $ID|_i$ if $ID|_i \in \tb{Prefix}(ID|_j)$.

\subsection{Leveled Multilinear Maps} \label{sec:ml-maps}

We define generic leveled multilinear maps that are the leveled version of
the cryptographic multilinear maps introduced by Boneh and Silverberg
\cite{BonehS03}. We follow the definition of Garg, Gentry, and Halevi
\cite{GargGH13}.

\begin{definition}[Leveled Multilinear Maps]
We assume the existence of a group generator ${G}$, which takes as input a
security parameter $\lambda$ and a positive integer $k$. Let $\vec{\G} =
(\G_1, \ldots, \G_k)$ be a sequence of groups of large prime order $p >
2^{\lambda}$. In addition, we let $g_i$ be a canonical generator of $\G_i$
respectively. We assume the existence of a set of bilinear maps $\{e_{i,j} :
\G_i \times \G_j \rightarrow \G_{i+j} | i,j \geq 1; i+j \leq k \}$ that have
the following properties:
\begin{itemize}
\item Bilinearity: The map $e_{i,j}$ satisfies the following relation:
    $e_{i,j} (g_i^a, g_j^b) = g_{i+j}^{ab} : \forall a,b \in \Z_p$

\item Non-degeneracy: We have that $e_{i,j} (g_i, g_j) = g_{i+j}$ for each
    valid $i, j$.
\end{itemize}
We say that $\vec{\G}$ is a multilinear group if the group operations in
$\vec{\G}$ as well as all bilinear maps are efficiently computable. We often
omit the subscripts of $e_{i,j}$ and just write $e$.
\end{definition}

\subsection{Complexity Assumptions} \label{sec:comp-assump}

We introduce complexity assumptions in multilinear maps. This assumption is
the multilinear version of the Bilinear Diffie-Hellman Exponent (BDHE)
assumption of Boneh, Gentry, and Waters \cite{BonehGW05}.

\begin{assumption}[Multilinear Diffie-Hellman Exponent, $(k,N)$-MDHE]
Let $(p, \vec{\G}, \{ e_{i,j} | i,j \geq 1; i+j \leq k \})$ be the
description of a $k$-leveled multilinear group of order $p$. Let $g_i$ be a
generator of $\G_i$. The decisional $(k,N)$-MDHE assumption is that if the
challenge tuple
    $$D = \big( g_1, g_1^{a}, g_1^{a^2}, \ldots, g_1^{a^N},
    g_1^{a^{N+2}}, \ldots, g_1^{a^{2N}}, g_1^{c_1}, \ldots, g_1^{c_{k-1}} \big)
    \mbox{ and } Z$$
are given, no PPT algorithm $\mc{A}$ can distinguish $Z = Z_0 = g_k^{ a^{N+1}
\prod_{i=1}^{k-1} c_i}$ from a random element $Z = Z_1 \in \G_k$ with more
than a negligible advantage. The advantage of $\mc{A}$ is defined as
    $\Adv_{\mc{A}}^{(k,N)\text{-}MDHE} (1^\lambda) =
    \big| \Pr[\mc{A}(D,Z_0) = 0] - \Pr[\mc{A}(D,Z_1) = 0] \big|$
where the probability is taken over random choices of $a, c_1, \ldots,
c_{k-1} \in \Z_p$.
\end{assumption}

\begin{assumption}[Three-Leveled Multilinear Diffie-Hellman Exponent,
$(3,N)$-MDHE] Let $(p, \vec{\G}, e_{1,1}, \lb e_{1,2}, e_{2,1})$ be the
description of a three-leveled multilinear group of order $p$. Let $g_i$ be a
generator of $\G_i$. The decisional $(3,N)$-MDHE assumption is that if the
challenge tuple
    $$D = \big( g_1, g_1^{a}, g_1^{a^2}, \ldots, g_1^{a^N},
    g_1^{a^{N+2}}, \ldots, g_1^{a^{2N}}, g_1^b, g_1^c \big)
    \mbox{ and } Z$$
are given, no PPT algorithm $\mc{A}$ can distinguish $Z = Z_0 = g_3^{a^{N+1}
bc}$ from a random element $Z = Z_1 \in \G_3$ with more than a negligible
advantage. The advantage of $\mc{A}$ is defined as
    $\Adv_{\mc{A}}^{(3,N)\text{-}MDHE} (1^\lambda) =
    \big| \Pr[\mc{A}(D,Z_0) = 0] - \Pr[\mc{A}(D,Z_1) = 0] \big|$
where the probability is taken over random choices of $a, b, c \in \Z_p$.
\end{assumption}

\section{Revocable HIBE with History-Preserving Updates} \label{sec:rhibe-hpu}

In this section, we first define the syntax and the security of RHIBE. Next,
we propose an RHIBE scheme with history-preserving updates from three-leveled
multilinear maps and prove its selective security.

\subsection{Definition} \label{sec:rhibe-hpu-def}

Revocable HIBE (RHIBE) is an extension of IBE that provides both the
delegation of private keys and the revocation of private keys \cite{SeoE13e}.
In RHIBE, a user who has a private key $SK_{ID}$ for a hierarchical identity
$ID = (I_1, \ldots, I_{\ell-1})$ can generate a (long-term) private key
$SK_{ID'}$ for a child user with the identity $ID' = (I_1, \ldots, I_\ell)$
where $ID \in \tb{Prefix}(ID')$. The user with $ID$ also periodically
broadcasts an update key $UK_{T,R_{ID}}$ for the set $R_{ID}$ of revoked
child users at a time period $T$. If the child user with $ID'$ who has a
private key $SK_{ID'}$ is not included in the revoked set $R_{ID}$, then he
can derive a (short-term) decryption key $DK_{ID',T}$ from $SK_{ID'}$ and
$UK_{T,R_{ID'}}$. A sender creates a ciphertext $CT_{ID',T}$ that encrypts a
message $M$ for a receiver identity $ID'$ and a time period $T$, and then the
receiver who has a decryption key $DK_{ID',T}$ can obtain the message by
decrypting the ciphertext $CT_{ID',T}$. The syntax of RHIBE is defined as
follows:

\begin{definition}[Revocable HIBE] \label{def:rhibe-hpu-syntax}
A revocable HIBE (RHIBE) scheme that is associated with the identity space
$\mc{I}$, the time space $\mc{T}$, and the message space $\mc{M}$, consists
of seven algorithms \tb{Setup}, \tb{GenKey}, \tb{UpdateKey}, \tb{DeriveKey},
\tb{Encrypt}, \tb{Decrypt}, and \tb{Revoke}, which are defined as follows:
\begin{description}
\item \tb{Setup}($1^\lambda, N, L$): The setup algorithm takes as input a
    security parameter $1^{\lambda}$, the maximum number $N$ of users in
    each depth, and the maximum depth $L$ of the identity. It outputs a
    master key $MK$, a revocation list $RL_{\epsilon}$, a state
    $ST_{\epsilon}$, and public parameters $PP$.

\item \tb{GenKey}($ID|_{\ell}, SK_{ID|_{\ell-1}}, ST_{ID|_{\ell-1}}, PP$):
    The private key generation algorithm takes as input a hierarchical
    identity $ID|_{\ell} = (I_1, \ldots, I_{\ell}) \in \mc{I}^{\ell}$, a
    private key $SK_{ID|_{\ell-1}}$, the state $ST_{ID|_{\ell-1}}$, and
    public parameters $PP$. It outputs a private key $SK_{ID|_{\ell}}$ for
    $ID|_{\ell}$ and updates $ST_{ID|_{\ell-1}}$.

\item \tb{UpdateKey}($T, RL_{ID|_{\ell-1}}, UK_{T,R_{ID|_{\ell-2}}},
    ST_{ID|_{\ell-1}}, PP$): The update key generation algorithm takes as
    input update time $T \in \mc{T}$, a revocation list
    $RL_{ID|_{\ell-1}}$, an update key $UK_{T,R_{ID|_{\ell-2}}}$, the state
    $ST_{ID|_{\ell-1}}$, and the public parameters $PP$. It outputs an
    update key $UK_{T, R_{ID|_{\ell-1}}}$ for $T$ and $R_{ID|_{\ell-1}}$
    where $R_{ID|_{\ell-1}}$ is the set of revoked identities at the time
    $T$.

\item \tb{DeriveKey}($SK_{ID|_{\ell}}, UK_{T,R_{ID|_{\ell-1}}}, PP$): The
    decryption key derivation algorithm takes as input a private key
    $SK_{ID|_{\ell}}$, an update key $UK_{T,R_{ID|_{\ell-1}}}$, and the
    public parameters $PP$. It outputs a decryption key $DK_{ID|_{\ell},T}$
    or $\perp$.

\item \tb{Encrypt}($ID|_{\ell}, T, M, PP$): The encryption algorithm takes
    as input a hierarchical identity $ID|_{\ell} = (I_1, \ldots, I_{\ell})
    \in \mc{I}$, time $T$, a message $M \in \mc{M}$, and the public
    parameters $PP$. It outputs a ciphertext $CT_{ID|_{\ell},T}$ for
    $ID|_{\ell}$ and $T$.

\item \tb{Decrypt}($CT_{ID|_{\ell},T}, DK_{ID'|_{\ell},T'}, PP$): The
    decryption algorithm takes as input a ciphertext $CT_{ID|_{\ell},T}$, a
    decryption key $DK_{ID'|_{\ell},T'}$, and the public parameters $PP$.
    It outputs an encrypted message $M$ or $\perp$.

\item \tb{Revoke}($ID|_{\ell}, T, RL_{ID|_{\ell-1}}, ST_{ID|_{\ell-1}}$):
    The revocation algorithm takes as input a hierarchical identity
    $ID|_{\ell}$ and revocation time $T$, a revocation list
    $RL_{ID|_{\ell-1}}$, and a state $ST_{ID|_{\ell-1}}$. It updates the
    revocation list $RL_{ID|_{\ell-1}}$.
\end{description}
The correctness property of RHIBE is defined as follows: For all $PP$
generated by $\tb{Setup} (1^{\lambda}, N, L)$, $SK_{ID|_{\ell}}$ generated by
$\tb{GenKey} (ID|_{\ell}, SK_{ID|_{\ell-1}}, ST_{ID|_{\ell-1}}, PP)$ for any
$ID|_{\ell}$, $UK_{T,R_{ID|_{\ell-1}}}$ generated by $\tb{UpdateKey} (T,
RL_{ID|_{\ell-1}}, \lb UK_{T, R_{ID|_{\ell-2}}}, ST_{ID|_{\ell-1}}, PP)$ for
any $T$ and $RL_{ID|_{\ell-1}}$, $CT_{ID|_{\ell}, T}$ generated by
$\tb{Encrypt} (ID|_{\ell}, T, \lb M, PP)$ for any $ID|_{\ell}$, $T$, and $M$,
it is required that
\begin{itemize}
\item If $(ID|_{\ell} \notin R_{ID|_{\ell-1}})$, then $\tb{DeriveKey}
    (SK_{ID|_{\ell}}, UK_{T, R_{ID|_{\ell-1}}}, PP) = DK_{ID|_{\ell},T}$.

\item If $(ID|_{\ell} \in R_{ID|_{\ell-1}})$, then $\tb{DeriveKey}
    (SK_{ID|_{\ell}}, UK_{T, R_{ID|_{\ell-1}}}, PP) = \perp$ with all but
    negligible probability.

\item If $(ID|_{\ell} = ID'|_{\ell}) \wedge (T = T')$, then $\tb{Decrypt}
    (CT_{ID|_{\ell},T}, DK_{ID'|_{\ell},T'}, PP) = M$.

\item If $(ID|_{\ell} \neq ID'|_{\ell}) \vee (T \neq T')$, then
    $\tb{Decrypt} (CT_{ID|_{\ell},T}, DK_{ID'|_{\ell},T'}, PP) = \perp$
    with all but negligible probability.
\end{itemize}
\end{definition}

The security model of RHIBE with history-preserving updates was defined by
Seo and Emura \cite{SeoE13e}. For the security proof our RHIBE scheme, we
define a selective revocation list model where an adversary initially submits
the set of revoked identities. This weaker model was introduced by Boldyreva,
et al. \cite{BoldyrevaGK08} and used in other schemes \cite{LeeLP14,ParkLL15,
SeoE15}. In this paper, we follow the selective revocation list model refined
by Seo and Emura \cite{SeoE15}.
In this security model of RHIBE, an adversary initially submits a challenge
identity $ID^*|_{\ell^*}$, challenge time $T^*$, and a revoked identity set
$R^*$ at the time $T^*$. After receiving public parameters $PP$, the
adversary can adaptively request private keys, update keys, decryption keys,
and revocations with some restrictions. In the challenge step, the adversary
submits challenge messages $M_0^*, M_1^*$ and the challenger creates a
challenge ciphertext $CT^*$ that encrypts one of the challenge messages. The
adversary continually requests private keys, update keys, and decryption
keys. Finally, if the adversary correctly guesses the encrypted message, then
he wins the game. The details of the security model is described as follows:

\begin{definition}[Selective Revocation List Security, SRL-IND]
\label{def:rhibe-hpu-srlind} %
The SRL-IND security of RHIBE is defined in terms of the following experiment
between a challenger $\mc{C}$ and a PPT adversary $\mc{A}$:
\begin{enumerate}
\item \tb{Init}: $\mc{A}$ initially submits a challenge identity
    $ID^*|_{\ell^*} \in \mc{I}^{\ell^*}$, challenge time $T^* \in \mc{T}$,
    and a revoked identity set $R^* \subseteq \mc{I}^{\ell}$ at the time
    $T^*$.

\item \tb{Setup}: $\mc{C}$ generates a master key $MK$, a revocation list
    $RL_{\epsilon}$, a state $ST_{\epsilon}$, and public parameters $PP$ by
    running $\tb{Setup} (1^{\lambda}, N, L)$. It keeps $MK, RL_{\epsilon},
    ST_{\epsilon}$ to itself and gives $PP$ to $\mc{A}$.

\item \tb{Phase 1}: $\mc{A}$ adaptively requests a polynomial number of
    queries. These queries are processed as follows:
    \begin{itemize}
    \item \tb{Private key}: If this is a private key query for a
        hierarchical identity $ID|_\ell$, then it gives the private key
        $SK_{ID|_\ell}$ to $\mc{A}$ by running $\tb{GenKey} (ID|_{\ell},
        SK_{ID|_{\ell-1}}, ST_{ID|_{\ell-1}}, PP)$ with the restriction:
        If $ID|_{\ell}$ is a prefix of $ID^*|_{\ell^*}$ where $\ell \leq
        \ell^*$, then the revocation query for $ID|_{\ell}$ or one of its
        ancestors must be queried at some time $T$ where $T \leq T^*$.

    \item \tb{Update key}: If this is an update key query for a
        hierarchical identity $ID|_{\ell-1}$ and time $T$, then it gives
        the update key $UK_{T,R_{ID|_{\ell-1}}}$ to $\mc{A}$ by running
        $\tb{UpdateKey} (T, RL_{ID|_{\ell-1}}, UK_{T,R_{ID|_{\ell-2}}},
        ST_{ID|_{\ell-1}}, PP)$ with the restriction: If $T = T^*$, then
        the revoked identity set of $RL_{ID|_{\ell-1}}$ at the time $T^*$
        should be equal to a revoked identity set derived from $R^*$.

    \item \tb{Decryption key}: If this is a decryption key query for a
        hierarchical identity $ID|_{\ell}$ and time $T$, then it gives
        the decryption key $DK_{ID|_\ell,T}$ to $\mc{A}$ by running
        $\tb{DeriveKey} (SK_{ID|_{\ell}}, UK_{T,R_{ID|_{\ell-1}}}, PP)$
        with the restriction: The decryption key query for
        $ID^*|_{\ell^*}$ and $T^*$ cannot be queried.

    \item \tb{Revocation}: If this is a revocation query for a
        hierarchical identity $ID|_{\ell}$ and revocation time $T$, then
        it updates the revocation list $RL_{ID|_{\ell-1}}$ by running
        $\tb{Revoke} (ID|_{\ell}, T, RL_{ID|_{\ell-1}},
        ST_{ID|_{\ell-1}})$ with the restriction: The revocation query
        for time $T$ cannot be queried if the update key query for the
        time $T$ was already requested.
    \end{itemize}
Note that $\mc{A}$ is allowed to request the update key query and the
revocation query in non-decreasing order of time, and an update key $UK_{T,
R_{ID|_{\ell-1}}}$ implicitly includes a revoked identity set
$R_{ID|_{\ell-1}}$ derived from $RL_{ID|_{\ell-1}}$.

\item \tb{Challenge}: $\mc{A}$ submits two challenge messages $M_0^*, M_1^*
    \in \mc{M}$ with equal length. $\mc{C}$ flips a random coin $b \in
    \bits$ and gives the challenge ciphertext $CT^*$ to $\mc{A}$ by running
    $\tb{Encrypt} (ID^*|_{\ell^*}, T^*, M_b^*, PP)$.

\item \tb{Phase 2}: $\mc{A}$ may continue to request a polynomial number of
    private keys, update keys, and decryption keys subject to the same
    restrictions as before.

\item \tb{Guess}: Finally, $\mc{A}$ outputs a guess $b' \in \bits$, and
    wins the game if $b = b'$.
\end{enumerate}
The advantage of $\mc{A}$ is defined as $\Adv_{RHIBE, \mc{A}
}^{SRL\text{-}IND} (1^\lambda) = \big| \Pr [b = b'] - \frac{1}{2} \big|$
where the probability is taken over all the randomness of the experiment. An
RHIBE scheme is SRL-IND secure if for all PPT adversary $\mc{A}$, the
advantage of $\mc{A}$ in the above experiment is negligible in the security
parameter $\lambda$.
\end{definition}

\subsection{Building Blocks}

Let $\mc{I} = \bits^{l_1}$ be the identity space and $\mc{T} = \bits^{l_2}$
be the time space where $l_1 = 2\lambda$ and $l_2 = \lambda$ for a security
parameter $\lambda$. Let $ID = (I_1, \ldots, I_k)$ be an hierarchical
identity. We define $\tb{EncodeCID}(ID)$ as a function that takes as input
$ID = (I_1, \ldots, I_k)$ and outputs a concatenated identity $CID = (CI_1,
\ldots, CI_k)$ where $CI_j = H(I_1 \| \cdots \| I_j)$, $\|$ denotes the
concatenation of two strings, and $H$ is a collision-resistant hash function.
This encoding function has an interesting property such that if $ID \not\in
\tb{Prefix}(ID')$ then $CI_k \neq CI'_i$ for all $i \in [k]$ except with
negligible probability \cite{LeeP16}.

We describe a modified HIBE scheme of Boneh and Boyen \cite{BonehB04e} that
additionally takes a time period in multilinear groups. Note that we define
the key-encapsulation mechanism (KEM) version of HIBE.

\begin{description}
\item [\tb{HIBE.Setup}($GDS_{MLM}, L$):] Let $GDS_{MLM} = (p, \vec{\G} =
    (\G_1, \G_2, \G_3), \{ e_{i,j} \}, g_1, g_2, g_3)$ be the description
    of a multilinear group and $L$ be the maximum depth of the hierarchical
    identity.
    \begin{enumerate}
    \item It first selects random elements $\{ f_{1,i,0} \}_{1 \leq i
        \leq L}, \{ f_{1,i,j,b} \}_{1 \leq i \leq L, 1 \leq j \leq l_1, b
        \in \bits} \in \G_1$. It also selects random $h_{1,0}, \{
        h_{1,j,b} \}_{1 \leq j \leq l_2, b \in \bits} \in \G_1$.
        Let $\vec{f}_{k,i} = \big( f_{k,i,0}, \{ f_{k,i,j,b} \} \big)$
        and $\vec{h}_k = \big( h_{k,0}, \{ h_{k,j,b} \} \big)$ for a
        multi-linear level $k$. Note that $\{ \vec{f}_{2,i} \}_{1 \leq i
        \leq L}$ and $\vec{h}_2$ can be obtained from $\{ \vec{f}_{1,i}
        \}_{1 \leq i \leq L}$ and $\vec{h}_1$ by performing pairing
        operations.

    \item Next, it defines $F_{k,i}(CI) = f_{k,i,0} \prod_{j=1}^{l_1}
        f_{k,i,j,CI[j]}$ and $H_k(T) = h_{k,0} \prod_{j=1}^{l_2}
        h_{k,i,j,T[j]}$ where $CI[j]$ is a bit value at the position $j$
        and $T[j]$ is a bit value at the position $j$.

    \item It selects a random exponent $\alpha \in \Z_p$ and outputs a
        master key $MK = \alpha$ and public parameters
        \begin{align*}
        PP = \Big(
            GDS_{MLM},~ \{ \vec{f}_{1,i} \}_{1 \leq i \leq L},~ \vec{h}_1,~
            \Lambda = g_3^{\alpha}
        \Big).
        \end{align*}
    \end{enumerate}

\item [\tb{HIBE.GenKey}($ID|_{\ell}, T, MK, PP$):] Let $ID|_{\ell} = (I_1,
    \ldots, I_{\ell}) \in \mc{I}^{\ell}$, $T \in \mc{T}$, and $MK =
    \alpha$. It obtains $CID|_\ell = (CI_1, \ldots, CI_\ell)$ by calling
    $\tb{EncodeCID}(ID|_\ell)$. It selects random exponents $r_1, \ldots,
    r_{\ell}, r_{L+1} \in \Z_p$ and outputs a private key
    \begin{align*}
    SK_{ID|_{\ell},T} = \Big(
        D_0 = g_2^{\alpha} \prod_{i=1}^\ell
              F_{2,i}(CI_i)^{r_i} \cdot H_2(T)^{r_{L+1}},~
        \big\{ D_i = g_2^{-r_i} \big\}_{i=1}^\ell,~
        D_{L+1} = g_2^{-r_{L+1}}
    \Big) \in \G_2^{\ell+2}.
    \end{align*}

\item [\tb{HIBE.RandKey}($SK_{ID|_{\ell},T}, PP$):] Let $SK_{ID|_{\ell},T}
    = (D'_0, \ldots, D'_\ell, D'_{L+1})$ and $ID = (I_1, \ldots,
    I_{\ell})$. It selects random exponents $r_1, \ldots, r_\ell, r_{L+1}
    \in \Z_p$ and outputs a randomized private key %
    $SK_{ID|_{\ell},T} = \big( D_0 = D'_0 \cdot \prod_{i=1}^\ell
        F_{2,i}(CI_1)^{r_i} \cdot H_2(T)^{r_{L+1}},~ \big\{ D_i = D'_i
        \cdot g_2^{-r_i}\big\}_{i=1}^\ell,~ D_{L+1} = D'_{L+1} \cdot
        g_2^{-r_{L+1}} \big)$.

\item [\tb{HIBE.Delegate}($ID|_{\ell}, SK_{ID|_{\ell-1},T}, PP$):] Let
    $SK_{ID|_{\ell-1}} = (D'_0, \ldots, D'_{\ell})$ and $ID|_\ell = (I_1,
    \ldots, I_{\ell})$. It obtains $CID|_\ell = (CI_1, \ldots, CI_\ell)$ by
    calling $\tb{EncodeCID}(ID|_\ell)$. It selects random a exponent
    $r_{\ell} \in \Z_p$ and creates a temporal private key %
    $TSK_{ID|_{\ell},T} = \big( D_0 = D'_0 \cdot F_{2,\ell}
        (CI_{\ell})^{r_\ell},~ \big\{ D_i = D'_i \big\}_{i=1}^{\ell-1},~
        D_{\ell} = g_2^{-r_{\ell}},~ D_{L+1} = D'_{L+1} \big)$. %
    Next, it outputs a delegated private key $SK_{ID|_{\ell},T}$ by running
    $\tb{HIBE.RandKey} (TSK_{ID|_{\ell},T}, PP)$.

\item [\tb{HIBE.Encrypt}($ID|_{\ell}, T, s, PP$):] Let $ID|_{\ell} = (I_1,
    \ldots, I_{\ell})$ and $s$ is a random exponent in $\Z_p$. It obtains
    $CID|_\ell = (CI_1, \ldots, CI_\ell)$ by calling
    $\tb{EncodeCID}(ID|_\ell)$. It outputs a ciphertext header %
    \begin{align*}
    CH_{ID|_\ell,T} = \Big(
        C_0 = g_1^s,~
        \big\{ C_i = F_{1,i}(CI_i)^s \big\}_{i=1}^{\ell},~
        C_{L+1} = H_1(T)^s
    \Big) \in \G_1^{\ell+2}
    \end{align*}
    and a session key $EK = \Lambda^s$.

\item [\tb{HIBE.Decrypt}($CH_{ID|_{\ell}, T}, SK_{ID'|_{\ell}, T'}, PP$):]
    Let $CH_{ID|_\ell,T} = (C_0, \{ C_i \}_{i=1}^\ell, C_{L+1})$ and
    $SK_{ID'|_\ell,T'} = (D_0, \{ D_i \}_{i=1}^\ell, D_{L+1})$. If
    $(ID|_\ell = ID'|_\ell) \wedge (T = T')$, then it outputs the session
    key $EK$ by computing $e_{1,2} (C_0, D_0) \cdot \prod_{i=1}^{\ell}
    e_{1,2} (C_i, D_i) \cdot e_{1,2} (C_{L+1}, D_{L+1})$. Otherwise, it
    outputs $\perp$.
\end{description}

Let $\mc{N} = \{ 1, \ldots, N \}$ where $N$ is the (polynomial) number of
users. We describe the PKBE scheme of Boneh, Gentry, and Waters
\cite{BonehGW05}.

\begin{description}
\item [\tb{PKBE.Setup}($GDS_{BLM}, N$):] Let $GDS_{BLM} = ((p, \vec{\G} =
    (\G_1, \G_2), e_{1,1}), g_1, g_2)$ and $N$ be the maximum number of
    users. It selects random exponents $\alpha, \gamma \in \Z_p$ and
    outputs a master key $MK = (\alpha, \gamma)$, an element $Y =
    g_1^{\gamma}$,
    and public parameters %
    \begin{align*}
    PP = \Big( GDS_{BLM},~
        \big\{ X_j = g_1^{\alpha^j} \big\}_{1 \leq j, j \neq N+1 \leq 2N},~
        \Gamma = g_2^{\alpha^{N+1}}
    \Big).
    \end{align*}

\item [\tb{PKBE.GenKey}($d, MK, PP$):] Let $d \in \mc{N}$ be an index and
    $MK = (\alpha, \gamma)$. It outputs a private key %
    $SK_d = \big( K = g_1^{\alpha^{d} \gamma} \big).$

\item [\tb{PKBE.Encrypt}($S, \beta, Y, PP$):] Let $S$ be a set of receiver
    indexes, $\beta$ be a random exponent in $\Z_p$, and $Y = g_1^{\gamma}$
    be a group element in $\G_1$. It outputs a ciphertext header
    \begin{align*}
    CH_S = \Big(
        E_0 = g_1^{\beta},~
        E_1 = \big( Y \prod_{j \in S} X_{N+1-j} \big)^{\beta}
    \Big)
    \end{align*}
    and a session key $EK = \Gamma^{\beta}$.

\item [\tb{PKBE.Decrypt}($CH_S, SK_d, PP$):] Let $CH_S = (E_0, E_1)$ and
    $SK_d = K$. If $d \in S$, then it outputs the session key $EK$ by
    computing $e(X_d, E_1) \cdot e(E_0, K \cdot \prod_{j \in S, j \neq d}
    X_{N+1-j+d} )^{-1}$. Otherwise, it outputs $\perp$.
\end{description}

\begin{theorem}[\cite{BonehGW05}] \label{thm:bgw-pkbe-indcpa}
The BGW-PKBE scheme is selectively secure under chosen plaintext attacks if
the $N$-BDHE assumption holds.
\end{theorem}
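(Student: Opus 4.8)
The plan is to reprove this as the Boneh--Gentry--Waters broadcast-encryption theorem, via a tight reduction from the selective CPA security of the scheme (in its KEM formulation) to the $N$-BDHE assumption, which is exactly the $k=2$ instance of the $(k,N)$-MDHE assumption above. Starting from an adversary $\mc{A}$ that, after committing to a challenge receiver set $S^* \subseteq \mc{N}$ in the \tb{Init} phase, distinguishes the real session key from a uniform element of $\G_2$, I would build $\mc{B}$ that takes an $N$-BDHE instance $\big(g_1, g_1^{a}, \ldots, g_1^{a^N}, g_1^{a^{N+2}}, \ldots, g_1^{a^{2N}}, g_1^{c}\big)$ together with a challenge $Z \in \G_2$ and decides whether $Z = g_2^{a^{N+1}c}$ or $Z$ is uniform. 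The guiding idea is the standard one: implicitly set $\alpha = a$ and the encryption randomness $\beta = c$, so that the powers $X_j = g_1^{a^j}$ needed in $PP$ are precisely those handed out for $j \in [2N]\setminus\{N+1\}$, while the only power of $a$ that $\mc{B}$ must never materialise in $\G_1$ is $a^{N+1}$ --- and that is exactly where the target session key $\Gamma^{\beta} = g_2^{a^{N+1}c}$, i.e.\ $Z$, resides (in $\G_2$, where it is harmless).

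Concretely, for \tb{Setup} the reduction picks a uniform $u \in \Z_p$ and implicitly sets $\gamma = u - \sum_{j \in S^*} a^{N+1-j}$; then $Y = g_1^{\gamma} = g_1^{u}\prod_{j \in S^*} X_{N+1-j}^{-1}$ is computable because $j \in S^* \subseteq [N]$ gives $N+1-j \in [N]$, and $\Gamma = g_2^{\alpha^{N+1}}$ is recovered as $e_{1,1}(g_1^{a},g_1^{a^N})$; since $u$ is uniform, the simulated $\gamma$ (hence $Y$) has exactly the right distribution. For a \tb{key query} $SK_d = g_1^{\alpha^d\gamma}$ --- allowed only for $d \in \mc{N}\setminus S^*$ --- the reduction outputs $X_d^{\,u}\cdot\prod_{j\in S^*}X_{d+N+1-j}^{-1} = g_1^{a^d(u - \sum_{j\in S^*}a^{N+1-j})}$; all exponents $d+N+1-j$ here lie in $[2,2N]$ and are never equal to $N+1$ because $d\neq j$ (one index is outside $S^*$, the other inside), so every factor is among the given powers. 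For the \tb{Challenge}, $\mc{B}$ returns the header $\big(E_0 = g_1^{c},\ E_1 = (g_1^{c})^{u}\big)$ --- note $E_1 = \big(Y\prod_{j\in S^*}X_{N+1-j}\big)^{\beta} = g_1^{u\beta}$, the $S^*$-products cancelling the ones hidden in $Y$ --- and hands $\mc{A}$ the element $Z$ as the candidate session key; \tb{Phase 2} queries are answered as in \tb{Phase 1}, and at the end $\mc{B}$ simply echoes $\mc{A}$'s bit.

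When $Z = g_2^{a^{N+1}c}$ we have $Z = \Gamma^{\beta} = EK$, so $\mc{A}$'s view is exactly that of the real game; when $Z$ is uniform, the session key is independent of $\mc{A}$'s view. Because the scheme involves no hash functions, both simulations are statistically perfect, giving $\Adv_{\mc{B}}^{N\text{-}BDHE}(1^\lambda) = \Adv_{\text{PKBE},\mc{A}}^{\text{CPA}}(1^\lambda)$, which yields the theorem. The one genuinely load-bearing step --- and the part I expect to require care in the write-up --- is the index bookkeeping in the key-query simulation: one must verify that in $PP$, in every $SK_d$, and in the challenge header, each power of $a$ that $\mc{B}$ is forced to compute in $\G_1$ stays inside $[1,2N]\setminus\{N+1\}$, with the forbidden power $a^{N+1}$ surfacing only inside $\G_2$ as $Z$. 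This is precisely where the CPA restriction $S^*\cap\{d : SK_d\ \text{is queried}\}=\emptyset$ is used, and it is the crux of the reduction.
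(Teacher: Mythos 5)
Your reduction is correct and is precisely the original Boneh--Gentry--Waters argument: the paper states this theorem by citation to \cite{BonehGW05} and gives no proof of its own, and what you have written is the standard proof from that reference (implicitly setting $\alpha=a$, $\beta=c$, $\gamma=u-\sum_{j\in S^*}a^{N+1-j}$, with the index check that $N+1-j+d\neq N+1$ for queried $d\notin S^*$). The bookkeeping you flag as the crux is exactly right and the simulation is perfect, so there is nothing to add.
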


\subsection{Construction} \label{sec:rhibe-hpu-scheme}

By using the HIBE and PKBE schemes that are described in the previous
section, we can build an RHIBE scheme. Our RHIBE scheme with
history-preserving updates in multilinear maps is described as follows:

\begin{description}
\item [\tb{RHIBE.Setup}($1^\lambda, N, L$):] Let $N$ be the maximum number
    users in each depth and $L$ be the maximum depth of the hierarchical
    identity.
    \begin{enumerate}
    \item It first generates a multilinear group $\vec{\G} = (\G_1, \G_2,
        \G_3)$ of prime order $p$. Let $GDS_{MLM} = (p, \vec{\G}, \{
        e_{1,1}, e_{1,2}, e_{2,1} \}, g_1, g_2, g_3)$ be the description
        of the multilinear group where $g_1, g_2, g_3$ are generators of
        $\G_1, \G_2, \G_3$ respectively.

    \item It obtains $MK_{HIBE}, PP_{HIBE}$ by running $\tb{HIBE.Setup}
        (GDS_{MLM}, L)$. It also obtains $MK_{BE} = (\alpha, \gamma),
        PP_{BE}$ by running $\tb{PKBE.Setup}(GDS_{MLM}, N)$.

    \item It selects a random exponent $\beta_{\epsilon} \in \Z_p$ and
        saves $(\beta_{\epsilon}, \gamma_{\epsilon})$ to $ST_{\epsilon}$
        where $\beta_{\epsilon} = \beta_{ID_0}$ and $\gamma_{\epsilon} =
        \gamma_{ID_0} = \gamma$. It outputs a master key $MK = \alpha$,
        an empty revocation list $RL_{\epsilon}$, a state
        $ST_{\epsilon}$, and public parameters
        \begin{align*}
        PP &= \Big(
           GDS_{MLM},~ PP_{HIBE},~ PP_{BE},~
           g_2^{\alpha^{N+1}},~ g_2^{\beta_{\epsilon}},~
           \Omega = g_3^{\alpha^{N+1} \beta_{\epsilon}}
        \Big).
        \end{align*}
    \end{enumerate}

\item [\tb{RHIBE.GenKey}($ID|_{\ell}, SK_{ID|_{\ell-1}}, ST_{ID|_{\ell-1}},
    PP$):] Let $ID|_{\ell} = (I_1,\ldots, I_{\ell}) \in \mc{I}^{\ell}$ and
    $SK_{ID|_{\ell-1}} = \big( \{ d_i, LSK'_i \}_{i=1}^{\ell-1} \big)$
    where $LSK'_i = ( K'_{i,0}, K'_{i,1}, R'_{i,0}, R'_{i,1} )$ and $\ell
    \geq 1$. It obtains $CID|_\ell = (CI_1, \ldots, CI_\ell)$ by calling
    $\tb{EncodeCID}(ID|_\ell)$. Recall that $SK_{ID|_0}$ is empty.
    \begin{enumerate}
    \item If a tuple $(\beta_{ID|_{\ell-1}}, \gamma_{ID|_{\ell-1}})$
        exist in $ST_{ID|_{\ell-1}}$, then it retrieves
        $(\beta_{ID|_{\ell-1}}, \gamma_{ID|_{\ell-1}})$ from
        $ST_{ID|_{\ell-1}}$. Otherwise, it selects random exponents
        $\beta_{ID|_{\ell-1}}, \gamma_{ID|_{\ell-1}} \in \Z_p$ and saves
        $(\beta_{ID|_{\ell-1}}, \gamma_{ID|_{\ell-1}})$ to
        $ST_{ID|_{\ell-1}}$.

    \item If $\ell \geq 2$, then it selects a random exponent
        $r_{\ell-1,2} \in \Z_p$ and creates an updated level private key
        \begin{align*}
        LSK_{\ell-1} = \Big(
        &   K_{\ell-1,0} = K'_{\ell-1,0} \cdot g_1^{-r_{\ell-1,2}},~
            K_{\ell-1,1} = K'_{\ell-1,1},~ \\
        &   R_{\ell-1,0} = g_2^{ \beta_{ID|_{\ell-1}} },~
            R_{\ell-1,1} = \big( g_2^{\alpha^{N+1}} \big)^{\beta_{ID|_{\ell-1}}}
                     \cdot (g_2^{\beta_{ID|_{\ell-2}}})^{r_{\ell-1,2}}
        \Big)
        \end{align*}
        where $g_2^{\alpha^{N+1}}$ can be retrieved from $PP$ and
        $g_2^{\beta_{ID|_{\ell-2}}}$ can be retrieved from $LSK_{\ell-2}$
        or $PP$.

    \item It assigns a unique index $d_{\ell} \in \mc{N}$ to the identity
        $ID|_{\ell}$ and adds a tuple $(ID|_\ell, d_{\ell})$ to $ST_{ID|_
        {\ell-1}}$.
        It obtains $SK_{BE,d_{\ell}} = K_{BE} = g_1^{\alpha^{d_\ell}
        \gamma_{ID|_{\ell-1}}}$ by running $\tb{PKBE.GenKey} (d_{\ell},
        \gamma_{ID|_{\ell-1}}, PP_{BE})$. Next, it selects a random
        exponent $r_{\ell,1} \in \Z_p$ and creates a level private key
        \begin{align*}
        LSK_{\ell} = \Big(
            K_{\ell,0} = K_{BE} \cdot F_{1,\ell}(CI_{\ell})^{-r_{\ell,1}},~
            K_{\ell,1} = g_1^{-r_{\ell,1}},~
            R_{\ell,0} = 1_{\G_2},~
            R_{\ell,1} = 1_{\G_2}
        \Big) \in \G_1^2 \times \G_2^2.
        \end{align*}

    \item Finally, it outputs a private key $SK_{ID|_\ell} = \big( \big\{
        d_i, LSK_i = LSK'_i \big\}_{i=1}^{\ell-2}, \big\{ d_i, LSK_i
        \big\}_{i=\ell-1}^{\ell} \big)$.
    \end{enumerate}

\item [\tb{RHIBE.UpdateKey}($T, RL_{ID|_{\ell-1}}, UK_{T,
    R_{ID|_{\ell-2}}}, ST_{ID|_{\ell-1}}, PP$):] Let $UK_{T,
    R_{ID|_{\ell-2}}} = \big( \{ SI_{ID|_i}, LUK'_i \}_{i=0}^{\ell-2}
    \big)$ where $LUK_i = ( U_{i,0}, U_{i,1}, U_{i,2} )$ and $\ell \geq 1$.
    Recall that $UK_{T, ID|_{-1}}$ is empty.
    \begin{enumerate}
    \item It defines a revoked set $R_{ID|_{\ell-1}}$ of user identities
        at time $T$ from $RL_{ID|_{\ell-1}}$. From $R_{ID|_{\ell-1}}$, it
        defines a revoked index set $RI_{ID|_{\ell-1}} \subseteq \mc{N}$
        by using $ST_{ID|_{\ell-1}}$ since $ST_{ID|_{\ell-1}}$ contains
        $(ID|_\ell, d_{\ell})$. After that, it defines a non-revoked
        index set $SI_{ID|_{\ell-1}} = \mc{N} \setminus
        RI_{ID|_{\ell-1}}$.

    \item It retrieves $(\beta_{ID|_{\ell-1}}, \gamma_{ID|_{\ell-1}})$
        from $ST_{ID|_{\ell-1}}$. It obtains $CH_{BE} = (E_0, E_1)$ by
        running $\tb{PKBE.Encrypt} \lb (SI_{ID|_{\ell-1}},
        \beta_{ID|_{\ell-1}}, Y_{ID|_{\ell-1}} =
        g_1^{\gamma_{ID|_{\ell-1}}}, PP_{BE})$. It selects a random
        exponent $r_{\ell-1} \in \Z_p$ and creates a level update key
        \begin{align*}
        LUK_{\ell-1} = \Big(
            U_{\ell-1,0} = E_0,~
            U_{\ell-1,1} = E_1 \cdot H_1(T)^{r_{\ell-1}},~
            U_{\ell-1,2} = g_1^{-r_{\ell-1}}
        \Big) \in \G_1^3.
        \end{align*}

    \item Finally, it outputs an update key $UK_{T,R_{ID|_{\ell-1}}} =
        \big( \big\{ SI_{ID|_i},~ LUK_i = LUK'_i \big\}_{i=0}^{\ell-2},
        \big\{ SI_{ID|_{\ell-1}},~ LUK_{\ell-1} \big\}  \big)$.
    \end{enumerate}

\item [\tb{RHIBE.DeriveKey}($SK_{ID|_\ell}, UK_{T,R_{ID|_{\ell-1}}}, PP$):]
    Let $SK_{ID|_\ell} = ( \{ d_i, LSK_i \}_{i=1}^{\ell} )$ where $LSK_i =
    ( K_{i,0}, K_{i,1}, R_{i,0}, R_{i,1} )$ and $\ell \geq 1$, and
    $UK_{T,R_{ID|_{\ell-1}}} = ( \{ SI_{ID|_i}, LUK_i \}_{i=0}^{\ell-1} )$
    where $LUK_i = ( U_{i,0}, U_{i,1}, U_{i,2} )$.
    If $ID|_{\ell} \in R_{ID|_{\ell-1}}$, then it outputs $\perp$ since the
    identity $ID|_\ell$ is revoked. Otherwise, it proceeds as follows:
    \begin{enumerate}
    \item For each $i \in [\ell]$, it retrieves $\{ d_i, LSK_i =
        (K_{i,0}, K_{i,1}, R_{i,0}, R_{i,1}) \}$ and $\{ SI_{ID|_{i-1}},
        LUK_{i-1} = ( U_{i-1,0}, U_{i-1,1}, \lb U_{i-1,2} ) \}$ and
        computes the following components
        \begin{align*}
        A_{i,0} &= e_{1,1} (X_{d_i}, U_{i-1,1}) \cdot
                   e_{1,1} \big( U_{i-1,0}, K_{i,0}
                   \prod_{j \in SI_{ID|_{i-1}}, j \neq d_i} X_{N+1 -j + d_i}
                   \big)^{-1},~ \\
        A_{i,1} &= e_{1,1} (U_{i-1,0}, K_{i,1}),~
        A_{i,2}  = e_{1,1} (X_{d_i}, U_{i-1,2}).
        \end{align*}

    \item Next, it derives a temporal decryption key
        \begin{align*}
        TDK_{ID|_{\ell}, T} = \Big(
            D_0 = \prod_{i=1}^{\ell} A_{i,0} \cdot
                  \prod_{i=1}^{\ell-1} R_{i,1}^{-1},~
            \big\{ D_i = A_{i,1} \big\}_{i=1}^{\ell},~
            D_{L+1} = \prod_{i=1}^{\ell} A_{i,2}
        \Big) \in \G_2^{\ell+2}.
        \end{align*}

    \item Finally, it outputs a decryption key $DK_{ID|_{\ell}, T}$ by
        running $\tb{HIBE.RandKey} (TDK_{ID|_{\ell},T}, PP_{HIBE})$.
    \end{enumerate}

\item [\tb{RHIBE.Encrypt}($ID|_{\ell}, T, M, PP$):] Let $ID|_{\ell} = (I_1,
    \ldots, I_{\ell})$. It first chooses a random exponent $s \in \Z_p$ and
    obtains $CH_{HIBE}$ by running $\tb{HIBE.Encrypt} (ID|_{\ell}, T, s,
    PP_{HIBE})$. It outputs a ciphertext $CT_{ID|_\ell,T} = \big( C =
    \Omega^s \cdot M,~ CH_{HIBE} \big)$.

\item [\tb{RHIBE.Decrypt}($CT_{ID|_\ell,T}, DK_{ID'|_\ell,T'}, PP$):] Let
    $CT_{ID|_\ell,T} = (C, CH_{HIBE})$. If $(ID|_\ell = ID'|_\ell) \wedge
    (T = T')$, then it obtains $EK_{HIBE}$ by running $\tb{HIBE.Decrypt}
    (CH_{HIBE}, DK_{ID',T'}, PP_{HIBE})$ and outputs the message $M$ by
    computing $M = C \cdot EK^{-1}_{HIBE}$. Otherwise, it outputs $\perp$.

\item [\tb{RHIBE.Revoke}($ID|_{\ell}, T, RL_{ID|_{\ell-1}},
    ST_{ID|_{\ell-1}}$):] If $(ID|_\ell, -) \notin ST_{ID|_{\ell-1}}$, then
    it outputs $\perp$ since the private key of $ID|_{\ell}$ was not
    generated. Otherwise, it updates $RL_{ID|_{\ell-1}}$ by adding
    $(ID|_{\ell}, T)$ to $RL_{ID|_{\ell-1}}$.
\end{description}

\subsection{Correctness}

Let $SK_{ID|_{\ell}} = ( \{ d_i, LSK_i \}_{i=1}^\ell )$ be a private key for
$ID|_{\ell} = (I_1, \ldots, I_\ell)$ where $LSK_i = (K_{i,0}, K_{i,1},
R_{i,0}, R_{i,1})$ and $d_i$ is an index for $ID|_i = (I_1, \ldots, I_i)$.
Let $UK_{T,R_{ID|_{\ell-1}}} = ( \{ SI_{ID|_i}, LUK_i \}_{i=0}^{\ell-1} )$ be
an update key for time $T$ and a revoked set $R_{ID|_{\ell-1}}$. If $ID|_\ell
\not\in R_{ID|_{\ell-1}}$, then we obtain the following equations
    \begin{align*}
    A_{i,0}
    &=  e_{1,1} (X_{d_i}, U_{i-1,1}) \cdot
        e_{1,1} \Big( U_{i-1,0}, K_{i,0} \cdot
            \prod_{j \in SI_{ID|_{i-1}}, j \neq d_i} X_{N+1-j+d_i} \Big)^{-1} \\
    &=  EK_{BE} \cdot e_{1,1} (X_{d_i}, H_1(T)^{r_i}) \cdot
        e_{1,1} (E_0, F_{1,i}(CI_i)^{r_{i,1}} g_1^{r_{i,2}}) \\
    &=  g_2^{\alpha^{N+1} \beta_{ID|_{i-1}}} H_2(T)^{\alpha^{d_i} r_i}
        F_{2,i}(CI_i)^{\beta_{ID|_{i-1}} r_{i,1}}
        g_2^{\beta_{ID|_{i-1}} r_{i,2}},~ \db \\
    A_{i,1}
    &=  e_{1,1} (U_{i-1,0}, K_{i,1})
     =  e_{1,1} \big( g_1^{\beta_{ID|_{i-1}}}, g_1^{-r_{i,1}} \big)
     =  g_2^{-\beta_{ID|_{i-1}} r_{i,1}},~ \\
    A_{i,2}
    &=  e_{1,1} (X_{d_i}, U_{i-1,2})
     =  e_{1,1} \big( g_1^{\alpha^{d_i}}, g_1^{-r_i} \big)
     =  g_2^{-\alpha^{d_i} r_i}
    \end{align*}
from the correctness of the PKBE scheme. The decryption key derivation
algorithm correctly derives a temporal decryption key as
    \begin{align*}
    D_0
    &=  \prod_{i=1}^{\ell} A_{i,0} \cdot \prod_{i=1}^{\ell-1} R_{i,1}^{-1} \\
    &=  \prod_{i=1}^{\ell}
        g_2^{\alpha^{N+1} \beta_{ID|_{i-1}}}
        H_2(T)^{\alpha^{d_i} r_i} F_{2,i}(CI_i)^{\beta_{ID|_{i-1}} r_{i,1}}
        g_2^{\beta_{ID|_{i-1}} r_{i,2}} \cdot
        \prod_{i=1}^{\ell-1}
        g_2^{-\alpha^{N+1} \beta_{ID|_i}} g_2^{-\beta_{ID|_{i-1}} r_{i,2}} \\
    &=  g_2^{\alpha^{N+1} \beta_{\epsilon}} \cdot
        \prod_{i=1}^{\ell} F_{2,i}(CI_i)^{\beta_{ID|_{i-1}} r_{i,1}} \cdot
        H_2(T)^{\sum_{i=1}^{\ell} \alpha^{d_i} r_i},~ \db \\
    D_{L+1}
    &=  \prod_{i=1}^{\ell} A_{i,2}
     =  \prod_{i=1}^{\ell} g_2^{-\alpha^{d_i} r_i}
     =  g_2^{- \sum_{i=1}^{\ell} \alpha^{d_i} r_i}.
    \end{align*}
Note that the temporal decryption key is the same as the private key of the
above HIBE scheme.

\subsection{Security Analysis}

We first prove the security of the modified HIBE scheme. Note that the
selective KEM security model of the modified HIBE scheme can be easily
derived from the original selective KEM security model of HIBE by simply
incorporating a time period $T$ in private keys and the challenge ciphertext.
We omit the description of the security model.

\begin{theorem} \label{thm:bb-hibe-indcpa}
The above modified HIBE scheme is selectively secure under chosen plaintext
attacks if the $(3,N)$-MDHE assumption holds.
\end{theorem}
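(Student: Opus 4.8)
The plan is to give a standard reduction: assuming a PPT adversary $\mc{A}$ breaks the selective KEM security of the modified HIBE scheme with non-negligible advantage, we build a PPT algorithm $\mc{B}$ that breaks the $(3,N)$-MDHE assumption. So $\mc{B}$ receives the challenge tuple $D = (g_1, g_1^a, \ldots, g_1^{a^N}, g_1^{a^{N+2}}, \ldots, g_1^{a^{2N}}, g_1^b, g_1^c)$ together with $Z$ and must decide whether $Z = g_3^{a^{N+1}bc}$ or $Z$ is random. The idea is to implicitly set the HIBE master key to $\alpha = a^{N+1}bc$ (this is the quantity encoded in $Z$, which will serve as the challenge session key), and to program the public parameters so that (i) $\mc{B}$ can answer all private-key queries for identities that are not a prefix of nor equal to the challenge identity $ID^*|_{\ell^*}$ at the challenge time $T^*$, and (ii) the challenge ciphertext header for $ID^*|_{\ell^*}, T^*$ can be constructed from the available powers $g_1^{a^i}$ while $\Lambda^s = g_3^{\alpha s}$ equals $Z$ after setting $s$ appropriately.

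Concretely, I would use the BB-style partitioning. First $\mc{B}$ runs $\tb{EncodeCID}$ on $ID^*|_{\ell^*}$ to get $CID^*|_{\ell^*} = (CI^*_1, \ldots, CI^*_{\ell^*})$ and encodes $T^*$ bitwise. For each level $i$ and each bit position $j$ of $CI^*_i$, set the $f_{1,i,j,b}$ so that the product $F_{1,i}(CI)$ contains a factor of $g_1$ raised to a power of $a$ that vanishes exactly when $CI = CI^*_i$ (and analogously for $H_1$ with respect to $T^*$); more precisely, program $F_{1,i}(CI_i)$ to equal $g_1^{J_i(CI_i)} (g_1^{a^{\,?}})^{K_i(CI_i)}$ where $K_i(CI^*_i) = 0$, with the powers of $a$ chosen from the ``safe'' range so that $a^{N+1}$ never appears in isolation in $PP$. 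To embed $\alpha = a^{N+1}bc$ into $\Lambda = g_3^\alpha$, note $\mc{B}$ can compute $g_3^{a^{N+1}bc}$ only as part of $Z$ in the decision problem, so instead $\mc{B}$ sets up $PP$ using $g_1^a$-powers and $g_1^b, g_1^c$ and publishes $\Lambda = e_{1,2}(g_1^?, \cdot)$-type products; the usual trick is that $\Lambda$ can be obtained because $a^{N+1}$ is the one missing power among $\{a^i\}_{i=1}^{2N}$, and pairing brings it to level $3$ only together with $bc$, which is exactly the structure of the challenge term, so one either simulates with a rerandomized challenge or uses the ``$\Lambda$ only appears paired with $s$'' observation — the challenge session key is set to $Z$ and a random bit decides which message is masked. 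For a private-key query on $ID|_\ell, T$ with $ID|_\ell$ not a prefix of and not equal to $ID^*|_{\ell^*}$ (or $T \neq T^*$), there is a level $i$ with $CI_i \neq CI^*_i$ (using the $\tb{EncodeCID}$ prefix property that guarantees a mismatch at every level, not just the last) — or $T \neq T^*$ — so $K_i(CI_i) \neq 0$ (resp. the $T$-analog is nonzero), and $\mc{B}$ can cancel the unknown $g_2^\alpha = g_2^{a^{N+1}bc}$ term by choosing the exponent $r_i$ (resp. $r_{L+1}$) to absorb $a^{N+1}bc / K_i(CI_i)$ against the known powers, exactly as in the Boneh–Boyen HIBE proof lifted to multilinear groups.

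The main obstacle, and the step I would spend the most care on, is the simultaneous programming of the identity hash functions $F_{1,i}$ and the time hash $H_1$ so that: the master term $\alpha = a^{N+1}bc$ can be cancelled in every admissible key query using only the powers $g_1^{a}, \ldots, g_1^{a^N}, g_1^{a^{N+2}}, \ldots, g_1^{a^{2N}}$ and $g_1^b, g_1^c$ (never needing $g_1^{a^{N+1}}$ standalone, and never needing a product like $g_1^{a^i b}$ or $g_1^{bc}$ at level $1$ that isn't available) — this forces a careful choice of which $a$-powers to place in which $f_{1,i,j,b}$, keeping all of them in a window that stays below $a^{N+1}$ in the relevant products; and simultaneously the challenge ciphertext, where $s$ is set implicitly (typically $s = bc$ or a related product so that $\Lambda^s$ becomes $g_3^{a^{N+1}bc} = Z$), must have each $C_i = F_{1,i}(CI^*_i)^s$ and $C_{L+1} = H_1(T^*)^s$ computable — which works precisely because $K_i(CI^*_i) = 0$ kills the problematic $a$-power and leaves only $g_1^{b}$- and $g_1^c$-computable pieces. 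Getting the bookkeeping of exponents consistent across these two constraints is the crux; once that partition is fixed, answering queries and bounding the advantage ($\Adv_{\mc{A}} \le \Adv_{\mc{B}}^{(3,N)\text{-}MDHE}$ up to the negligible $\tb{EncodeCID}$ collision term) is routine.
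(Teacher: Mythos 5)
Your overall strategy is the right one and matches the paper's: BB-style partitioning, programming $F_{1,i}$ to vanish at $CI^*_i$ and $H_1$ to vanish at $T^*$, cancelling the master term in admissible key queries, and setting the challenge session key to $Z$. However, the step you yourself flag as the crux --- how to split $a^{N+1}bc$ between the implicit master key $\alpha$ and the implicit randomness $s$ --- is left unresolved, and both concrete candidates you float are unworkable. If you set $\alpha = a^{N+1}bc$, then not only is $\Lambda = g_3^{\alpha}$ uncomputable (it would be $Z_0$ itself), but every private key lives in $\G_2$ and contains $g_2^{\alpha} = g_2^{a^{N+1}bc}$; producing (or cancelling) a level-$2$ element whose exponent is a product of the three unknowns $a^{N+1}$, $b$, $c$ requires pairing two level-$1$ elements whose exponents jointly multiply to $a^{N+1}bc$, and no such pair is derivable from $D$, so the key queries cannot be simulated at all. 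If instead you set $s = bc$, then $C_0 = g_1^{s} = g_1^{bc}$ is a level-$1$ element that cannot be computed from $g_1^b$ and $g_1^c$. The only split that works with this three-leveled challenge tuple is the paper's: $\alpha = a^{N+1}b$ (so $\Lambda = e(e(g_1^a, g_1^{a^N}), g_1^b)$ is computable at level $3$) and $s = c$ (so $C_0 = g_1^c$ is given directly), with $f_{1,i,j,k}$ loaded with $a^N$ so that $F_{2,\ell}(CI_\ell)^{r_\ell}$ with $r_\ell = (-a/\Delta CI_\ell + r'_\ell)b$ cancels $g_2^{a^{N+1}b}$ in non-prefix queries, and $h_{1,j,k}$ loaded with $b$ so that $H_2(T)^{r_{L+1}}$ with $r_{L+1} = (-a/\Delta T + r'_{L+1})a^N$ cancels it in queries with $ID|_\ell = ID^*|_{\ell^*}$ and $T \neq T^*$.

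Two smaller points. First, you describe the $\tb{EncodeCID}$ property as guaranteeing ``a mismatch at every level,'' but the property actually used is that the \emph{last} component $CI_\ell$ of a non-prefix query differs from every $CI^*_i$; the paper's reduction always cancels at level $\ell$, not at an arbitrary mismatching level. Second, the case $ID|_\ell = ID^*|_{\ell^*}$ with $T \neq T^*$ needs its own cancellation through the time hash (with the $a^N$ factor supplied by $g_1^{a^N}$ rather than by the identity hash); you gesture at this with ``the $T$-analog'' but it deserves an explicit, separate case since the exponent bookkeeping is different.
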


\begin{proof}
Suppose there exists an adversary $\mc{A}$ that attacks the above HIBE scheme
with a non-negligible advantage. A simulator $\mc{B}$ that solves the MDHE
assumption using $\mc{A}$ is given: a challenge tuple
    $D = \big( g_1, g_1^{a}, g_1^{a^2}, \ldots, g_1^{a^N}, g_1^{a^{N+2}},
    \ldots, g_1^{a^{2N}}, g_1^b, g_1^c \big)$ and $Z$
where $Z = Z_0 = g_3^{a^{N+1} bc}$ or $Z = Z_1 \in \G_3$. Then $\mc{B}$ that
interacts with $\mc{A}$ is described as follows:

\vs \noindent \tb{Init:} $\mc{A}$ initially submits a challenge identity
$ID^*|_{\ell^*} = (I_1^*, \ldots, I_{\ell^*}^*)$ and challenge time $T^*$. It
obtains $CID^*|_{\ell^*} = (CI_1^*, \ldots, CI_{\ell^*}^*)$ by calling
$\tb{EncodeCID}(ID^*|_{\ell^*})$.

\svs \noindent \tb{Setup:} $\mc{B}$ first chooses random exponents $\{
f'_{i,0} \}_{1 \leq i \leq L}, \{ f'_{i,j, k} \}_{1 \leq i \leq L, 1 \leq j
\leq l_1, k \in \bits}, h'_{0}, \{ h'_{j,k} \}_{1 \leq j \leq l_2, k \in
\bits}, \in \Z_p$. It implicitly sets $\alpha = a^{N+1}b$ and publishes the
public parameters $PP$ as
    \begin{align*}
    \vec{f}_{1,i} &= \Big(
        f_{1,i,0} = g_1^{f'_{i,0}} \big( \prod_{j=1}^{l_1} f_{1,i,j,CI^*[j]}
        \big)^{-1},~
        \big\{ f_{1,i,j,k} = \big( g_1^{a^N} \big)^{f'_{i,j,k}}
        \big\}_{1 \leq i \leq L, 1 \leq j \leq l_1, k \in \bits}
    \Big),~ \\
    \vec{h}_1 &= \Big(
        h_{1,0} = g_1^{h'_0} \big( \prod_{j=1}^{l_2} h_{1,j,T^*[j]} \big)^{-1},~
        \big\{ h_{1,j,k} = \big( g_1^b \big)^{h'_{j,k}}
        \big\}_{1 \leq j \leq l_2, k \in \bits}
    \Big),~ \\
    \Lambda &= e \big( e (g_1^{\alpha}, g_1^{\alpha^N}), g_1^b \big)
        = g_3^{\alpha^{N+1} b}.
    \end{align*}
For notational simplicity, we define $\Delta CI_i = \sum_{j=1}^{l_1}
(f'_{i,j,CI[j]} - f'_{i,j,CI^*[j]})$ and $\Delta T = \sum_{j=1}^{l_2}
(h'_{j,T[j]} - h'_{j,T^*[j]})$. We have $\Delta CI_i \not\equiv 0 \mod p$
except with negligible probability if $CI_i \neq CI^*_i$ since there exists
at least one index $j$ such that $f'_{i,j,CI[j]} \neq f'_{i,j,CI^*[j]}$ and
$\{ f'_{i,j,k} \}$ are randomly chosen. We also have $\Delta T \not\equiv 0
\mod p$ except with negligible probability if $T \neq T^*$.

\vs \noindent \tb{Phase 1:} $\mc{A}$ adaptively requests a polynomial number
of private key queries. If this is a private key query for a hierarchical
identity $ID|_\ell = (I_1, \ldots, I_\ell)$ and a time period $T$, then
$\mc{B}$ obtains $CID|_\ell = (CI_1, \ldots, CI_\ell)$ by calling
$\tb{EncodeCID}(ID|_\ell)$ and proceeds as follows.
\begin{itemize}
\item \tb{Case $ID_\ell \not\in \tb{Prefix}(ID^*|_{\ell^*})$}: In this
    case, we have $CI_\ell \neq CI^*_i$ for all $i$ by the property of the
    encoding function \cite{LeeP16}.
    It selects random exponents $r_1, \ldots, r_{\ell-1}, r'_\ell, r_{L+1}
    \in \Z_p$ and creates a private key $SK_{ID_{\ell},T}$ by implicitly
    setting $r_\ell = (-a / \Delta CI_\ell + r'_\ell) b$ as
    \begin{align*}
    & D_0   = e \big( (g_1^a)^{-f'_{\ell,0} / \Delta CI_\ell}
              F_1(CI_\ell)^{r'_\ell}, g_1^b \big) \cdot
              \prod_{i=1}^{\ell-1} F_{2,i}(CI_i)^{r_i} \cdot H_2(T)^{r_{L+1}},~ \\
    & \big\{ D_i = g_2^{r_i} \big\}_{1 \leq i \leq \ell-1},~
    D_\ell  = e \big( (g_1^a)^{-1 / \Delta CI_\ell} g_1^{r'_\ell}, g_1^b \big),~
    D_{L+1} = g_2^{r_{L+1}}.
    \end{align*}

\item \tb{Case $ID|_\ell = ID^*|_{\ell^*}$}: In this case, we have $T \neq
    T^*$. It selects random exponents $r_{1}, \ldots, r_{\ell}, r'_{L+1}
    \in \Z_p$ and creates a private key $SK_{ID_{\ell},T}$ by implicitly
    setting $r_{L+1} = (-a / \Delta T + r'_{L+1}) a^N$ as
    \begin{align*}
    & D_0   = e \big( (g_1^a)^{-h'_0 / \Delta T} H_1(T)^{r'_{L+1}},
              g_1^{a^N} \big) \cdot \prod_{i=1}^{\ell} F_{2,i}(CI_i)^{r_i},~ \\
    & \{ D_i = g_2^{r_i} \}_{1 \leq i \leq \ell},~
    D_{L+1} = e \big( (g_1^a)^{-1 / \Delta T} g_1^{r'_{L+1}}, g_1^{a^N} \big).
    \end{align*}
\end{itemize}

\noindent \tb{Challenge}: $\mc{B}$ creates the challenge ciphertext header
$CH^*$ by implicitly setting $s = c$ as
    \begin{align*}
    &   C_0 = g_1^c,~
        \big\{ C_i = ( g_1^c )^{f'_{i,0}} \big\}_{1 \leq i \leq \ell^*},~
        C_{L+1} = ( g_1^c )^{h'_0}
    \end{align*}
and the challenge session key $EK^* = Z$.

\svs \noindent \tb{Phase 2}: Same as Phase 1.

\svs \noindent \tb{Guess}: Finally, $\mc{A}$ outputs a guess $\mu' \in
\bits$. $\mc{B}$ also outputs $\mu'$.
\end{proof}

\begin{theorem} \label{thm:rhibe-hpu-srlind}
The above RHIBE scheme is SRL-IND secure if the $(3,N)$-MDHE assumption holds
where $N$ is the maximum number of child users in the system.
\end{theorem}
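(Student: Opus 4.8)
The plan is to reduce the SRL-IND security of the RHIBE scheme to the $(3,N)$-MDHE assumption via a simulator $\mc{B}$ that plays the role of the challenger for an adversary $\mc{A}$. Given the challenge tuple $D = (g_1, g_1^a, \ldots, g_1^{a^N}, g_1^{a^{N+2}}, \ldots, g_1^{a^{2N}}, g_1^b, g_1^c)$ and $Z$, the simulator will implicitly set the PKBE exponent to $\alpha = a$, the HIBE master exponent to $\alpha_{HIBE}$ absorbed into $b$-terms (mirroring the setup in the proof of Theorem~\ref{thm:bb-hibe-indcpa}), and the top-level blinding exponent $\beta_\epsilon$ so that $\Omega = g_3^{\alpha^{N+1}\beta_\epsilon}$ is consistent with $Z_0 = g_3^{a^{N+1}bc}$ when $s = c$. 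Concretely, after $\mc{A}$ submits $(ID^*|_{\ell^*}, T^*, R^*)$ in the \tb{Init} phase, $\mc{B}$ sets up $PP_{HIBE}$ exactly as in Theorem~\ref{thm:bb-hibe-indcpa} (programming $\vec f_{1,i}$ and $\vec h_1$ so that $F_{1,i}(CI^*_i)$ and $H_1(T^*)$ have no $g_1^{a^N}$ or $g_1^b$ component), and sets up $PP_{BE}$ from the $X_j = g_1^{a^j}$ values; the missing term $X_{N+1} = g_1^{a^{N+1}}$ is exactly what forces the reduction to work, since the simulator never needs it for honestly-answerable queries but it is hidden inside $\Omega$.

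The key technical device is the standard partitioning argument for the PKBE-based revocation, combined with the random blinding trick of the construction. Using the selective revocation set $R^*$ and its induced non-revoked index sets $SI_{ID|_i}$, the simulator knows that along the challenge path $ID^*|_1, \ldots, ID^*|_{\ell^*}$, at least one ancestor (or $ID^*|_{\ell^*}$ itself, if $\ell = \ell^*$) is revoked at or before $T^*$ — this is enforced by the restriction in the private-key query of Definition~\ref{def:rhibe-hpu-srlind}. Let $i^*$ be the smallest level at which the challenge identity's ancestor is revoked. For levels $i \le i^*$ the simulator can set $\gamma_{ID^*|_{i-1}}$ and $\beta_{ID^*|_{i-1}}$ honestly (picking them at random, but programming $\gamma_{ID^*|_{i-1}}$ relative to the $\prod_{j \in SI} g_1^{\alpha^{N+1-j}}$ term so that the index $d_{i}$ assigned to $ID^*|_{i}$ lies outside $SI_{ID^*|_{i-1}}$, hence the simulator never needs $g_1^{\alpha^{N+1}}$); for the blinding exponent chain $\beta_{ID^*|_0}, \ldots, \beta_{ID^*|_{\ell^*-1}}$ the telescoping cancellation in the \tb{DeriveKey} correctness proof ($g_2^{\alpha^{N+1}\beta_{ID|_{i-1}}} \cdot g_2^{-\alpha^{N+1}\beta_{ID|_i}}$) means the simulator only ever needs $g_2^{\alpha^{N+1}\beta_{ID^*|_0}} = g_2^{a^{N+1}\beta_\epsilon}$ in combinations that either it can compute from the hints or that appear only in the (disallowed) challenge decryption key. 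Private keys, update keys, and decryption keys for all allowed queries are then answered using the programmed randomness exactly as in the HIBE proof, handling the three cases (identity off the challenge path, identity a strict prefix of $ID^*|_{\ell^*}$, and identity equal to $ID^*|_{\ell^*}$ with $T \ne T^*$) separately.

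In the \tb{Challenge} phase $\mc{B}$ sets $s = c$, builds $CH^*_{HIBE}$ as in the HIBE reduction (each $C_i = (g_1^c)^{f'_{i,0}}$, $C_{L+1} = (g_1^c)^{h'_0}$, $C_0 = g_1^c$), and outputs $C = Z \cdot M^*_\mu$ for a random bit $\mu$; if $Z = Z_0 = g_3^{a^{N+1}bc} = \Omega^s$ then $CT^*$ is a correct encryption of $M^*_\mu$, while if $Z = Z_1$ is random then $CT^*$ perfectly hides $\mu$, so $\mc{B}$'s advantage equals $\mc{A}$'s up to the negligible collision-resistance loss of $\tb{EncodeCID}$ and the negligible probability that some programmed $\Delta CI_i$ or $\Delta T$ vanishes. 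The main obstacle, and the place where the bulk of the proof's care is needed, is the simultaneous consistency of the two partitioning arguments: the HIBE-style programming that must make $F_{1,i}(CI^*_i)$ and $H_1(T^*)$ simulatable, and the BGW-style index assignment that must keep every honestly-queried user's index inside the appropriate $SI$ set while the challenge path's indices fall outside it — all while the random blinding exponents $\beta_{ID|_{i-1}}$ telescope correctly and the only genuinely uncomputable element, $g_3^{a^{N+1}bc}$, is confined to the challenge ciphertext. Verifying that no allowed query (in particular the update-key queries at $T = T^*$, whose revoked set must match $R^*$, and decryption-key queries for identities on the challenge path at times $T \ne T^*$) ever forces $\mc{B}$ to compute a term involving $g_1^{\alpha^{N+1}}$ is the crux, and it is exactly here that the restrictions baked into Definition~\ref{def:rhibe-hpu-srlind} are used.
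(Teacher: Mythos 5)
Your overall strategy (a partitioning reduction to the $(3,N)$-MDHE assumption that reuses the HIBE programming of Theorem~\ref{thm:bb-hibe-indcpa} for the $F_{1,i}$/$H_1$ components and the BGW programming $\gamma = \theta - \sum_{j \in SI^*} a^{N+1-j}$ for revocation, with $s=c$ in the challenge) matches the paper's. But there is a concrete error at the crux. You claim the simulator can program $\gamma_{ID^*|_{i-1}}$ ``so that the index $d_i$ assigned to $ID^*|_i$ lies outside $SI_{ID^*|_{i-1}}$, hence the simulator never needs $g_1^{\alpha^{N+1}}$.'' The simulator has no such freedom: $SI^*_{ID^*|_{i-1}} = \mc{N} \setminus RI^*_{ID^*|_{i-1}}$ is fixed by the adversary's choice of $R^*$, and for every \emph{non-revoked} ancestor $ID^*|_i$ on the challenge path its index necessarily lies \emph{inside} $SI^*_{ID^*|_{i-1}}$. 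For those levels $K_{i,0} = g_1^{\alpha^{d_i}\gamma_{ID^*|_{i-1}}}\cdots$ contains the factor $g_1^{-a^{N+1}}$ (the $j = d_i$ term of the product), which the simulator cannot compute from the challenge tuple. The paper's proof resolves exactly this with the random-blinding cancellation: it implicitly sets $r_{i,2} = \hat{r}_{i,2} - a^{N+1}$ so that the blinding factor $g_1^{-r_{i,2}}$ cancels $g_1^{-a^{N+1}}$ in $K_{i,0}$, and simultaneously sets $\beta_{ID^*|_i} = b + \hat{\beta}_{ID^*|_i}$ along the whole non-revoked prefix so that in $R_{i,1} = g_2^{\alpha^{N+1}\beta_{ID^*|_i}} \cdot g_2^{\beta_{ID^*|_{i-1}} r_{i,2}}$ the two uncomputable terms $g_2^{a^{N+1}b}$ and $g_2^{-a^{N+1}b}$ cancel. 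Your proposal gestures at the telescoping of the $\beta$-chain but does not supply this mechanism, and the mechanism you do supply (index placement) contradicts the constraints of the model.

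A second, smaller inaccuracy: the restriction in Definition~\ref{def:rhibe-hpu-srlind} does not guarantee that ``at least one ancestor is revoked at or before $T^*$''; it only forces a revocation when the adversary actually requests a private key for a prefix of $ID^*|_{\ell^*}$. The partition point (the paper's index $x$) must therefore be read off from the submitted $R^*$ rather than from an assumed revocation, and the case analysis must separate the non-revoked prefix levels $i < x$ (cancellation technique), the first revoked level $i = x$ and deeper revoked levels (BGW partitioning), and the $ID|_\ell \notin R^*$ leaf case (Boneh--Boyen partitioning on $\Delta CI_\ell$). Lumping all levels $i \le i^*$ together hides the fact that two different and incompatible programming strategies are needed on either side of $x$.
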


\begin{proof}
Suppose there exists an adversary $\mc{A}$ that attacks the above RHIBE scheme
with a non-negligible advantage. A meta-simulator $\mc{B}$ that solves the MDHE
assumption using $\mc{A}$ is given: a challenge tuple
    $D = \big( (p,\G_1, \G_2, \G_3), g_1, g_1^{a}, g_1^{a^2}, \ldots,
    g_1^{a^N}, g_1^{a^{N+2}}, \ldots, g_1^{a^{2N}}, g_1^b, g_1^c \big)$
    and $Z$
where $Z = Z_0 = g_3^{a^{N+1} bc}$ or $Z = Z_1 \in \G_3$. Note that a
challenge tuple
    $D_{BDHE} = \big( (p,\G_1, \G_2), g_1, g_1^{a}, g_1^{a^2}, \ldots,
    g_1^{a^N}, g_1^{a^{N+2}}, \ldots, g_1^{a^{2N}}, g_1^b \big)$
for the BDHE assumption can be derived from the challenge tuple $D$ of the
MDHE assumption. Let $\mc{B}_{HIBE}$be the simulator in the security proof of
Theorem \ref{thm:bb-hibe-indcpa} and $\mc{B}_{PKBE}$ be a simulator in
security proof of Theorem \ref{thm:bgw-pkbe-indcpa}. Then $\mc{B}$ that
interacts with $\mc{A}$ is described as follows:

\vs \noindent \textbf{Init:} $\mc{A}$ initially submits a challenge identity
$ID^*|_{\ell^*} = (I^*_1, \ldots, I^*_{\ell^*})$, a challenge time period
$T^*$, and a revoked identity set $R^* = (R^*_{ID_0}, \ldots,
R^*_{ID_{L-1}})$ at the time period $T^*$.
It first sets a state $ST$ and a revocation list $RL$ as empty one. For each
$ID_k \in \{ ID^* \} \cup R^*$, it selects a unique index $d_i \in \mc{N}$
such that $(-, d_i) \notin ST_{ID_{k-1}}$ and adds $(ID_k, d_i)$ to
$ST_{ID_{k-1}}$. Let $RI^* = (RI^*_{ID_0}, \ldots, RI^*_{ID_{\ell^*-1}})
\subseteq \mc{N}$ be the revoked index set of $R^*$ at the time $T^*$ and
$SI^* = (SI^*_{ID_1}, \ldots, SI^*_{ID_{\ell^*-1}}$) be the non-revoked index
set at the time $T^*$ such that $SI^*_{ID_x} = \mc{N} \setminus RI^*_{ID_x}$.

\svs \noindent \textbf{Setup:} $\mc{B}$ submits $ID^*_{\ell^*}$ and $T^*$ to
$\mc{B}_{HIBE}$ and receives $PP_{HIBE}$. It also submits $SI^*_{\epsilon}$
to $\mc{B}_{PKBE}$ and receives $PP_{PKBE}$.
Note that $\{ ID^*_i \notin R_{ID^*_{i-1}}\}_{1 \leq i < x}$ and $\{ ID^*_i
\in R_{ID^*_{i-1}}\}_{x \leq i \leq \ell^*}$. $\mc{B}$ first chooses random
exponents $\theta_{ID^*_0}, \ldots, \theta_{ID^*_{\ell^*-1}},
\hat{\beta}_{ID^*_0}, \ldots, \hat{\beta}_{ID^*_{x-1}}, \beta_{ID^*_{x}},
\ldots, \beta_{ID^*_{\ell^*}} \in \Z_p$.
It implicitly sets $\alpha = a, \beta_{\epsilon} = b + \hat{\beta}_{ID_0},
\{\beta_{ID_i} = b + \hat{\beta}_{ID_i}\}_{1 \leq i < x}, \{\gamma_{ID^*_x} =
\theta_{ID^*_x} - \sum_{j \in SI_{ID^*_x}} a^{N+1-j}\}_{0 \leq x \leq
\ell^*-1}$ and publishes the public parameters $PP$ as
    \begin{align*}
    PP = \Big(
    &   PP_{HIBE},~ PP_{PKBE},~
        g_2^{\alpha^{N+1}} = e( g_1^a, g_2^{a^N} ),~
        g_2^{\beta_{\epsilon}} = g_2^{b + \hat{\beta}_{ID_0}}
            = e( g_1,  g_1^b \cdot g_1^{\hat{\beta}_{ID_0}}),~ \\
    &   \Omega = g_3^{\alpha^{N+1} \beta_{\epsilon}}
            = g_3^{a^{N+1} (b + \hat{\beta}_{ID_0})}
            = e \big( e (g_1^a, g_1^{a^N}), g_1^b \big) \cdot
              e \big( e (g_1^a, g_1^{a^N}), g_1^{\hat{\beta}_{ID_0}} \big)
    \Big).
    \end{align*}

\vs \noindent \textbf{Phase 1:} $\mc{A}$ adaptively requests a polynomial
number of private key, update key, and decryption key queries.

\svs \noindent If this is a private key query for an identity $ID|_{\ell} =
(I_1, \ldots, I_{\ell})$, then $\mc{B}$ proceeds as follows: Note that
$SK_{ID|_{\ell}} = \big( \{ d_i, LSK_i \}_{i=1}^{\ell} \big)$ where $LSK_i =
(K_{i,0}, K_{i,1}, R_{i,0}, R_{i,1})$.

\begin{itemize}
\item \tb{Case} $ID|_{\ell-1} \notin \tb{Prefix}(ID^*|_{\ell^*})$: It first
    normally generates a state $ST_{ID|_{\ell-1}}$ and adds a tuple
    $(ID|_{\ell}, d_{\ell})$ to $ST_{ID|_{\ell-1}}$ where $d_\ell$ is an
    index for $ID|_{\ell}$.
    It obtains $SK_{ID|_{\ell-1}}$ by requesting an RHIBE private key query
    for $ID|_{\ell-1}$ to $\mc{B}_{HIBE}$. Next, it simply generates
    $SK_{ID|_{\ell}}$ by running $\tb{RHIBE.GenKey} (ID|_{\ell},
    SK_{ID|_{\ell-1}}, ST_{ID|_{\ell-1}}, PP)$.

\item \tb{Case} $ID|_{\ell-1} \in \tb{Prefix}(ID^*|_{\ell^*})$: Note that
    $ID_0 = ID_{\ell-1}$ is included in this case. We have $ID_{\ell} =
    (I^*_1, \ldots, I^*_{\ell-1}, I_{\ell})$, $\{ ID^*_i \notin
    R_{ID^*_{i-1}} \}_{1 \leq i < x}$ and $\{ ID^*_i \in
    R_{ID^*_{i-1}}\}_{x \leq i \leq \ell-1}$.
    \begin{itemize}
    \item \tb{Case} $ID|_{\ell} \in R^*_{ID^*_{\ell-1}}$: In this case,
        it first retrieves a tuple $(ID_{\ell}, d_{\ell})$ from
        $ST_{ID^*_{\ell-1}}$ where the index $d_{\ell}$ is associated
        with $ID_{\ell}$. Note that the tuple $(ID_{\ell}, d_{\ell})$
        exists since all identities in $R^*_{ID^*_{\ell-1}}$ were added
        to $ST_{ID^*_{\ell-1}}$ in the initialization step.

    If $1 \leq i < x$, the simulator can use the cancellation technique
    by using the random blind element.
    It recalls random exponents $\theta_{ID^*_{0}}, \ldots,
    \theta_{ID^*_{x-2}}, \hat{\beta}_{ID^*_{0}}, \ldots,
    \hat{\beta}_{ID^*_{x-1}}$ and selects random exponents $r_{1,1},
    \ldots, r_{x-1,1}, \hat{r}_{1,2}, \ldots, \hat{r}_{x-1,2} \in \Z_p$
    and creates level private keys by implicitly setting $\{r_{i,2} =
    \hat{r}_{i,2} - a^{N+1} \}_{1 \leq i < x}$ as
    \begin{align*}
    \Big\{
    LSK_i = \Big(
        K_{i,0} &= \big( g_1^{a^{d_{i}}} \big)^{\theta_{ID^*_{i-1}}}
                \big( \prod_{j \in SI^*_{ID_{i-1}}, j \neq d_{i}}
                g_1^{a^{N+1-j+d_{i}}} \big)^{-1} F_{1,i}(I_i)^{-r_{i,1}}
                \cdot g_1^{-\hat{r}_{i,2}},~
        K_{i,1} = g_1^{-r_{i,1}},~\\
        R_{i,0} &= e(g_1^b \cdot g_1^{\hat{\beta}_{ID^*_{i-1}}}, g_1),~
        R_{i,1} = (g_2^{a^{N+1}})^{\hat{\beta}_{ID^*_{i}}} \cdot
                 e(g_1^b, g_1^{\hat{r}_{i,2}}) \cdot
                 g_2^{\hat{r}_{i,2}\hat{\beta}_{ID^*_{i-1}}} \cdot
                (g_2^{a^{N+1}})^{-\hat{\beta}_{ID^*_{i-1}}}
                  \Big)\Big\}_{1 \leq i < x}.
    \end{align*}
    If $i = x$, the simulator can use the partitioning technique of
    Boneh et al.\cite{BonehGW05}.
    It recalls random exponents $\gamma_{ID^*_{i-1}},
    \hat{\beta}_{ID^*_{i-1}}, \beta_{ID^*_{i}}$ and obtains $SK_{BE} =
    K_{BE}$ by requesting an PKBE private key query for $(d_{i},
    \gamma_{ID^*_{i-1}})$.
    Next, it selects random exponents $r_{i,1}, r_{i,2} \in \Z_p$ and
    creates a level private key as
    \begin{align*}
    LSK_i = \Big(
        K_{i,0} &= K_{BE} \cdot F_{1,i}(I_i)^{-r_{i,1}} \cdot g_1^{-r_{i,2}},~
        K_{i,1}  = g_1^{-r_{i,1}},~ \\
        R_{i,0} &= e( g_1^{\beta_{ID^*_{i}}}, g_1),~
        R_{i,1}  = e(g_1^{a^{N}}, g_1^{a})^{\beta_{ID_{i}}}
                   \cdot e(g_1^{b} \cdot g_1^{\hat{\beta}_{ID^*_{i-1}}},
                   g_1^{r_{i,2}})
    \Big).
    \end{align*}
    If $x < i \leq \ell-1$, the simulator can use the partitioning
    technique of Boneh et al.\cite{BonehGW05}.
    It recalls random exponents $\gamma_{ID^*_{x}}, \ldots,
    \gamma_{ID^*_{\ell-2}}, \beta_{ID^*_{x}}, \ldots,
    \beta_{ID^*_{\ell-1}}$ and obtains $\{SK_{BE,i} = K_{BE,i} \}_{x < i
    \leq \ell-1}$ by requesting an PKBE private key query for $\{ (d_{i},
    \gamma_{ID^*_{i-1}}) \}_{x < i \leq \ell-1}$.
    Next, it selects random exponents $r_{x+1,1}, \ldots, r_{\ell-1,1},
    r_{x+1,2}, \ldots, r_{\ell-1,2} \in \Z_p$ and creates level private
    keys as
    \begin{align*}
    \{
    LSK_i = \Big(
        K_{i,0} &= K_{BE} \cdot F_{1,i}(I_i)^{-r_{i,1}}
                   \cdot g_1^{-r_{i,2}},~
        K_{i,1}  = g_1^{-r_{i,1}},~ \\
        R_{i,0} &= e(g_1^{\beta_{ID_{i}}}, g_1),~
        R_{i,1}  = e(g_1^{a^{N}}, g_1^{a})^{\beta_{ID_{i}}}
                   \cdot e(g_1^{\beta_{ID^*_{i-1}}}, g_1^{r_{i,2}})
    \Big)
    \}_{x < i \leq \ell-1}.
    \end{align*}

    If $i = \ell$, the simulator can use the partitioning technique of
    Boneh et al. \cite{BonehGW05}.
    It recalls a random exponent $\gamma_{ID^*_{i-1}}$
    and obtains $SK_{BE} = K_{BE}$ by requesting an PKBE private key
    query for $(d_{i}, \gamma_{ID^*_{i-1}})$.
    Next, it selects a random exponent $r_{i,1} \in \Z_p$ and creates a
    level private key as
    \begin{align*}
    LSK_i = \Big(
        K_{i,0} &= K_{BE} \cdot F_{1,i}(I_i)^{-r_{i,1}}
                \cdot g_1^{-r_{i,2}},~
        K_{i,1}  = g_1^{-r_{i,1}},~\\
            R_{i,0} &= 1_{\G_2},~
            R_{i,1}  = 1_{\G_2}
                   \Big).
    \end{align*}
    \item \tb{Case} $ID_{\ell} \notin R^*_{ID^*_{\ell-1}}$: In this case,
        we have $\{ID_i \notin R_{ID_{i-1}}\}_{1 \leq i \leq \ell}$ and
        $\ell < x$, since, if a parents identity is revoked $ID_i \in
        R_{ID_{i-1}}$, then a child identity should be revoked $ID_{i+1}
        \in R_{ID_{i}}$. It first selects an index $d_{\ell} \in \mc{N}$
        such that $(-,d_{\ell}) \notin ST_{ID^*_{\ell-1}}$ and adds
        $(ID_{\ell}, d_{\ell})$ to $ST_{ID^*_{\ell-1}}$.

    If $1 \leq i < \ell-1$, the simulator can use the cancellation
    technique by using the random blind element.
    It recalls random exponents $\theta_{ID^*_{0}}, \ldots,
    \theta_{ID^*_{\ell-2}}, \hat{\beta}_{ID^*_{0}}, \ldots,
    \hat{\beta}_{ID^*_{\ell-1}}$ and selects random exponents $r_{1,1},
    \ldots, r_{\ell-1,1}, \hat{r}_{1,2}, \ldots, \hat{r}_{\ell-1,2} \in
    \Z_p$ and creates level private keys by implicitly setting $\{r_{i,2}
    = \hat{r}_{i,2} - a^{N+1}, \gamma_{ID^*_{i-1}} = \theta_{ID^*_{i-1}}
    - \sum_{j \in SI_{ID^*_{i-1}}} a^{N+1-j}\}_{1 \leq i < \ell}$ as
    \begin{align*}
    \Big\{
    LSK_i = \Big(
        K_{i,0} &= \big( g_1^{a^{d_{i}}} \big)^{\theta_{ID^*_{i-1}}}
                \big( \prod_{j \in SI^*_{ID_{i-1}}, j \neq d_{i}}
                g_1^{a^{N+1-j+d_{i}}} \big)^{-1} F_{1,i}(I_i)^{-r_{i,1}}
                \cdot g_1^{-\hat{r}_{i,2}},~
        K_{i,1} = g_1^{-r_{i,1}},~\\
        R_{i,0} &= e(g_1^b \cdot g_1^{\hat{\beta}_{ID^*_{i-1}}}, g_1),~
        R_{i,1} = (g_2^{a^{N+1}})^{\hat{\beta}_{ID^*_{i}}} \cdot
                 e(g_1^b, g_1^{\hat{r}_{i,2}}) \cdot
                 g_2^{\hat{r}_{i,2}\hat{\beta}_{ID^*_{i-1}}} \cdot
                (g_2^{a^{N+1}})^{-\hat{\beta}_{ID^*_{i-1}}}
                  \Big)\Big\}_{1 \leq i < \ell}.
    \end{align*}

    If $i = \ell$, the simulator can use the partitioning technique of
    Boneh and Boyen \cite{BonehB04e}. We have $I_{i} \neq I^*_{i}$ from
    the restriction of Definition \ref{def:rhibe-hpu-srlind}. It selects
    a random exponent $r'_{i,1} \in \Z_p$ and creates a level private key
    by implicitly setting $r_{i,1} = -a / \Delta I_{i} + r'_{i, 1}$ as
    \begin{align*}
    LSK_i = \Big(
         K_{i,0} &= g_1^{a^d \theta_{ID_{i-1}}} \prod_{j \in SI^*_{ID_
                {i-1}} \setminus \{ d_{i} \} } g_1^{-a^{N+1-j+d_
                {i}}} (g_1^a)^{f'_0 / \Delta I_{i}} F_{1,i}
                (I_{i})^{-r'_{i,1}},~
         K_{i,1} = (g_1^a)^{-1 / \Delta I_{i}} g_1^{r'_{i,1}},~\\
            R_{i,0} &= 1_{\G_2},~
            R_{i,1}  = 1_{\G_2}
         \Big).
    \end{align*}
    \end{itemize}
\end{itemize}

\noindent If this is an update key query for an identity $ID|_{\ell-1} =
(I_1, \ldots, I_{\ell-1})$ and a time period $T$, then $\mc{B}$ defines a
revoked identity set $R_{ID_{\ell-1}}$ at the time $T$ from
$RL_{ID_{\ell-1}}$ and proceeds as follows: Note that $UK_{T,R_{ID_{\ell-1}}}
= \big( \{ SI_{ID_i}, LUK_i \}_{_i=0}^{\ell-1} \big)$ where $LUK_i =
(U_{i,0}, U_{i,1}, U_{i,2})$.
We assume that $(I_1 = I^*_1), \ldots, (I_y = I^*_y), (I_{y+1} \neq
I^*_{y+1}), \ldots, \lb (I_{\ell-1} \neq I^*_{\ell-1})$ where $1 \leq y \leq
\ell-1$. And, we have $\{ ID^*_i \notin R_{ID^*_{i-1}} \}_{1 \leq i < x}$ and
$\{ ID^*_i \in R_{ID^*_{i-1}} \}_{x \leq i \leq \ell-1}$.

\begin{itemize}
\item \tb{Case} $T \neq T^*$ : It first sets a revoked index set
    $RI_{ID_{\ell-1}}$ of $R_{ID_{\ell-1}}$ by using $ST_{ID_{\ell-1}}$. It
    also sets $SI_{ID_{\ell-1}} = \mc{N}_{\ell} \setminus
    RI_{ID_{\ell-1}}$.

    If $0 \leq i \leq x$, the simulator can use the partitioning technique
    of Boneh and Boyen\cite{BonehB04e}. It recalls a random exponent
    $\hat{\beta}_{ID^*_{0}},\ldots, \hat{\beta}_{ID^*_{x}},
    \theta_{ID^*_{0}},\ldots,\theta_{ID^*_{x}}$ and selects a random
    exponent $r'_{0},\ldots,r'_{x} \in Z_p$. It creates level update keys
    by implicitly setting $\{r_{i} = - (-\sum_{j \in SI^*_{ID^*_{i}}
    \setminus SI_{ID_{i}}}$ $a^{N+1-j} + \sum_{j \in SI_{ID_{i}} \setminus
    SI^* _{ID^*_{i}}} a^{N+1-j}) / \Delta T + r'_{i}\}_{0 \leq i \leq x}$
    as
    \begin{align*}
    \Big\{
    LUK_i = \Big(
        U_{i,0} &= g_1^{b} \cdot g_1^{\hat{\beta}_{ID^*_{i}}},~\\
        U_{i,1} &= (g_1^b \cdot g_1^{\hat{\beta}_{ID^*_{i}}})^{\theta_{ID^*_{i}}}
               \Big( \prod_{j \in SI^*_{ID^*_{i}} \setminus SI_{ID_{i}}}
                     g_1^{-a^{N+1-j}}
                     \prod_{j \in SI_{ID_{i}} \setminus SI^*_{ID^*_{i}}}
                     g_1^{a^{N+1-j}}
               \Big)^{-h'_0 / \Delta T}
               H_1(T)^{r'_{i}},~ \\
        U_{i,2} &= \Big( \prod_{j \in SI^*_{ID^*_{i}} \setminus SI_{ID_{i}}}
                     g_1^{-a^{N+1-j}}
                     \prod_{j \in SI_{ID_{i}} \setminus SI^*_{ID^*_{i}}}
                     g_1^{a^{N+1-j}}
           \Big)^{-1 / \Delta T} g_1^{r'_{i}}
          \Big) \Big\}_{0 \leq i \leq x}.
    \end{align*}
    If $x < i \leq y$, it recalls a random exponent $\beta_{ID^*_{x+1}},
    \ldots, \beta_{ID^*_{y}}, \theta_{ID^*_{x+1}}, \ldots,
    \theta_{ID^*_{y}}$ and selects a random exponent $r_{x+1}, \ldots,
    r_{y} \in Z_p$. It creates a level update key as
    \begin{align*}
    \Big\{
    LUK_i = \Big(
        U_{i,0} &= g_1^{\beta_{ID^*_{i}}},~\\
        U_{i,1} &= \Big( g_1^{\theta_{ID_i}} \big( \prod_{j \in SI^*_{ID_i}}
                 g_1^{a^{N+1-j}} \big)^{-1} \prod_{j \in SI_{ID_i}}
                 g_1^{a^{N+1-j}} \Big)^{\beta_{ID^*_i}}
                 H_1(T)^{r_{i}}\\
        U_{i,2} &= g_1^{r_{i}} \Big) \Big\}_{x < i \leq y}.
    \end{align*}
    If $y < i \leq \ell-1$, the simulator can normally generate level
    private keys. It selects random exponents $r_{y+1},\ldots, r_{\ell-1},
    \gamma_{ID_{y+1}}, \ldots, \gamma_{ID_{\ell-1}},
    \beta_{ID_{y+1}},\ldots, \beta_{ID_{\ell-1}} \in \Z_p$ and creates
    level private keys as
    \begin{align*}
    \Big\{
    LUK_i = \Big(
        U_{i,0} = g_1^{\beta_{ID_i}},~
        U_{i,1} = \big( g_1^{\gamma_{ID_i}} \prod_{j \in SI_{ID_i}}
                      g_1^{a^{N+1-j}}
                      \big)^{\beta_{ID_i}} H_1(T)^{r_{i}},~
        U_{i,2} = g_1^{-r_{i}}
                    \Big) \Big\}_{y < i \leq \ell-1}.
    \end{align*}
\item \tb{Case} $T = T^*$ : We have $R = R^*$.
    For each $ID_{i} \in R^*_{ID^*_{i-1}}$, it adds $(ID_{i}, T^*)$ to
    $RL_{ID^*_{i-1}}$ if $(ID_{i}, T') \notin RL_{ID^*_{i-1}}$ for any
    $T' \leq T^*$.

    If $0 \leq i \leq x$, the simulator can use the partitioning technique
    of Boneh et al. \cite{BonehGW05}. It recalls random exponents
    $\hat{\beta}_{ID^*_{0}}, \ldots, \hat{\beta}_{ID^*_x}, \theta_{ID^*_0},
    \ldots, \theta_{ID^*_x}$ and selects a random exponent $r_{0}, \ldots,
    r_{x} \in \Z_p$. It creates level update keys as
    \begin{align*}
    \Big\{
    LUK_i = \Big(
        U_{i,0}  = g_1^b \cdot g_1^{\hat{\beta}_{ID^*_i}},~
        U_{i,1}  = (g_1^b \cdot g_1^{\hat{\beta}_{ID^*_i}})^{\theta_{ID^*_i}}
                   H_1(T^*)^{r_{i,0}},~
        U_{i,2}  = g_1^{-r_{i,0}}
            \Big) \Big\}_{0 \leq i \leq x}.
    \end{align*}
    If $x < i \leq y$, the simulator can use the partitioning technique of
    Boneh et al. \cite{BonehGW05}. It recalls random exponents
    $\beta_{ID^*_{x+1}}, \ldots, \beta_{ID^*_y}, \theta_{ID^*_{x+1}},
    \ldots, \theta_{ID^*_y}$ and selects a random exponent $r_{x+1},
    \ldots, r_{y} \in \Z_p$. It creates level update keys as
    \begin{align*}
    \Big\{
    LUK_i = \Big(
        U_{i,0}  = g_1^{\beta_{ID^*_i}},~
        U_{i,1}  = (g_1^{\beta_{ID^*_i}})^{\theta_{ID^*_i}}
                   H_1(T^*)^{r_{i,0}},~
        U_{i,2}  = g_1^{-r_{i,0}}
            \Big) \Big\}_{x < i \leq y}.
    \end{align*}
    If $y < i \leq \ell-1$, the simulator can normally generate level
    private keys. It selects random exponents $r_{y+1},\ldots, r_{\ell-1},
    \gamma_{ID_{y+1}}, \ldots, \gamma_{ID_{\ell-1}}, \beta_{ID_{y+1}},
    \ldots, \beta_{ID_{\ell-1}} \in \Z_p$ and creates level private keys as
    \begin{align*}
    \Big\{
    LUK_i = \Big(
        U_{i,0} = g_1^{\beta_{ID_i}},~
        U_{i,1} = \big( g_1^{\gamma_{ID_i}} \prod_{j \in SI_{ID_i}}
                      g_1^{a^{N+1-j}}
                      \big)^{\beta_{ID_i}} H_1(T)^{r_{i}},~
        U_{i,2} = g_1^{-r_{i}}
                    \Big) \Big\}_{y < i \leq \ell-1}.
    \end{align*}
\end{itemize}

\noindent If this is a decryption key query for an identity $ID = (I_1,
\ldots, I_{\ell})$  and a time period $T$, then $\mc{B}$ proceeds as follows:
It requests an HIBE private key for $ID$ and $T$ to $\mc{B}_{HIBE}$ and
receives $SK_{HIBE,ID,T}$. Next, it sets the decryption key $DK_{ID,T} =
SK_{HIBE,ID,T}$.

\svs \noindent \textbf{Challenge}: $\mc{A}$ submits two challenge messages
$M_0^*, M_1^*$. $\mc{B}$ chooses a random bit $\delta \in \bits$ and proceed
as follows:
It requests the challenge ciphertext for $ID^*$ and $T^*$ to $\mc{B}_{HIBE}$
and receives $CH_{HIBE, ID^*, T^*}$. Next, it sets the challenge ciphertext
$CT_{ID^*, T^*} = \big( C = Z \cdot  e( e(g_1^a, g_1^{a^N}),
g_1^c)^{\hat{\beta}_{ID_0}} \cdot M_{\delta}^*,~ CH_{HIBE, ID^*, T^*} \big)$.

\svs \noindent \textbf{Phase 2}: Same as Phase 1.

\svs \noindent \textbf{Guess}: Finally, $\mc{A}$ outputs a guess $\delta' \in
\bits$. $\mc{B}$ outputs $0$ if $\delta = \delta'$ or $1$ otherwise.

\vs To finish the proof, we first show that the distribution of the
simulation is correct from Lemma \ref{lem:rhibe-hpu-dist}.
This completes our proof.
\end{proof}

\begin{lemma} \label{lem:rhibe-hpu-dist}
The distribution of the above simulation is correct if $Z = Z_0$, and the
challenge ciphertext is independent of $\delta$ in the adversary's view if $Z
= Z_1$.
\end{lemma}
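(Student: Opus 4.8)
The plan is to prove Lemma~\ref{lem:rhibe-hpu-dist} by walking through every object that $\mc{B}$ hands to $\mc{A}$ -- the public parameters, the private keys, the update keys, the decryption keys, and the challenge ciphertext -- and checking, when $Z = Z_0 = g_3^{a^{N+1}bc}$, that (i) every exponent $\mc{B}$ defines implicitly is uniformly distributed over the correct domain given $\mc{A}$'s view, and (ii) the group elements $\mc{B}$ actually outputs coincide with those that the real algorithms of Section~\ref{sec:rhibe-hpu-scheme} would output on those exponents. Item (i) is immediate for every implicit value: each of $\beta_{ID_i} = b + \hat\beta_{ID_i}$, $\gamma_{ID_x} = \theta_{ID_x} - \sum_{j \in SI_{ID_x}} a^{N+1-j}$, $r_{i,2} = \hat r_{i,2} - a^{N+1}$, and $r_{i,1} = -a/\Delta I_i + r'_{i,1}$ is an information-theoretically hidden shift of a fresh uniform exponent, hence uniform, and each $d_i$ is a fresh element of $\mc{N}$. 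Item (ii) is the substance of the lemma and is verified case by case.

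For the public parameters, $PP_{HIBE}$ and $PP_{PKBE}$ are produced by the sub-simulators $\mc{B}_{HIBE}$ and $\mc{B}_{PKBE}$, hence are correctly distributed by the simulations underlying Theorem~\ref{thm:bb-hibe-indcpa} and Theorem~\ref{thm:bgw-pkbe-indcpa}. For the remaining elements I would simply expand the definitions: with $\alpha = a$ one has $e(g_1^a, g_1^{a^N}) = g_2^{a^{N+1}}$; with $\beta_\epsilon = b + \hat\beta_{ID_0}$ one has $e(g_1, g_1^b g_1^{\hat\beta_{ID_0}}) = g_2^{\beta_\epsilon}$; and $e(e(g_1^a,g_1^{a^N}),g_1^b)\cdot e(e(g_1^a,g_1^{a^N}),g_1^{\hat\beta_{ID_0}}) = g_3^{a^{N+1}(b+\hat\beta_{ID_0})} = \Omega$, so $PP$ is distributed as in the real game.

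The heart of the argument is the key simulation. For a decryption-key query the check is short: by the correctness computation of Section~\ref{sec:rhibe-hpu-scheme} a genuine $DK_{ID,T}$ is exactly a freshly re-randomised HIBE private key for $(ID,T)$, and $\mc{B}$ returns one obtained from $\mc{B}_{HIBE}$, which is correctly distributed by Theorem~\ref{thm:bb-hibe-indcpa}. For private and update keys I would, in each sub-case of the \tb{Phase 1} description, substitute the implicit assignments into the real formulas for $LSK_i$ and $LUK_i$ and observe that the monomials $\mc{B}$ cannot compute -- namely $g_1^{a^{N+1}}$ and $g_2^{a^{N+1}b}$ -- always cancel. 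Concretely: in $R_{i,1} = g_2^{\alpha^{N+1}\beta_{ID_i}} g_2^{\beta_{ID_{i-1}}r_{i,2}}$ the two occurrences of $g_2^{a^{N+1}b}$, one from $a^{N+1}(b+\hat\beta_{ID_i})$ and one from $(b+\hat\beta_{ID_{i-1}})(\hat r_{i,2}-a^{N+1})$, cancel, leaving precisely $(g_2^{a^{N+1}})^{\hat\beta_{ID_i}}\, e(g_1^b, g_1^{\hat r_{i,2}})\, g_2^{\hat r_{i,2}\hat\beta_{ID_{i-1}}}\, (g_2^{a^{N+1}})^{-\hat\beta_{ID_{i-1}}}$, which is what $\mc{B}$ writes; and in $K_{i,0} = g_1^{\alpha^{d_i}\gamma_{ID_{i-1}}} F_{1,i}(CI_i)^{-r_{i,1}} g_1^{-r_{i,2}}$ the factor $g_1^{a^{N+1}}$ hidden in $-r_{i,2} = a^{N+1}-\hat r_{i,2}$ cancels the $j = d_i$ term $g_1^{-a^{N+1}}$ of $-a^{d_i}\sum_{j\in SI_{ID_{i-1}}} a^{N+1-j}$, which is exactly why $\mc{B}$ writes the product over $j \neq d_i$. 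This cancellation is available precisely when $d_i \in SI_{ID_{i-1}}$, i.e. when the ancestor $ID_i$ is non-revoked, which is the situation in every branch that uses it; in the complementary branches ($i=x$, $x<i\leq\ell-1$, and $i=\ell$ with $ID_\ell$ revoked) the BGW partitioning of \cite{BonehGW05} supplies $K_{BE} = g_1^{\alpha^{d_i}\gamma_{ID_{i-1}}}$ directly because $d_i \notin SI^*_{ID_{i-1}}$, and for the non-revoked $i=\ell$ the Boneh--Boyen partitioning \cite{BonehB04e} with $r_{\ell,1} = -a/\Delta I_\ell + r'_{\ell,1}$ handles the $F_{1,\ell}$ term, where $\Delta I_\ell \not\equiv 0 \bmod p$ is guaranteed by the restriction of Definition~\ref{def:rhibe-hpu-srlind}. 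The update-key sub-cases are analogous: substituting $\gamma_{ID_i} = \theta_{ID_i} - \sum_{j\in SI^*_{ID_i}} a^{N+1-j}$ into $U_{i,1} = (g_1^{\gamma_{ID_i}}\prod_{j\in SI_{ID_i}} g_1^{a^{N+1-j}})^{\beta_{ID_i}} H_1(T)^{r_i}$ collapses the base exponent to $\theta_{ID_i} - \sum_{j\in SI^*_{ID_i}\setminus SI_{ID_i}} a^{N+1-j} + \sum_{j\in SI_{ID_i}\setminus SI^*_{ID_i}} a^{N+1-j}$, whose excess monomials $\mc{B}$ absorbs into $H_1(T)^{r_i}$ via the Boneh--Boyen time trick when $T\neq T^*$, and which vanish when $T = T^*$ because then $R = R^*$ forces $SI_{ID_i} = SI^*_{ID_i}$.

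Finally, for the challenge ciphertext: $CH_{HIBE}$ is produced by $\mc{B}_{HIBE}$ with $s = c$, correctly distributed by Theorem~\ref{thm:bb-hibe-indcpa}, and a real $C$-component equals $\Omega^c M^*_\delta = g_3^{a^{N+1}(b+\hat\beta_{ID_0})c} M^*_\delta$. When $Z = Z_0$ we have $Z\cdot e(e(g_1^a,g_1^{a^N}),g_1^c)^{\hat\beta_{ID_0}}\cdot M^*_\delta = g_3^{a^{N+1}bc}\, g_3^{a^{N+1}\hat\beta_{ID_0}c}\, M^*_\delta$, which is exactly that value, so the whole transcript has the real SRL-IND distribution and $\Pr[\delta=\delta']$ differs from $\frac{1}{2}$ by $\mc{A}$'s SRL-IND advantage; when $Z = Z_1$ is uniform in $\G_3$ and independent of $D$, the factor $Z_1$ makes $C$ uniform and independent of $\delta$ while nothing else in the transcript depends on $\delta$, so $\Pr[\delta=\delta'] = \frac{1}{2}$. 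Taking the difference gives $\Adv_{\mc{B}}^{(3,N)\text{-}MDHE}(1^\lambda) = \Adv_{RHIBE,\mc{A}}^{SRL\text{-}IND}(1^\lambda)$, which finishes Theorem~\ref{thm:rhibe-hpu-srlind}. I expect the main obstacle to be the bookkeeping across the many key-simulation sub-cases -- in particular, confirming that each SRL-IND restriction (a prefix of $ID^*$ on the challenge path is revoked by time $T^*$; $I_\ell \neq I^*_\ell$ when delegating into a non-revoked chain on the challenge path; the challenge decryption key is never queried; the revoked set at $T^*$ is pinned to $R^*$) is exactly what makes the required cancellation or partitioning available in the branch where it is needed. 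The algebra inside any single case is routine once the correct implicit substitution is in place.
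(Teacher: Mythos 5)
Your proposal is correct and follows essentially the same route as the paper's own proof: substitute the implicit assignments ($\beta_{ID_i} = b + \hat\beta_{ID_i}$, $\gamma_{ID_{i-1}} = \theta_{ID_{i-1}} - \sum_{j} a^{N+1-j}$, $r_{i,2} = \hat r_{i,2} - a^{N+1}$, $r_{\ell,1} = -a/\Delta I_\ell + r'_{\ell,1}$) into the real key formulas case by case, check that the uncomputable monomials $g_1^{a^{N+1}}$ and $g_2^{a^{N+1}b}$ cancel exactly where the simulator relies on them, and verify the challenge component equals $\Omega^c M^*_\delta$ when $Z = Z_0$ and is uniform when $Z = Z_1$. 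The specific cancellations you identify in $R_{i,1}$, $K_{i,0}$, and $U_{i,1}$ are precisely the computations the paper writes out in full, so the remaining work is only the bookkeeping you already anticipate.
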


\begin{proof}

We show that the distribution of private keys is correct. In case of
$ID_{\ell} \in R^*_{ID^*_{\ell-1}}$ and $ID_{\ell-1} \in
\tb{Prefix}(ID^*_{\ell^*})$, we have that the private key is correctly
distributed from the setting $\{r_{i, 2} = \hat{r}_{i,2} - a^{N+1}\}_{1 \leq
i < x}$, $\{\beta_{ID^*_{i}} = b + \hat{\beta}_ {ID^*_{i}}\}_{0 \leq i < x}$,
and $\{\gamma_{ID^*_{i-1}} = \theta_{ID^*_{i-1}} - \sum_{j \in
SI_{ID^*_{i-1}}} a^{N+1-j}\}_{1 \leq i < x}$ as the following equation
    \begin{align*}
    \{
    K_{i, 0} &= g_1^{\alpha^{d_{i}} \gamma_{ID^*_{i-1}}} F_{1,i}(I_i)
                        ^{-r_{i,1}} \cdot g_1^{-r_{i,2}}
              = g_1^{a^{d_{i}} \theta_{ID^*_{i-1}}} \prod_{j \in
                     SI^*_{ID_{i-1}}} g^{-a^{N+1-j+d_{i}}}
                     F_{1,i}(I_i)^{-r_{i,1}} \cdot g_1^{-\hat{r}_{i,2} + a^{N+1}}\\
                &= g_1^{a^{d_{i}} \theta_{ID^*_{i-1}}} \prod_{j \in SI^*_{ID_{i-1}}
                       \setminus \{ d_{i} \} } g_1^{-a^{N+1-j+d_{i}}}
                       \cdot g_1^{-a^{N+1}} F_{1,i}(I_i)^{-r_{i,1}} \cdot
                       g_1^{-\hat{r}_{i,2}} \cdot g_1^{a^{N+1}} \\
                &= \big( g_1^{a^{d_{i}}} \big)^{\theta_{ID^*_{i-1}}}
                \big( \prod_{j \in SI^*_{ID_{i-1}} \setminus \{ d_{i} \}}
                g_1^{a^{N+1-j+d_{i}}} \big)^{-1} F_{1,i}(I_i)^{-r_{i,1}}
                g_1^{-\hat{r}_{i,2}},~\\
    R_{i, 0} &= g_2^{\beta_{ID^*_{i-1}}} = g_2^{b + \hat{\beta}_{ID^*_{i-1}}}
              = e(g_1^b \cdot g_1^{\hat{\beta}_{ID^*_{i-1}}}, g_1),~\\
    R_{i, 1} &= g_2^{\alpha^{N+1}\beta_{ID_i}} \cdot g_2^{\beta_{ID_{i-1}}r_{i,2}}
              = g_2^{a^{N+1}(b + \hat{\beta}_{ID^*_{i}})} \cdot
                g_2^{(b + \hat{\beta}_{ID^*_{i-1}})(\hat{r}_{i,2} - a^{N+1})} \\
             &= g_2^{a^{N+1}b + a^{N+1}\hat{\beta}_{ID^*_{i}}} \cdot
                g_2^{\hat{r}_{i,2}b - a^{N+1}b + \hat{r}_{i,2}\hat{\beta}_{ID^*_{i-1}}
                     - a^{N+1}\hat{\beta}_{ID^*_{i-1}}} \\
             &= (g_2^{a^{N+1}})^{\hat{\beta}_{ID^*_{i}}} \cdot
                 e(g_1^b, g_1^{\hat{r}_{i,2}}) \cdot
                 g_2^{\hat{r}_{i,2}\hat{\beta}_{ID^*_{i-1}}} \cdot
                (g_2^{a^{N+1}})^{-\hat{\beta}_{ID^*_{i-1}}} \}_{1 \leq i < x}.
    \end{align*}
    \begin{align*}
    \{
    R_{i, 1} &= g_2^{\alpha^{N+1}\beta_{ID_i}} \cdot g_2^{\beta_{ID_{i-1}}r_{i,2}}
              = e(g_1^{\alpha^{N}}, g_1^{\alpha})^{\beta_{ID_i}} \cdot
                e(g_1^{\beta_{ID_{i-1}}}, g_1^{r_{i,2}})
              = e(g_1^{a^{N}}, g_1^{a})^{\beta_{ID_i}} \cdot
                e(g_1^{b} \cdot g_1^{\hat{\beta}_{ID^*_{i-1}}}, g_1^{r_{i,2}})
                \}_{i = x}.
    \end{align*}
In case of $ID_{\ell} \notin R^*_{ID^*_{\ell-1}}$ and $ID_{\ell-1} \in
\tb{Prefix} (ID^*_{\ell^*})$, we have that the private key is correctly
distributed from the setting $\{r_{i, 2} = \hat{r}_{i,2} - a^{N+1}\}_{1 \leq
i < \ell}$, $\{\beta_{ID^*_{i}} = b + \hat{\beta}_ {ID^*_{i}}\}_{0 \leq i <
\ell}$, $r_{\ell, 1} = -a / \Delta I_{\ell} + r'_{\ell, 1}$, and
$\{\gamma_{ID^*_{i-1}} = \theta_{ID^*_{i-1}} - \sum_{j \in SI_{ID^*_{i-1}}}
a^{N+1-j}\}_{0 \leq i < \ell}$ as the following equation
    \begin{align*}
    \{
    K_{i, 0} &= g_1^{\alpha^{d_{i}} \gamma_{ID^*_{i-1}}} F_{1,i}(I_i)
                        ^{-r_{i,1}} \cdot g_1^{-r_{i,2}}
              = g_1^{a^{d_{i}} \theta_{ID^*_{i-1}}} \prod_{j \in
                     SI^*_{ID_{i-1}}} g^{-a^{N+1-j+d_{i}}}
                     F_{1,i}(I_i)^{-r_{i,1}} \cdot g_1^{-\hat{r}_{i,2} + a^{N+1}}\\
                &= g_1^{a^{d_{i}} \theta_{ID^*_{i-1}}} \prod_{j \in SI^*_{ID_{i-1}}
                       \setminus \{ d_{i} \} } g_1^{-a^{N+1-j+d_{i}}}
                       \cdot g_1^{-a^{N+1}} F_{1,i}(I_i)^{-r_{i,1}} \cdot
                       g_1^{-\hat{r}_{i,2}} \cdot g_1^{a^{N+1}} \\
                &= \big( g_1^{a^{d_{i}}} \big)^{\theta_{ID^*_{i-1}}}
                \big( \prod_{j \in SI^*_{ID_{i-1}} \setminus \{ d_{i} \}}
                g_1^{a^{N+1-j+d_{i}}} \big)^{-1} F_{1,i}(I_i)^{-r_{i,1}}
                g_1^{-\hat{r}_{i,2}},~\\
    R_{i, 0} &= g_2^{\beta_{ID^*_{i-1}}} = g_2^{b + \hat{\beta}_{ID^*_{i-1}}}
              = e(g_1^b \cdot g_1^{\hat{\beta}_{ID^*_{i-1}}}, g_1),~\\
    R_{i, 1} &= g_2^{\alpha^{N+1}\beta_{ID_i}} \cdot g_2^{\beta_{ID_{i-1}}r_{i,2}}
              = g_2^{a^{N+1}(b + \hat{\beta}_{ID^*_{i}})} \cdot
                g_2^{(b + \hat{\beta}_{ID^*_{i-1}})(\hat{r}_{i,2} - a^{N+1})} \\
             &= g_2^{a^{N+1}b + a^{N+1}\hat{\beta}_{ID^*_{i}}} \cdot
                g_2^{\hat{r}_{i,2}b - a^{N+1}b + \hat{r}_{i,2}\hat{\beta}_{ID^*_{i-1}}
                     - a^{N+1}\hat{\beta}_{ID^*_{i-1}}} \\
             &= (g_2^{a^{N+1}})^{\hat{\beta}_{ID^*_{i}}} \cdot
                 e(g_1^b, g_1^{\hat{r}_{i,2}}) \cdot
                 g_2^{\hat{r}_{i,2}\hat{\beta}_{ID^*_{i-1}}} \cdot
                (g_2^{a^{N+1}})^{-\hat{\beta}_{ID^*_{i-1}}} \}_{1 \leq i < \ell}.
    \end{align*}
    \begin{align*}
    \{
    K_{i, 0} &= g_1^{\alpha^{d_{ID_{i-1}}} \gamma_{ID_{x-1}}} F_{1,i}(I_i)
               ^{-r_{i,1}}
             = g_1^{a^{d_{i}} \theta_{ID_{i-1}}} \prod_{j \in SI^*
                  _{ID_{i-1}}}
               g^{-a^{N+1-j+d_{i}}}
               \big( f_{1,0} \prod_{j=1}^l f_{1,i,I_i[j]} \big)^{-r_{i,1}} \\
            &= g_1^{a^{d_{i}} \theta_{ID_{i-1}}} \prod_{j \in SI^*
                  _{ID_{i-1}}
               \setminus \{ d_{i} \} } g_1^{-a^{N+1-j+d_{i}}}
               \cdot g_1^{-a^{N+1}} \big( g_1^{f'_0} g_1^{a^N \Delta I_i}
               \big)^{a / \Delta I_1 - r'_{{i-1},1}} \\
            &= g_1^{a^{d_{i}} \theta_{ID_{i-1}}} \prod_{j \in SI^*_
                  {ID_{i-1}} \setminus
               \{ d_{i} \} } g_1^{-a^{N+1-j+d_{i}}}
               (g_1^a)^{f'_0 / \Delta
               I_i} F_{1,i}(I_i)^{-r'_{i,1}},~\\
    K_{i, 1} &= g_1^{r_{i,1}}
             = (g_1^a)^{-1 / \Delta I_i} g_1^{r'_{i,1}} \}_{i = \ell}.
    \end{align*}

Next, we show that the distribution of update keys is correct. In case of $T
\neq T^*$, we have that the update key is correctly distributed from the
setting $\{r_{i} = - (-\sum_{j \in SI^*_{ID^*_{i}} \setminus SI_{ID_{i}}}$
$a^{N+1-j} + \sum_{j \in SI_{ID_{i}} \setminus SI^*_{ID^*_{i}}} a^{N+1-j}) /
\Delta T + r'_{i}\}_{0 \leq i \leq x}$, $\{\beta_{ID^*_{i}} = b +
\hat{\beta}_{ID^*_{i}}\} _{0 \leq i \leq x}$, and $\{\gamma_{ID^*_{i}} =
\theta_{ID^*_{i}} - \sum_{j \in SI_{ID^*_{i}}} a^{N+1-j}\}_{1 \leq i \leq y}$
as the following equation
    \begin{align*}
    \{
    U_{i,0} = &g_1^{\beta_{ID^*_i}} = g_1^b \cdot g_1^{\hat{\beta}_{ID^*_{i}}}, \\
    U_{i,1} = &\big( g_1^{\gamma_{ID_i}} \prod_{j \in SI_{ID_i}}
                 g_1^{\alpha^{N+1-j}} \big)^{\beta_{ID_{i-1}}} H_1(T)^{r_{i+1,0}}\\
            = &\Big( g_1^{\theta_{ID_i}} \big( \prod_{j \in SI^*_{ID_i}}
                 g_1^{a^{N+1-j}} \big)^{-1} \prod_{j \in SI_{ID_i}}
                 g_1^{a^{N+1-j}} \Big)^{b + \hat{\beta}_{ID^*_i}}
                 \big( h_{1,0} \prod_{k=1}^t h_{1,k,T[k]} \big)^{r_{i+1,0}} \db \\
            = &(g_1^{b + \hat{\beta}_{ID^*_i}})^{\theta_{ID_i}}
                 \Big( \prod_{j \in SI^*_{ID_i}
                 \setminus SI_{ID_i}} g_1^{-a^{N+1-j}}
                 \prod_{j \in SI_{ID_i} \setminus SI^*_{ID_i}}
                 g_1^{a^{N+1-j}} \Big)^{b + \hat{\beta}_{ID^*_i}}\\
                &\cdot
                 \big( g_1^{h'_0} g_1^{b \Delta T} \big)^{-( -\sum_{j
                 \in SI^*_{ID_i} \setminus SI_{ID_i}} a^{N+1-j} +
                 \sum_{j \in SI_{ID_i} \setminus SI^*_{ID_i}} a^{N+1-j} )
                 / \Delta T + r'_{i+1,0} } \\
            = &(g_1^b \cdot g_1^{\hat{\beta}_{ID^*_i}})^{\theta_{ID^*_i}}
           \Big( \prod_{j \in SI^*_{ID^*_i} \setminus SI_{ID_i}} g_1^{-a^{N+1-j}}
                 \prod_{j \in SI_{ID_i} \setminus SI^*_{ID^*_i}} g_1^{a^{N+1-j}}
           \Big)^{-h'_0 / \Delta T}
           H_1(T)^{r'_{i,0}},~ \db \\
    U_{i,2} = &g_1^{r_{i,0}}
             = \Big( \prod_{j \in SI^*_{ID^*_i} \setminus SI_{ID_i}} g_1^{-a^{N+1-j}}
                 \prod_{j \in SI_{ID_i} \setminus SI^*_{ID^*_i}} g_1^{a^{N+1-j}}
           \Big)^{-1 / \Delta T} g_1^{r'_{i,0}} \}_{0 \leq i \leq x}.
    \end{align*}

    \begin{align*}
    \{
    U_{i,1} = &\big( g_1^{\gamma_{ID^*_{i}}} \prod_{j \in SI_{ID_i}}
                 g_1^{\alpha^{N+1-j}} \big)^{\beta_{ID^*_i}} H_1(T)^{r_{i}}
            = \Big( g_1^{\theta_{ID_i}} \big( \prod_{j \in SI^*_{ID_i}}
                 g_1^{a^{N+1-j}} \big)^{-1} \prod_{j \in SI_{ID_i}}
                 g_1^{a^{N+1-j}} \Big)^{\beta_{ID^*_i}}
                 H_1(T)^{r_{i}} \}_{x < i \leq y}.
    \end{align*}
In case of $T = T^*$, we have that the update key is correctly distributed
from the setting $\{\beta_{ID^*_i} = b + \hat{\beta}_{ID^*_i}\}_{0 \leq i <
x}$ and $\{\gamma_{ID_i} = \theta_{ID_i} - \sum_{j \in SI^*_{ID_i}} a^{N+1-j}
\}_{0 \leq i \leq y}$ as the following equation
    \begin{align*}
    \{
    U_{i,1} &= \big( g_1^{\gamma_{ID_i}} \prod_{j \in SI_{ID_i}^*}
                     g_1^{\alpha^{N+1-j}} \big)^{\beta_{ID^*_i}}
                     \cdot H_1(T^*)^{r_{i}}
             = \Big( g_1^{\theta_{ID_i}} \big( \prod_{j \in SI^*_{ID_i}}
                     g_1^{a^{N+1-j}} \big)^{-1}
                     \cdot \prod_{j \in SI^*_{ID_i}} g_1^{a^{N+1-j}} \Big)
                     ^{b + \hat{\beta}_{ID_i}}
                     \cdot H_1(T^*)^{r_{i}} \\
            &= (g_1^b \cdot g_1^{\beta{ID_i}})^{\theta_{ID_i}} H_1(T^*)^{r_{i,0}}~
    \}_{1 \leq i \leq x-1},~\\
    \{
    U_{i,1} &= \big( g_1^{\gamma_{ID_i}} \prod_{j \in SI_{ID_i}^*}
                     g_1^{\alpha^{N+1-j}} \big)^{\beta_{ID^*_i}} \cdot H_1(T^*)^{r_{i}}
             = \Big( g_1^{\theta_{ID_i}} \big( \prod_{j \in SI^*_{ID_i}}
                     g_1^{a^{N+1-j}} \big)^{-1}
                     \cdot \prod_{j \in SI^*_{ID_i}} g_1^{a^{N+1-j}}
               \Big)^{\beta_{ID^*_i}}
                     \cdot H_1(T^*)^{r_{i}} \\
         &= (g_1^{\beta_{ID^*_i}})^{\theta_{ID_i}} H_1(T^*)^{r_{i}}\}_{x < i \leq y}.
    \end{align*}

Finally, we show that the distribution of the challenge ciphertext is
correct. If $Z = Z_0 = g_3^{a^{N+1}bc}$ is given, then the challenge
ciphertext is correctly distributed as the following equation
    \begin{align*}
    C &= \Omega^s \cdot M_{\delta}^*
      = \big( e( e(g_1^a, g_1^{a^N}), g_1^b) \cdot e( e(g_1^a, g_1^{a^N}),
              g_1^{\hat{\beta}_{ID_0}}) \big)^c \cdot M_{\delta}^* \\
      &= e( e(g_1^a, g_1^{a^N}), g_1^b)^c \cdot e( e(g_1^a, g_1^{a^N}),
              g_1^{\hat{\beta}_{ID_0}})^c \cdot M_{\delta}^* \\
      &= Z_0 \cdot  e( e(g_1^a, g_1^{a^N}), g_1^c )^{\hat{\beta}_{ID_0}}
         \cdot M_{\delta}^*
    \end{align*}
Otherwise, the component $C$ of the challenge ciphertext is independent of
$\delta$ in the $\mc{A}$'s view since $Z_1$ is a random element in $\G_3$.
This completes our proof.
\end{proof}

\section{Revocable HIBE with History-Free Updates} \label{sec:rhibe-hfu}

In this section, we first define the syntax and the security model of RHIBE
with history-free updates. Next, we propose another RHIBE scheme with short
private key and prove its security.

\subsection{Definition} \label{sec:rhibe-hfu-syntax}

\begin{definition}[Revocable HIBE: History-Free Update]
A revocable HIBE (RHIBE) scheme that is associated with the identity space
$\mc{I}$, the time space $\mc{T}$, and the message space $\mc{M}$, consists
of seven algorithms \tb{Setup}, \tb{GenKey}, \tb{UpdateKey}, \tb{DeriveKey},
\tb{Encrypt}, \tb{Decrypt}, and \tb{Revoke}, which are defined as follows:
\begin{description}
\item \tb{Setup}($1^\lambda, N, L$): The setup algorithm takes as input a
    security parameter $1^{\lambda}$, the maximum number $N$ of users in
    each depth, and the maximum depth $L$ of the identity. It outputs a
    master key $MK$, a revocation list $RL_{\epsilon}$, a state
    $ST_{\epsilon}$, and public parameters $PP$.

\item \tb{GenKey}($ID|_{\ell}, ST_{ID|_{\ell-1}}, PP$): The private key
    generation algorithm takes as input a hierarchical identity $ID|_{\ell}
    = (I_1, \ldots, I_{\ell}) \in \mc{I}^{\ell}$, a state
    $ST_{ID|_{\ell-1}}$, and public parameters $PP$. It outputs a private
    key $SK_{ID|_{\ell}}$ and updates the state $ST_{ID|_{\ell-1}}$.

\item \tb{UpdateKey}($T, RL_{ID|_{\ell-1}}, DK_{ID|_{\ell-1},T},
    ST_{ID|_{\ell-1}}, PP$): The update key generation algorithm takes as
    input update time $T \in \mc{T}$, a revocation list
    $RL_{ID|_{\ell-1}}$, a decryption key $DK_{ID|_{\ell-1},T}$, a state
    $ST_{ID|_{\ell-1}}$, and the public parameters $PP$. It outputs an
    update key $UK_{T,R|_{ID_{\ell-1}}}$ for $T$ and $R_{ID|_{\ell-1}}$
    where $R_{ID|_{\ell-1}}$ is the set of revoked identities at the time
    $T$.

\item \tb{DeriveKey}($SK_{ID|_{\ell}}, UK_{T,R_{ID|_{\ell-1}}}, PP$): The
    decryption key derivation algorithm takes as input a private key
    $SK_{ID|_{\ell}}$, an update key $UK_{T,R_{ID|_{\ell-1}}}$, and the
    public parameters $PP$. It outputs a decryption key $DK_{ID|_{\ell},T}$
    or $\perp$.

\item \tb{Encrypt}($ID|_{\ell}, T, M, PP$): The encryption algorithm takes
    as input a hierarchical identity $ID|_{\ell} = (I_1, \ldots, I_{\ell})
    \in \mc{I}$, time $T$, a message $M \in \mc{M}$, and the public
    parameters $PP$. It outputs a ciphertext $CT_{ID|_{\ell},T}$ for
    $ID|_{\ell}$ and $T$.

\item \tb{Decrypt}($CT_{ID|_{\ell},T}, DK_{ID'|_{\ell},T'}, PP$): The
    decryption algorithm takes as input a ciphertext $CT_{ID|_{\ell},T}$, a
    decryption key $DK_{ID'|_{\ell},T'}$, and the public parameters $PP$.
    It outputs an encrypted message $M$ or $\perp$.

\item \tb{Revoke}($ID|_{\ell}, T, RL_{ID|_{\ell-1}}, ST_{ID|_{\ell-1}}$):
    The revocation algorithm takes as input a hierarchical identity
    $ID|_{\ell}$ and revocation time $T$, a revocation list
    $RL_{ID|_{\ell-1}}$, and a state $ST_{ID|_{\ell-1}}$. It updates the
    revocation list $RL_{ID|_{\ell-1}}$.
\end{description}
The correctness property of RHIBE is defined as follows: For all $MK$, $PP$
generated by $\tb{Setup} (1^{\lambda}, N, L)$, $SK_{ID_{\ell}}$ generated by
$\tb{GenKey} (ID|_{\ell}, ST_{ID|_{\ell-1}}, PP)$ for any $ID|_{\ell}$,
$UK_{T,R_{ID|_{\ell-1}}}$ generated by $\tb{UpdateKey}$ $(T,
RL_{ID|_{\ell-1}}, \lb DK_{ID|_{\ell-1},T}, ST_{ID|_{\ell-1}}, PP)$ for any
$T$ and $RL_{ID|_{\ell-1}}$, $CT_{ID|_{\ell},T}$ generated by $\tb{Encrypt}
(ID|_{\ell}, T, M, PP)$ for any $ID|_{\ell}$, $T$, and $M$, it is required
that
\begin{itemize}
\item If $(ID|_{\ell} \notin R_{ID|_{\ell-1}})$, then $\tb{DeriveKey}
    (SK_{ID|_{\ell}}, UK_{T,R_{ID|_{\ell-1}}}, PP) = DK_{ID|_{\ell},T}$.

\item If $(ID|_{\ell} \in R_{ID|_{\ell-1}})$, then $\tb{DeriveKey}
    (SK_{ID|_{\ell}}, UK_{T,R_{ID|_{\ell-1}}}, PP) = \perp$ with all but
    negligible probability.

\item If $(ID|_{\ell} = ID'|_{\ell}) \wedge (T = T')$, then $\tb{Decrypt}
    (CT_{ID|_{\ell},T}, DK_{ID'|_{\ell},T'}, PP) = M$.

\item If $(ID|_{\ell} \neq ID'|_{\ell}) \vee (T \neq T')$, then
    $\tb{Decrypt} (CT_{ID|_{\ell},T}, DK_{ID'|_{\ell},T'}, PP) = \perp$
    with all but negligible probability.
\end{itemize}
\end{definition}

\subsection{Construction} \label{sec:rhibe-hfu-scheme}

Our RHIBE scheme from three-leveled multilinear maps is described as follows:

\begin{description}
\item [\tb{RHIBE.Setup}($1^\lambda, N, L$):] Let $N$ be the maximum number
    users in each depth and $L$ be the maximum depth of the hierarchical
    identity.
    \begin{enumerate}
    \item It first generates a multilinear group $\vec{\G} = (\G_1, \G_2,
        \G_3)$ of prime order $p$. Let $GDS_{MLM} = (p, \vec{\G}, \{
        e_{1,1}, e_{1,2}, e_{2,1} \}, g_1, g_2, g_3)$ be the description
        of the multilinear group where $g_1, g_2, g_3$ are generators of
        $\G_1, \G_2, \G_3$ respectively.

    \item It obtains $MK_{HIBE}, PP_{HIBE}$ by running $\tb{HIBE.Setup}
        (GDS_{MLM}, N, L)$. It also obtains $MK_{BE} = (\alpha, \gamma),
        PP_{BE}$ by running $\tb{PKBE.Setup}(GDS_{MLM}, N)$.

    \item It selects a random exponent $\beta_{\epsilon} \in \Z_p$ and
        saves $(\beta_{\epsilon}, \gamma_{\epsilon})$ to $ST_{\epsilon}$
        where $\beta_{\epsilon} = \beta_{ID|_0}$ and $\gamma_{\epsilon} =
        \gamma_{ID_0} = \gamma$. It outputs a master key $MK = \alpha$,
        an empty revocation list $RL_{\epsilon}$, a state
        $ST_{\epsilon}$, and public parameters
        \begin{align*}
        PP &= \Big(
           GDS_{MLM},~ PP_{HIBE},~ PP_{BE},~
           g_2^{\alpha^{N+1}},~ g_2^{\beta_{\epsilon}},~
           \Omega = g_3^{\alpha^{N+1} \beta_{\epsilon}}
        \Big).
        \end{align*}
    \end{enumerate}

\item [\tb{RHIBE.GenKey}($ID|_{\ell}, ST_{ID|_{\ell-1}}, PP$):] Let
    $ID|_{\ell} = (I_1, \ldots, I_{\ell}) \in  \mc{I}^{\ell}$. It proceeds
    as follows:
    \begin{enumerate}
    \item If a tuple $(\beta_{ID|_{\ell-1}}, \gamma_{ID|_{\ell-1}})$
        exist in $ST_{ID|_{\ell-1}}$, then it retrieves
        $(\beta_{ID|_{\ell-1}}, \gamma_{ID|_{\ell-1}})$ from
        $ST_{ID|_{\ell-1}}$. Otherwise, it selects random exponents
        $\beta_{ID|_{\ell-1}}, \gamma_{ID|_{\ell-1}} \in \Z_p$ and saves
        $(\beta_{ID|_{\ell-1}}, \gamma_{ID|_{\ell-1}})$ to
        $ST_{ID|_{\ell-1}}$.

    \item It assigns a unique index $d_{\ell} \in \mc{N}$ to the identity
        $ID|_\ell$ and adds a tuple $(ID|_\ell, d_\ell)$ to
        $ST_{ID|_{\ell-1}}$. It obtains a private key $SK_{BE,d_\ell} =
        K_{BE}$ by running $\tb{PKBE.GenKey} (d_\ell,
        \gamma_{ID|_{\ell-1}}, PP_{BE})$.
        Next, it selects a random exponent $r_{\ell,1} \in \Z_p$ and
        creates a level private key
        \begin{align*}
        LSK_{\ell} = \Big(
            K_{\ell,0} = K_{BE} \cdot F_{1,\ell}(I_{\ell})^{-r_{\ell,1}},~
            K_{\ell,1} = g_1^{-r_{\ell,1}}
        \Big) \in \G_1^2.
        \end{align*}

    \item Finally, it outputs a private key $SK_{ID|_\ell} = \big( \{
        d_\ell, LSK_\ell \} \big)$.
    \end{enumerate}

\item [\tb{RHIBE.UpdateKey}($T, RL_{ID|_{\ell-1}}, DK_{ID|_{\ell-1}, T},
    ST_{ID|_{\ell-1}}, PP$):] Let $DK_{ID|_{\ell-1},T} = (D_0, \{ D_i
    \}_{i=1}^{\ell-1}, D_{L+1})$ where $\ell \geq 1$. It proceeds as
    follows:
    \begin{enumerate}
    \item It defines a revoked set $R_{ID|_{\ell-1}}$ of user identities
        at time $T$ from $RL_{ID|_{\ell-1}}$. From $R_{ID|_{\ell-1}}$, it
        defines a revoked index set $RI_{ID|_{\ell-1}} \subseteq \mc{N}$
        by using $ST_{ID|_{\ell-1}}$ since $ST_{ID|_{\ell-1}}$ contains
        $(ID|_\ell, d_{\ell})$. After that, it defines a non-revoked
        index set $SI_{ID|_{\ell-1}} = \mc{N} \setminus
        RI_{ID|_{\ell-1}}$.

    \item It retrieves $(\beta_{ID|_{\ell-1}}, \gamma_{ID|_{\ell-1}})$
        from $ST_{ID|_{\ell-1}}$. It obtains $CH_{BE} = (E_0, E_1)$ by
        running $\tb{PKBE.Encrypt} \lb (SI_{ID|_{\ell-1}},
        \beta_{ID|_{\ell-1}}, Y_{ID|_{\ell-1}} =
        g_1^{\gamma_{ID|_{\ell-1}}}, PP_{BE})$. Next, it selects a random
        exponent $r_{\ell-1} \in \Z_p$ and creates a level update key
        \begin{align*}
        LUK_{\ell-1} = \Big(
            U_{\ell-1,0} = E_0,~
            U_{\ell-1,1} = E_1 \cdot H_1(T)^{r_{\ell-1}},~
            U_{\ell-1,2} = g_1^{-r_{\ell-1}}
        \Big) \in \G_1^3.
        \end{align*}

    \item If $\ell = 1$, then $DK_{ID|_{\ell-1},T} = MK$ and creates a
        partial decryption key $PDK_0 = \big( P_0 = 1_{\G_2},~ P_{L+1} =
        1_{\G_2} \big)$. Otherwise ($\ell \geq 2$), then it creates a
        partial decryption key
        \begin{align*}
        PDK_{\ell-1} = \Big(
            P_0 = D_0 \cdot \big( g_2^{\alpha^{N+1}} \big)^{-\beta_{ID|_{\ell-1}}},~
            \big\{ P_i = D_i \big\}_{i=1}^{\ell-1},~
            P_{L+1} = D_{L+1}
        \Big) \in \G_2^{\ell+2}.
        \end{align*}

    \item Finally, it outputs an update key $UK_{T,R_{ID|_{\ell-1}}} =
        \big( PDK_{\ell-1}, \{ SI_{ID|_{\ell-1}}, LUK_{\ell-1} \} \big)$.
   \end{enumerate}

\item [\tb{RHIBE.DeriveKey}($SK_{ID|_{\ell}}, UK_{T,R_{ID|_{\ell-1}}},
    PP$):] Let $SK_{ID|_\ell} = ( \{ d_\ell, LSK_\ell \} )$ where $LSK_\ell
    = ( K_0, K_1 )$ and $\ell \geq 1$, and $UK_{T,R_{ID|_{\ell-1}}} = (
    PDK_{\ell-1}, \{ SI_{ID|_{\ell-1}}, LUK_{\ell-1} \} )$ where
    $PDK_{\ell-1} = ( P_0, \{ P_i \}_{i=1}^{\ell-1}, P_{L+1} )$ and
    $LUK_{\ell-1} = ( U_0, U_1, U_2 )$.
    If $ID|_{\ell} \in R_{ID|_{\ell-1}}$, then it outputs $\perp$ since the
    identity $ID_\ell$ is revoked. Otherwise, it proceeds the following
    steps:
    \begin{enumerate}
    \item For $i = \ell$, it retrieves $\{ d_i, LSK_i = ( K_{i,0},
        K_{i,1} ) \}$ and $\{ SI_{ID|_{i-1}}, LUK_{i-1} = ( U_{i-1,0},
        U_{i-1,1}, U_{i-1,2} ) \}$ and computes the following components
        \begin{align*}
        A_{i,0} &= e_{1,1} (X_{d_i}, U_{i,1}) \cdot
                   e_{1,1} \big( U_{i-1,0}, K_{i,0}
                   \prod_{j \in SI_{ID|_{i-1}}, j \neq d_i} X_{N+1 -j + d_i}
                   \big)^{-1},~ \\
        A_{i,1} &= e_{1,1} (U_{i-1,0}, K_{i,1}),~
        A_{i,2}  = e_{1,1} (X_{d_i}, U_{i-1,2}).
        \end{align*}

    \item Next, it derives a temporal decryption key
        \begin{align*}
        TDK_{ID|_{\ell},T} = \Big(
            D_0 = P_0 \cdot A_{\ell,0},~
            \{ D_i = P_i \}_{i=1}^{\ell-1},~
            D_{\ell} = A_{\ell,1},~
            D_{L+1} = P_{L+1} \cdot A_{\ell,2}
        \Big) \in \G_2^{\ell+2}.
        \end{align*}

\item Finally, it outputs a decryption key $DK_{ID|_{\ell},T}$ by running
    $\tb{HIBE.RandKey} (TDK_{ID|_{\ell},T}, \lb PP_{HIBE})$.
    \end{enumerate}

\item [\tb{RHIBE.Encrypt}($ID|_{\ell}, T, M, PP$):] Let $ID|_{\ell} = (I_1,
    \ldots, I_{\ell})$. It first chooses a random exponent $s \in \Z_p$ and
    obtains $CH_{HIBE}$ by running $\tb{HIBE.Encrypt} (ID|_{\ell}, T, s,
    PP_{HIBE})$. It outputs a ciphertext $CT_{ID|_\ell,T} = \big( C =
    \Omega^s \cdot M,~ CH_{HIBE} \big)$.

\item [\tb{RHIBE.Decrypt}($CT_{ID,T}, DK_{ID',T'}, PP$):] Let $CT_{ID,T} =
    (C, CH_{HIBE})$. If $(ID = ID') \wedge (T = T')$, then it obtains
    $EK_{HIBE}$ by running $\tb{HIBE.Decrypt} (CH_{HIBE}, DK_{ID', T'},
    PP_{HIBE})$ and outputs the message $M$ by computing $M = C \cdot
    EK^{-1}_{HIBE}$. Otherwise, it outputs $\perp$.

\item [\tb{RHIBE.Revoke}($ID|_{\ell}, T, RL_{ID|_{\ell-1}},
    ST_{ID|_{\ell-1}}$):] If $(ID|_\ell, -) \notin ST_{ID|_{\ell-1}}$, then
    it outputs $\perp$ since the private key of $ID|_{\ell}$ was not
    generated. Otherwise, it updates $RL_{ID|_{\ell-1}}$ by adding
    $(ID|_{\ell}, T)$ to $RL_{ID|_{\ell-1}}$.
\end{description}

\subsection{Correctness}

Let $SK_{ID_{\ell}}$ be a private key for an identity $ID_{\ell}$ that is
associated with an index $d_{ID_{\ell-1}}$, and $UK_{T,R_{ID_{\ell-1}}}$ be
an update key for a time $T$ and a revoked identity set $R_{ID_{\ell-1}}$.
We have
    \begin{align*}
    A_{i,0}
    &=  EK_{BE} \cdot e_{1,1} (X_{d_i}, H_1(T)^{r_i}) \cdot
        e_{1,1} (E_0, F_{1,i}(I_i)^{r_{i,1}}),~ \\
    &=  g_2^{\alpha^{N+1} \beta_{ID_{i-1}}} \cdot
        H_2(T)^{\alpha^{d_i} r_i} \cdot
        F_{2,i}(I_i)^{\beta_{ID_{i-1}} r_{i,1}},~ \db \\
    A_{i,1}
    &=  e_{1,1} (U_{i-1,0}, K_{i,1})
     =  e_{1,1} \big( g_1^{\beta_{ID_{i-1}}}, g_1^{-r_{i,1}} \big)
     =  g_2^{-\beta_{ID_{i-1}} r_{i,1}},~ \\
    A_{i,2}
    &=  e_{1,1} (X_{d_i}, U_{i-1,2})
     =  e_{1,1} \big( g_1^{\alpha^{d_i}}, g_1^{-r_i} \big)
     =  g_2^{-\alpha^{d_i} r_i}.
    \end{align*}

If $ID_{\ell} \notin R_{ID_{\ell-1}}$, then the decryption key derivation
algorithm first correctly derives temporal decryption key as
    \begin{align*}
    D_0
    &=  P_0 \cdot A_{\ell,0}
     =  \prod_{i=1}^{\ell-1} g_2^{-\alpha^{N+1} \beta_{ID_i}} \cdot
        \prod_{i=1}^{\ell} A_{i,0} \\
    &=  \prod_{i=1}^{\ell-1} g_2^{-\alpha^{N+1} \beta_{ID_i}} \cdot
        \prod_{i=1}^{\ell}
        g_2^{\alpha^{N+1} \beta_{ID_{i-1}}}
        H_2(T)^{\alpha^{d_i} r_i} F_{2,i}(I_i)^{\beta_{ID_{i-1}} r_{i,1}} \\
    &=  g_2^{\alpha^{N+1}\beta_{\epsilon}} \cdot
        \prod_{i=1}^{\ell} F_{2,i}(I_{i})^{\beta_{ID_{i-1}} r_{i,1}} \cdot
        H_1(T)^{\sum_{i=1}^{\ell} \alpha^{d_i} r_i},~ \db \\
    D_{i}
    &=  g_2^{-\beta_{ID_{i-1}} r_{i,1}}~~~ \forall i \in [\ell],~
    D_{L+1}
     =  P_{L+1} \cdot A_{\ell,2}
     =  \prod_{i=1}^{\ell} A_{i,2}
     =  g_2^{-\sum_{i=1}^{\ell} \alpha^{d_i} r_i}.
    \end{align*}

\subsection{Security Analysis}

To prove the security of our RHIBE scheme via history-free approach, we
carefully combine the partitioning methods of the BGW-PKBE scheme
\cite{BonehGW05}, the BB-HIBE scheme \cite{BonehB04e}, and our cancelation
technique.

\begin{theorem} \label{thm:rhibe-hfu-srlind}
The above RHIBE scheme is SRL-IND secure if the $(3,N)$-MDHE assumption holds
where $N$ is the maximum child number of users in the system.
\end{theorem}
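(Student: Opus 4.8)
The plan is to establish Theorem~\ref{thm:rhibe-hfu-srlind} by a reduction to the $(3,N)$-MDHE assumption that follows the meta-simulator architecture of the proof of Theorem~\ref{thm:rhibe-hpu-srlind}, re-tuned for the history-free construction. Given an MDHE instance $\big(g_1,g_1^{a},\dots,g_1^{a^{N}},g_1^{a^{N+2}},\dots,g_1^{a^{2N}},g_1^{b},g_1^{c},Z\big)$, the meta-simulator $\mc{B}$ extracts from it an $N$-BDHE instance to drive the simulator $\mc{B}_{PKBE}$ of Theorem~\ref{thm:bgw-pkbe-indcpa} and hands $(ID^*|_{\ell^*},T^*)$ to the simulator $\mc{B}_{HIBE}$ of Theorem~\ref{thm:bb-hibe-indcpa}. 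After $\mc{A}$ commits to $ID^*|_{\ell^*}$, $T^*$ and the revoked set $R^*$, $\mc{B}$ fixes the smallest level $x\le\ell^*$ at which the challenge path is revoked by $T^*$ (its existence follows from the game restrictions), assigns indices along $\{ID^*\}\cup R^*$, and records the non-revoked index sets $SI^*_{ID^*_i}$ at $T^*$. In \tb{Setup}, $\mc{B}$ implicitly sets $\alpha=a$, $\beta_{ID^*_i}=b+\hat\beta_{ID^*_i}$ for $0\le i<x$ and a fresh $\beta_{ID^*_i}$ for $x\le i\le\ell^*$, and $\gamma_{ID^*_i}=\theta_{ID^*_i}-\sum_{j\in SI^*_{ID^*_i}}a^{N+1-j}$ for $0\le i<\ell^*$. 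Then $PP$ is publishable (in particular $g_2^{\alpha^{N+1}}=e(g_1^{a},g_1^{a^{N}})$, $g_2^{\beta_\epsilon}=e(g_1,g_1^{b}g_1^{\hat\beta_{ID_0}})$ and $\Omega=g_3^{\alpha^{N+1}\beta_\epsilon}$), whereas the $\G_2$-element $g_2^{\alpha^{N+1}\beta_\epsilon}=g_2^{a^{N+1}b}\cdot g_2^{a^{N+1}\hat\beta_{ID_0}}$ --- the master part of any decryption key on the challenge path --- is \emph{not} computable; the SRL-IND restrictions are precisely what let the simulation avoid ever needing it.

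The queries are answered as follows. A level private key $LSK_\ell=(K_{\ell,0},K_{\ell,1})$ whose parent $ID|_{\ell-1}$ is off the challenge path is generated exactly as in the scheme: its BGW key $K_{BE}=g_1^{\alpha^{d_\ell}\gamma_{ID|_{\ell-1}}}$ is computed from a fresh known $\gamma_{ID|_{\ell-1}}$ and $g_1^{a^{d_\ell}}$, and $F_{1,\ell}(I_\ell)$ is simulatable because $I_\ell$ collides with no $I^*_i$ by the encoding property \cite{LeeP16}. When $ID|_{\ell-1}$ is on the challenge path, either $ID|_\ell$ is revoked, so $d_\ell\notin SI^*_{ID^*_{\ell-1}}$ and a PKBE key query to $\mc{B}_{PKBE}$ returns $K_{BE}$ by the BGW partitioning of \cite{BonehGW05} (every power $a^{N+1-j+d_\ell}$, $j\in SI^*_{ID^*_{\ell-1}}$, avoids $N+1$); or $\ell<x$ and $I_\ell\ne I^*_\ell$ by the restriction, whence Boneh--Boyen partitioning on $F_{1,\ell}$ (setting $r_{\ell,1}=-a/\Delta I_\ell+r'_{\ell,1}$) is combined with that BGW partitioning, the $g_1^{a^{N+1}}$ term introduced by the former cancelling the $g_1^{-a^{N+1}}$ term left by the latter. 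For an update key $UK_{T,R_{ID|_{\ell-1}}}=\big(PDK_{\ell-1},\{SI_{ID|_{\ell-1}},LUK_{\ell-1}\}\big)$, the BGW header $LUK_{\ell-1}$ is simulated by matching $\prod_{j\in SI_{ID|_{\ell-1}}}X_{N+1-j}$ against the $-\sum_{j\in SI^*}a^{N+1-j}$ hidden in $\gamma_{ID^*_{\ell-1}}$, with the residual symmetric difference absorbed by Boneh--Boyen partitioning on $H_1(T)$ when $T\ne T^*$ and vanishing when $T=T^*$ since then $SI_{ID|_{\ell-1}}=SI^*_{ID^*_{\ell-1}}$; the partial decryption key $PDK_{\ell-1}$, whose $P_0=D_0\cdot g_2^{-\alpha^{N+1}\beta_{ID|_{\ell-1}}}$ for the decryption-key component $D_0=g_2^{\alpha^{N+1}\beta_\epsilon}\cdot(\cdots)$ of $ID|_{\ell-1}$, is produced by the cancellation technique: for $\ell-1<x$ the factor $g_2^{\alpha^{N+1}(\beta_\epsilon-\beta_{ID^*_{\ell-1}})}=g_2^{\alpha^{N+1}(\hat\beta_{ID_0}-\hat\beta_{ID^*_{\ell-1}})}$ has the $b$-term cancelled and is computable from $g_2^{\alpha^{N+1}}$, while for $\ell-1\ge x$ the revoked ancestor $ID^*_x$ leaves no valid input decryption key so the query is vacuous. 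Decryption-key queries, which exclude the challenge pair, are forwarded to $\mc{B}_{HIBE}$ after a known $g_2^{\alpha^{N+1}\hat\beta_{ID_0}}$ adjustment of $D_0$, and the challenge ciphertext is set to $CT^*=\big(C=Z\cdot e(e(g_1^{a},g_1^{a^{N}}),g_1^{c})^{\hat\beta_{ID_0}}\cdot M^*_\delta,\ CH_{HIBE}\big)$ with $CH_{HIBE}$ taken from $\mc{B}_{HIBE}$'s challenge under $s=c$.

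To close, I would prove the analog of Lemma~\ref{lem:rhibe-hpu-dist}: when $Z=Z_0=g_3^{a^{N+1}bc}$ all simulated private keys, update keys (including the $PDK$ parts) and the challenge ciphertext are distributed exactly as in the real scheme under the implicit assignments above, so $\mc{B}$'s output tracks $\mc{A}$'s success bit; when $Z=Z_1$ is uniform in $\G_3$ the component $C$ is independent of $\delta$, giving $\mc{A}$ no advantage, so $\Adv^{(3,N)\text{-}MDHE}_{\mc{B}}(1^\lambda)$ is within a negligible amount (the encoding-collision and $\Delta$-vanishing events) of $\Adv^{SRL\text{-}IND}_{RHIBE,\mc{A}}(1^\lambda)$. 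I expect the main obstacle to be the update-key simulation along the challenge path at $T=T^*$: there is no $H_1(T^*)$ blinding slack in $LUK_{\ell-1}$, so one must verify that BGW partitioning alone suffices precisely because the revoked index set is pinned to $RI^*$, and at the same time that the $PDK_{\ell-1}$ cancellation is consistent at the level-$x$ boundary, where $\beta_{ID^*_x}$ switches from a $b+\hat\beta$ form to a fresh exponent, so that the exponent $a^{N+1}b$ is only ever realized in the $\G_3$-level challenge term $Z_0=g_3^{a^{N+1}bc}$ and never as a $\G_1$- or $\G_2$-element; one also has to check that every decryption-key query forwarded to $\mc{B}_{HIBE}$ is answerable, i.e. never the challenge key nor a prefix of $ID^*$ at $T^*$ (which would delegate within the HIBE layer to the challenge key). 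Making the three partitioning arguments coexist without a stray $a^{N+1}$ power surviving in $\G_1$ or $\G_2$ is the delicate part; the remainder is bookkeeping parallel to the history-preserving case.
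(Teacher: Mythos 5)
Your proposal follows essentially the same reduction as the paper's own proof: a meta-simulator wrapping $\mc{B}_{HIBE}$ and $\mc{B}_{PKBE}$, the implicit settings $\alpha=a$, $\gamma_{ID^*_i}=\theta_{ID^*_i}-\sum_{j\in SI^*_{ID^*_i}}a^{N+1-j}$, BGW partitioning for revoked on-path keys, Boneh--Boyen partitioning on $F_{1,\ell}$ and $H_1(T)$ for non-revoked identities and non-challenge times, the $g_2^{-a^{N+1}\hat\beta_{ID^*_{\ell-1}}}$ cancellation in the partial decryption keys, and a closing distribution lemma mirroring Lemma~\ref{lem:rhibe-hpu-dist}. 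The only deviation is that you set $\beta_\epsilon=b+\hat\beta_{ID_0}$ (as in the history-preserving reduction) where the paper's history-free reduction takes $\beta_\epsilon=b$, which merely shifts the challenge ciphertext and the forwarded decryption keys by a factor computable from $g_2^{\alpha^{N+1}}$ and changes nothing substantive.
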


\begin{proof}
Suppose there exists an adversary $\mc{A}$ that attacks the above RHIBE scheme
with a non-negligible advantage. A meta-simulator $\mc{B}$ that solves the MDHE
assumption using $\mc{A}$ is given: a challenge tuple
    $D = \big( (p,\G_1, \G_2, \G_3), g_1, g_1^{a}, g_1^{a^2}, \ldots,
    g_1^{a^N}, g_1^{a^{N+2}}, \ldots, g_1^{a^{2N}}, g_1^b, g_1^c \big)$
    and $Z$
where $Z = Z_0 = g_3^{a^{N+1} bc}$ or $Z = Z_1 \in \G_3$.
Note that a challenge tuple
    $D_{BDHE} = \big( (p,\G_1, \G_2), g_1, g_1^{a}, g_1^{a^2}, \ldots,
    g_1^{a^N}, g_1^{a^{N+2}}, \ldots, g_1^{a^{2N}}, g_1^b \big) $
for the BDHE assumption can be derived from the challenge tuple $D$ of the
MDHE assumption. Let $\mc{B}_{HIBE}$be the simulator in the security proof
of Theorem \ref{thm:bb-hibe-indcpa} and $\mc{B}_{PKBE}$ be a simulator in security
proof of Theorem \ref{thm:bgw-pkbe-indcpa}.
Then $\mc{B}$ that interacts with $\mc{A}$ is described as follows:

\vs \noindent \textbf{Init:} $\mc{A}$ initially submits a challenge identity
$ID^*_{\ell^*} = (I^*_1,\ldots,I^*_{\ell^*})$, a challenge time $T^*$, and a
revoked identity set $R^* = (R^*_{ID^*_0},\ldots,R^*_{ID^*_{\ell^*-1}})$ at
the time $T^*$.
It first sets a state $ST$ and a revocation list $RL$ as empty one. For each
$ID \in \{ ID^* \} \cup R^*$, it selects an index $d_{ID^*_{x}} \in \mc{N}_x$
such that $(-, d_{ID^*_{x}}) \notin ST_{ID^*_{x}}$ and adds $(ID,
d_{ID^*_{x}})$ to $ST_{ID^*_{x}}$. Let $RI^* =
(RI^*_{ID_0},\ldots,RI^*_{ID_{\ell^*-1}}) \subseteq \mc{N}$ be the revoked
index set of $R^*$ at the time $T^*$ and $SI^* = (SI^*_{ID_1},\ldots,
SI^*_{ID_{\ell^*-1}}$) be the non-revoked index set at the time $T^*$ such
that $SI^*_{ID_x} = \mc{N}_x \setminus RI^*_{ID_x}$.

\svs \noindent \textbf{Setup:} $\mc{B}$ submits $ID^*_{\ell^*}$ and $T^*$ to
$\mc{B}_{HIBE}$ and receives $PP_{HIBE}$. It also submits $SI^*_{ID^*_{0}}$
to $\mc{B}_{PKBE}$ and receives $PP_{PKBE}$.
$\mc{B}$ first chooses random exponents $\theta_{ID^*_0},\ldots, $
$\theta_{ID^*_{\ell^*-1}} \in \Z_p$.
It implicitly sets $\alpha = a, \beta_{\epsilon} = \beta_{ID_0} = b,
\{\beta_{ID_i} = b + \hat{\beta}_{ID_i} \}_{1 \leq i \leq \ell^*-1},
\{\gamma_{ID^*_x} = \theta_{ID^*_x} - \sum_{j \in SI_{ID^*_x}} a^{N+1-j}\}_
{0 \leq x \leq \ell^*-1}$ and publishes the public parameters $PP$ as
    \begin{align*}
    PP = \big(
        PP_{HIBE},~ PP_{PKBE},~
       \Omega = e \big( e (g_1^{\alpha}, g_1^{\alpha^N}),
            g_1^b \big)
        = g_3^{\alpha^{N+1} \beta_{\epsilon}}
        \big).
    \end{align*}

\vs \noindent \textbf{Phase 1:} $\mc{A}$ adaptively requests a polynomial
number of private key, update key, and decryption key queries.

\svs \noindent If this is a private key query for an identity $ID_{\ell} =
(I_1,\ldots,I_{\ell})$, then $\mc{B}$ proceeds as follows:
Note that $SK_{ID_{\ell}} = (\{d_{\ell}, LSK_{\ell}\})$ where $LSK_{\ell} =
(K_{\ell,0}, K_{\ell,1})$.

\begin{itemize}
\item \textbf{Case} $ID_{\ell-1} \notin \textbf{Prefix}(ID^*_{\ell^*})$:
    In this case, it can normally generate the state $ST_{ID_{\ell-1}}$
    by himself.
    It first normally sets $ST_{ID_{\ell-1}}$ where the index $d_{ID_{\ell-1}}$ is
    associated with $ID_{\ell}$. And, it selects a random exponent
    $\gamma_{ID_{\ell}}$. It obtains $SK_{ID_{\ell}}$ by running \textbf
    {RHIBE.GenKey}$(ID_{\ell},$ $ST_{ID_{\ell-1}}, PP)$.

\item \textbf{Case} $ID_{\ell-1} \in \textbf{Prefix}(ID^*_{\ell^*})$:
    In this case, it first retrieves a tuple $(ID_{\ell}, d_{i})$
    from $ST_{ID^*_{\ell-1}}$ where the index $d_{i}$ is associated
    with $ID_{\ell}$.
    \begin{itemize}
    \item \textbf{Case} $ID_{\ell} \in R^*_{ID^*_{\ell-1}}$:
    In this case, the simulator can use the partitioning method of Boenh et al.
    \cite{BonehGW05}.
    Next, it
    selects a random exponent $r_{\ell} \in \Z_p$ and creates a private key
    $SK_{ID_{\ell}}$ as
    \begin{align*}
    LSK_{\ell} = \Big(
    K_{\ell, 0} &= (g_1^{a^{d_i}})^{\theta}(\prod_{j\in SI^*} g_1^{a^{N+1-j+d_i}})^{-1}
                   \cdot F_{1,\ell} (I_{\ell})^{-r_{\ell}},~
    K_{\ell, 1}  = g_1^{-r_{\ell}} \Big).
    \end{align*}

\item \textbf{Case} $ID_{\ell} \notin R^*_{ID^*_{\ell-1}}$: In this case, we
    have $I_{\ell} \neq I^*_{\ell}$ from the restriction of Definition
    \ref{def:rhibe-hpu-srlind} and the simulator can use the paritioning method
    of Boneh and Boyen \cite{BonehB04e}.
    It first selects an index $d_{i} \in \mc{N}_{\ell}$ such that
    $(-, d_{i}) \notin ST_{ID^*_{\ell-1}}$ and adds $(ID_{\ell},
    d_{i})$ to $ST_{ID^*_{\ell-1}}$.
    Next, it selects a random exponent $r'_{\ell} \in \Z_p$ and
    creates a private key $SK_{ID}$ by implicitly setting $r_{\ell} = -a
    / \Delta ID + r'_{\ell}$ as
    \begin{align*}
    LSK_{\ell} = \Big(
    K_{\ell,0} &=  g_1^{a^{d_{i}} \theta_{ID^*_{\ell-1}}} \prod_{j \in
                SI^*_{ID^*_{\ell-1}} \setminus \{ d_{i} \} }
            g_1^{-a^{N+1-j+d_{i}}} (g_1^a)^{f'_0 / \Delta ID}
                F_{1, \ell}(I_{\ell})^{r'_{\ell}},~\\
    K_{\ell,1}  &=  (g_1^a)^{-1/\Delta ID} g_1^{-r'_{\ell}} \Big).
    \end{align*}
    \end{itemize}
\end{itemize}

\noindent If this is an update key query for an identity $ID_{\ell-1} =
(I_1,\ldots,I_{\ell-1})$ and a time $T$, then $\mc{B}$ defines a revoked
identity set $R_{ID_{\ell-1}}$ at the time $T$ from $RL_{ID_{\ell-1}}$ and
proceeds as follows: Note that $UK_{T, R_{ID_{\ell-1}}} = (PDK_{\ell-1},
\{SI_{ID_{\ell-1}}, LUK_{\ell-1}\})$ where $LUK_{\ell-1} = (U_{\ell-1,0},
U_{\ell-1,1}, U_{\ell-1,2})$ and $PDK_{\ell-1} = (P_0,
\{P_i\}_{i=1}^{\ell-1}), P_{L+1}$.

\begin{itemize}
\item \textbf{Case} $T \neq T^*$: It first sets a revoked index set
    $RI_{ID_{\ell-1}}$ of $R_{ID_{\ell-1}}$ by using $ST_{ID_{\ell-1}}$.
    It also sets $SI_{ID_{\ell-1}} = \mc{N}_1 \setminus RI_{ID_{\ell-1}}$.
    And, it also sets $SI_{ID_{\ell-1}} = \mc{N}_1 \setminus
    RI_{ID_{\ell-1}}$.
    \begin{itemize}
    \item \textbf{Case} $ID_{\ell-1} = ID_0$ : In this case, the simultor can
    use the partitioning method of Boneh and Boyen \cite{BonehB04e}. It selects
    a random exponent $r'_0 \in \Z_p$ and creates an update key $UK_{T,R}$ by
    implicitly setting $r_0 = - (-\sum_{j \in SI^*_{ID_0}
    \setminus SI_{ID_0}} a^{N+1-j} + \sum_{j \in SI_{ID_0} \setminus SI^*_{ID_0}}
    a^{N+1-j}) / \Delta T + r'_0$ as
    \begin{align*}
    LUK_{\ell-1} = \Big(
    &U_{\ell-1, 0} = g_1^b,~ \\
    &U_{\ell-1, 1} = (g_1^b)^{\theta_{ID_0}}
           \Big( \prod_{j \in SI^*_{ID_0} \setminus SI_{ID_0}}
               g_1^{-a^{N+1-j}}
                 \prod_{j \in SI_{ID_0} \setminus SI^*_{ID_0}}
               g_1^{a^{N+1-j}}
           \Big)^{-h'_0 / \Delta T}
           H_1(T)^{r'_0},~\\
    &U_{\ell-1, 2}  = \Big( \prod_{j \in SI^*_{ID_0} \setminus SI_{ID_0}}
               g_1^{-a^{N+1-j}}
                 \prod_{j \in SI_{ID_0} \setminus SI^*_{ID_0}}
               g_1^{a^{N+1-j}}
           \Big)^{-1 / \Delta T} g_1^{r'_0} \Big),\\
    PDK_{\ell-1} = \Big(
            &P_0 = 1_{\G_2},~
             P_{L+1} = 1_{\G_2} \Big).
    \end{align*}
    \item \textbf{Case} $ID_{\ell-1} \in$ \textbf{Prefix}$(ID^*_{\ell^*})$ :
    In this case, the simulator can use the partitioning method of Boneh and Boyen
    \cite{BonehB04e} to create a level update key, and the cancelation technique
    by using the session key of PKBE to create a partial decryption key. It selects
    a random exponent $r'_0 \in \Z_p$ and creates an update key $UK_{T,R}$ by
    implicitly setting $r_0 = - (-\sum_{j \in SI^*_{ID_0}
    \setminus SI_{ID_0}} a^{N+1-j} + \sum_{j \in SI_{ID_0} \setminus SI^*_{ID_0}}
    a^{N+1-j}) / \Delta T + r'_0$ as
    \begin{align*}
    LUK_{\ell-1} = \Big(
    U_{\ell-1, 0} = &g_1^{b + \hat{\beta}_{ID_{\ell-1}}},~ \\
    U_{\ell-1, 1} = &(g_1^{b + \hat{\beta}_{ID_{\ell-1}}})^{\theta_{ID_{\ell-1}}}
           \Big( \prod_{j \in SI^*_{ID^*_{\ell-1}} \setminus SI_{ID^*_{\ell-1}}}
                 g_1^{-a^{N+1-j}}
           \prod_{j \in SI_{ID^*_{\ell-1}} \setminus SI^*_{ID^*_{\ell-1}}}
           g_1^{a^{N+1-j}}
                \Big)^{\hat{\beta}_{ID_{\ell-1}}}\\
           &\times
           \Big( \prod_{j \in SI^*_{ID_{\ell-1}} \setminus SI_{ID_{\ell-1}}}
               g_1^{-a^{N+1-j}}
                 \prod_{j \in SI_{ID_{\ell-1}} \setminus SI^*_{ID_{\ell-1}}}
               g_1^{a^{N+1-j}}
           \Big)^{-h'_0 / \Delta T}
           H_1(T)^{r'_0},~\\
    U_{\ell-1, 2}  = &\Big( \prod_{j \in SI^*_{ID_{\ell-1}} \setminus SI_{ID_{\ell-1}}}
               g_1^{-a^{N+1-j}}
                 \prod_{j \in SI_{ID_{\ell-1}} \setminus SI^*_{ID_{\ell-1}}}
               g_1^{a^{N+1-j}}
           \Big)^{-1 / \Delta T} g_1^{r'_0} \Big),\\
    PDK_{\ell-1} = \Big(
            P_0 = &\prod_{i=1}^{\ell-1}F_{2,i}(I_i)^{r_{i}} \cdot
               H(T)^{r_{0}} \cdot e (g_1^a, g_1^{a^N})
               ^{-\hat{\beta}_{ID^*_{\ell-1}}},~
      \big\{ P_i = g_2^{r_{i}} \big\}_{i=1}^{\ell-1},~
            P_{L+1} = g_2^{r_{L+1}} \Big).
    \end{align*}
    \item \textbf{Case} $ID_{\ell-1} \notin$
        \textbf{Prefix}$(ID^*_{\ell^*})$ : In this case, the simulator
        can obtain the decryption key $DK_{ID_{\ell-1}, T} =
        (D_0,\ldots,D_{\ell-1}, D_{L+1})$ by requesting an RHIBE
        decryption key query. Next, it can nomally create an update key
        by running \textbf{RHIBE.}\textbf{UpdateKey} ($T,
        RL_{ID_{\ell-1}}, DK_{ID_{\ell-1}, T}, ST_{ID_{\ell-1}}, PP$).
    \end{itemize}

\item \textbf{Case} $T = T^*$: For each $ID \in R^*_{ID^*_{\ell-1}}$, it adds
    $(ID, T^*)$ to $RL_{ID^*_{\ell-1}}$ if $(ID, T') \notin RL_{ID^*_{\ell-1}}$
    for any $T' \leq T^*$.
    \begin{itemize}
    \item \textbf{Case} $ID_{\ell-1} = ID_0$ : In this case, the simulator can
    use the partitioning method of Boneh et al. \cite{BonehGW05}. It selects
    random exponent $r_{0} \in \Z_p$ and creates an update key $UK_{T,
    R_{ID_{\ell-1}}}$ as
    \begin{align*}
    LUK_{\ell-1} &= \Big(
    U_{\ell-1, 0} = g_1^b,~
    U_{\ell-1, 1}  = (g_1^b)^{\theta_{ID^*_{\ell-1}}} \cdot H_1(T^*)^{r_0},~
    U_{\ell-1, 2}  = g_1^{-r_0} \Big),~\\
    PDK_{\ell-1} &= \Big(
             P_0 = 1_{\G_2},~
             P_{L+1} = 1_{\G_2} \Big).
    \end{align*}
    \item \textbf{Case} $ID_{\ell-1} \in$
        \textbf{Prefix}$(ID^*_{\ell^*})$ : In this case, the simulator
        can use the partitioning method of Boneh et al. \cite{BonehGW05}.
        It selects random exponents $r_{0},\ldots,r_{\ell-1}, r_{L+1} \in
        \Z_p$ and creates an update key $UK_{T,R}$ as
    \begin{align*}
    LUK_{\ell-1} = \Big(
    &U_{\ell-1, 0} = g_1^b \cdot g_1^{\hat{\beta}_{ID^*_{\ell-1}}},~
     U_{\ell-1, 1}  = (g_1^b \cdot g_1^{\hat{\beta}_{ID^*_{\ell-1}}})^{\theta
               _{ID^*_{\ell-1}}}
               H_1(T^*)^{r_0},~
     U_{\ell-1, 2}  = g_1^{-r_0} \Big),~\\
    PDK_{\ell-1} = \Big(
             &P_0 = \prod_{i=1}^{\ell-1}F_{2,i}(I_i)^{r_{i}} \cdot
               H(T)^{r_{0}} \cdot e (g_1^a, g_1^{a^N})
               ^{-\hat{\beta}_{ID^*_{\ell-1}}},~
       \big\{ P_i = g_2^{r_{i}} \big\}_{i=1}^{\ell-1},~
          P_{L+1} = g_2^{r_{L+1}} \Big).
    \end{align*}
    \item \textbf{Case} $ID_{\ell-1} \notin$
        \textbf{Prefix}$(ID^*_{\ell^*})$ : In this case, the simulator
        can obtain the decryption key $DK_{ID_{\ell-1}, T} =
        (D_0,\ldots,D_{\ell-1}, D_{L+1})$ by requesting an RHIBE
        decryption key query. Next, it can nomally create an update key
        by running \textbf{RHIBE.}\textbf{UpdateKey} ($T,
        RL_{ID_{\ell-1}}, DK_{ID_{\ell-1}, T}, ST_{ID_{\ell-1}}, PP$).
    \end{itemize}

\end{itemize}

\noindent If this is a decryption key query for an identity $ID =(I_1,\ldots,
I_{\ell})$  and a time $T$, then $\mc{B}$ proceeds as follows:
It requests an HIBE private key for $ID$ and $T$ to $\mc{B}_{HIBE}$ and
receives $SK_{HIBE, ID, T}$. Next, it sets the decryption key $DK_{ID, T}
= SK_{HIBE, ID, T}$.\\

\noindent \textbf{Challenge}: $\mc{A}$ submits two challenge messages
$M_0^*, M_1^*$. $\mc{B}$ chooses a random bit $\delta \in \bits$ and
proceed as follows:
It requests the challenge ciphertext for $ID^*$ and $T^*$ to $\mc{B}_{HIBE}$
and receives $CH_{HIBE, ID^*, T^*}$. Next, it sets the challenge ciphertext
$CT_{ID^*, T^*} = (Z \cdot M_{\delta}^*, CH_{HIBE, ID^*, T^*})$.
\\

\noindent \textbf{Phase 2}: Same as Phase 1.

\svs \noindent \textbf{Guess}: Finally, $\mc{A}$ outputs a guess $\delta' \in
\bits$. $\mc{B}$ outputs $0$ if $\delta = \delta'$ or $1$ otherwise.
\end{proof}

\begin{lemma} \label{lem:ribe-basic-dist}
The distribution of the above simulation is correct if $Z = Z_0$, and the
challenge ciphertext is independent of $\delta$ in the adversary's view if $Z
= Z_1$.
\end{lemma}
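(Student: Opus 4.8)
The plan is to follow the template of Lemma~\ref{lem:rhibe-hpu-dist}. For each object that the meta-simulator $\mc{B}$ hands to $\mc{A}$ --- a level private key $LSK_\ell$, a level update key $LUK_{\ell-1}$ together with its partial decryption key $PDK_{\ell-1}$, a full decryption key, or the challenge ciphertext --- I would check that, as a group element, it equals the value prescribed by the real algorithms once the implicit substitutions $\alpha = a$, $\beta_\epsilon = \beta_{ID_0} = b$, $\{\beta_{ID_i} = b + \hat\beta_{ID^*_i}\}_{1\le i\le \ell^*-1}$ and $\{\gamma_{ID^*_x} = \theta_{ID^*_x} - \sum_{j\in SI^*_{ID^*_x}} a^{N+1-j}\}$ are made. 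Every exponent not pinned down by these substitutions ($r_{\ell,1}$, $r_{\ell-1}$, the re-randomization exponents, etc.) is drawn freshly and uniformly by $\mc{B}$ exactly as in the scheme, so establishing these group-element identities proves that the entire transcript is distributed as in the real game whenever $Z = Z_0$. The challenge-ciphertext clause is then immediate: $\Omega = g_3^{\alpha^{N+1}\beta_\epsilon} = g_3^{a^{N+1}b}$ and $\mc{B}_{HIBE}$ implicitly uses $s = c$, so the real value $\Omega^s M^*_\delta$ equals $g_3^{a^{N+1}bc}M^*_\delta = Z_0 M^*_\delta$, which is exactly what $\mc{B}$ outputs (and $CH_{HIBE,ID^*,T^*}$ is a correctly distributed header for $(ID^*,T^*)$ by Theorem~\ref{thm:bb-hibe-indcpa}); if $Z = Z_1$ is uniform in $\G_3$, then $Z_1 M^*_\delta$ is uniform and independent of $\delta$, and since the header depends only on $(ID^*,T^*)$, the whole challenge ciphertext is independent of $\delta$.

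For private keys I would split on whether the parent $ID_{\ell-1}$ lies in $\tb{Prefix}(ID^*_{\ell^*})$. If not, $\gamma_{ID_{\ell-1}}$ is a fresh exponent that $\mc{B}$ knows, so $\tb{RHIBE.GenKey}$ runs verbatim and the output is perfect. If so, two mechanisms are used. When $ID_\ell$ is revoked, its index $d_\ell$ lies in $RI^*_{ID^*_{\ell-1}}$, hence $d_\ell\notin SI^*_{ID^*_{\ell-1}}$, so expanding $g_1^{\alpha^{d_\ell}\gamma_{ID^*_{\ell-1}}}$ produces only monomials $g_1^{a^{d_\ell}\theta}$ and $g_1^{-a^{N+1-j+d_\ell}}$ with $j\ne d_\ell$, all available from $D$ (the BGW partitioning). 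When $ID_\ell$ is not revoked, Definition~\ref{def:rhibe-hpu-srlind} forces $I_\ell\ne I^*_\ell$, hence $\Delta I_\ell\not\equiv 0\bmod p$; now $d_\ell\in SI^*_{ID^*_{\ell-1}}$ and the expansion also contains the unsimulatable $g_1^{-a^{N+1}}$, but choosing $r_{\ell,1} = -a/\Delta I_\ell + r'_{\ell,1}$ makes $F_{1,\ell}(I_\ell)^{-r_{\ell,1}}$ contribute a compensating $g_1^{+a^{N+1}}$ (using $F_{1,\ell}(I_\ell) = g_1^{f'_{\ell,0}}(g_1^{a^N})^{\Delta I_\ell}$), and the two cancel --- this is Boneh--Boyen partitioning combined with our cancellation trick, and it reproduces the displayed $LSK_\ell$ formulas.

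The update keys use the same ideas on the time coordinate together with the $\hat\beta$-blinding. For $LUK_{\ell-1}$ with $T\ne T^*$, writing $H_1(T) = g_1^{h'_0}(g_1^b)^{\Delta T}$ and choosing the $H_1(T)$-exponent of the displayed form $-\bigl(\sum_{j\in SI_{ID_i}\setminus SI^*_{ID^*_i}}a^{N+1-j} - \sum_{j\in SI^*_{ID^*_i}\setminus SI_{ID_i}}a^{N+1-j}\bigr)/\Delta T + r'_i$ cancels the $g_1^{b\cdot a^{N+1-j}}$ terms produced by $\bigl(g_1^{\gamma_{ID_i}}\prod_{j\in SI_{ID_i}}X_{N+1-j}\bigr)^{\beta_{ID_i}}$, the leftover being powers $a^{N+1-j}$ with $j\in[N]$ that are in $D$; when $T = T^*$ the revoked sets coincide so those $b$-monomials never arise. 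For $PDK_{\ell-1}$: it is the trivial $(1_{\G_2},1_{\G_2})$ when the parent is $\epsilon$; when $ID_{\ell-1}$ is a challenge prefix $\mc{B}$ synthesises it directly with fresh randomness as $P_0 = \prod_i F_{2,i}(I_i)^{r_i} H_2(T)^{r_{L+1}}\cdot e(g_1^a,g_1^{a^N})^{-\hat\beta_{ID^*_{\ell-1}}}$, $\{P_i = g_2^{-r_i}\}$, $P_{L+1} = g_2^{-r_{L+1}}$, which is exactly $\bigl(D_0(g_2^{\alpha^{N+1}})^{-\beta_{ID_{\ell-1}}},\{D_i\},D_{L+1}\bigr)$ because the unsimulatable $g_2^{a^{N+1}b}$ that would sit inside $D_0$ is annihilated by the $(g_2^{a^{N+1}})^{-b}$ half of $(g_2^{\alpha^{N+1}})^{-\beta_{ID_{\ell-1}}}$, leaving only the computable $(g_2^{a^{N+1}})^{-\hat\beta_{ID^*_{\ell-1}}} = e(g_1^a,g_1^{a^N})^{-\hat\beta_{ID^*_{\ell-1}}}$; and when $ID_{\ell-1}$ is not a challenge prefix $\mc{B}$ obtains $DK_{ID_{\ell-1},T}$ from $\mc{B}_{HIBE}$ and runs $\tb{RHIBE.UpdateKey}$ unchanged. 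A decryption-key query for $(ID,T)$ is answered by passing it to $\mc{B}_{HIBE}$, whose HIBE key has embedded master secret $a^{N+1}b = \alpha^{N+1}\beta_\epsilon$ --- precisely the shape of an RHIBE decryption key after $\tb{HIBE.RandKey}$.

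I expect the main obstacle to be the exhaustive case bookkeeping behind the middle two paragraphs: over every reachable combination of (parent is / is not a challenge prefix) $\times$ (the identity is / is not revoked) $\times$ ($T = T^*$ or $T\ne T^*$), one must verify that whenever the unsimulatable monomials $g_1^{a^{N+1}}$ or $g_2^{a^{N+1}b}$ would surface they are either killed by an $\hat\beta$-blinding factor against the public $g_2^{\alpha^{N+1}}$, or absorbed into a Boneh--Boyen term $F_{1,i}(I_i)^{\pm a/\Delta I_i}$ or $H_1(T)^{\pm(\cdots)/\Delta T}$ --- and, the genuinely delicate point, that the SRL restrictions of Definition~\ref{def:rhibe-hpu-srlind} (the challenge path must be revoked by time $T^*$, and $DK_{ID^*_{\ell^*},T^*}$ is never requested) are exactly tight enough to exclude the configurations where neither escape is available, so that $\mc{B}_{HIBE}$ and $\mc{B}_{PKBE}$ are never handed a query forbidden by their own games. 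Together with the easy challenge-ciphertext argument of the first paragraph, this establishes the lemma.
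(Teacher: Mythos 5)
Your proposal is correct and follows essentially the same route as the paper's proof: the same case split (parent in or out of $\tb{Prefix}(ID^*_{\ell^*})$, identity revoked or not, $T=T^*$ or not), the same BGW/Boneh--Boyen partitioning identities, and the same $\hat\beta$-cancellation for $P_0$ and the level private keys. If anything, your challenge-ciphertext clause is slightly more complete than the paper's, which for this lemma only states the $Z=Z_1$ direction explicitly.
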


\begin{proof}
We show that the distribution of private keys is correct. In case of
$ID_{\ell} \in R^*_{ID^*_{\ell-1}}$ and $ID_{\ell-1} \in \textbf{Prefix}
(ID^*_{\ell^*})$, we have that the private key is correctly distributed
from the setting $\gamma_{ID^*_{\ell-1}} = \theta_{ID^*_{\ell-1}} - \sum_{j
\in SI^*_{ID^*_{\ell-1}}} a^{N+1-j}$ as the following equation
    \begin{align*}
    K_{\ell, 0} = g_1^{a^{d_{\ell}}\gamma_{ID^*_{\ell-1}}}
                   \cdot F_{1,\ell} (I_{\ell})^{-r_{\ell}}
                = g_1^{a^{d_{\ell}}(\theta_{ID^*_{\ell-1}} - \sum_{j
                                   \in SI^*_{ID^*_{\ell-1}}} a^{N+1-j})}
                = (g_1^{a^{d_{\ell}}})^{\theta}(\prod_{j\in SI^*}
                   g_1^{a^{N+1-j+d_{\ell}}})^{-1}
                   \cdot F_{1,\ell} (I_{\ell})^{-r_{\ell}}.
    \end{align*}
In case of $ID_{\ell} \notin R^*_{ID^*_{\ell-1}}$ and $ID_{\ell-1} \in \textbf{Prefix}
(ID^*_{\ell^*})$, we have that the private key is correctly distributed
from the setting $\gamma_{ID^*_{\ell-1}} = \theta_{ID^*_{\ell-1}} - \sum_{j
\in SI^*_{ID^*_{\ell-1}}} a^{N+1-j}$ and $r_{\ell} = -a / \Delta I_{\ell} +
r'_{\ell}$ as the following equation
    \begin{align*}
    K_{\ell, 0} &= g_1^{\alpha^{d_{\ell}} \gamma_{ID^*_{\ell-1}}}
                   F_{1,\ell}(I_{\ell})^{-r_{\ell}}
             = g_1^{a^{d_{\ell}} \theta_{ID^*_{\ell-1}}} \prod_{j \in
                   SI^*_{ID^*_{\ell-1}}} g^{-a^{N+1-j+d_{\ell}}}
               \big( f_{1,0} \prod_{i=1}^l f_{1,i,I_1[i]} \big)^{-r_{\ell}}\\
            &= g_1^{a^{d_{\ell}} \theta_{ID^*_{\ell-1}}} \prod_{j \in
                   SI^*_{ID^*_{\ell-1}} \setminus \{ d_{\ell} \} }
                   g_1^{-a^{N+1-j+d_{\ell}}}
               \cdot g_1^{-a^{N+1}} \big( g_1^{f'_0} g_1^{a^N \Delta I_{\ell}}
               \big)^{a / \Delta I_{\ell} - r'_{\ell}} \\
            &= g_1^{a^{d_{\ell}} \theta_{ID^*_{\ell-1}}} \prod_{j \in
                   SI^*_{ID^*_{\ell-1}} \setminus \{ d_{\ell} \} }
                   g_1^{-a^{N+1-j+d_{\ell}}} (g_1^a)^{f'_0 / \Delta
                   I_{\ell}} F_{1,\ell}(I_{\ell})^{-r'_{\ell}},~\\
    K_{\ell, 1} &= g_1^{r_{\ell}}
             = (g_1^a)^{-1 / \Delta I_{\ell}} g_1^{r'_{\ell}}
    \end{align*}

Next, we show that the distribution of update keys is correct. In case of
$ID_{\ell-1} = ID_0$ and $T\neq T^*$, we have that the update key is correctly
distributed from the setting $\beta_{ID^*_{0}} = b$, $\gamma_{ID^*_0} =
\theta_{ID^*_0} - \sum_{j \in SI^*_{ID^*_0}} a^{N+1-j}$ and $r_{0} = -(-\sum_{j
\in SI^*_{ID^*_0} \setminus SI_{ID^*_0}}$ $a^{N+1-j} + \sum_{j \in SI_{ID^*_0}
\setminus SI^*_{ID^*_0}}a^{N+1-j}) / \Delta T + r'_{0}$ as the following equation
    \begin{align*}
    U_{\ell-1, 1} = &\big( g_1^{\gamma_{ID^*_0}} \prod_{j \in SI_{ID_0}}
    g_1^{\alpha^{N+1-j}}
               \big)^{\beta_{ID^*_0}} H_1(T)^{r_{1,0}}
         = \Big( g_1^{\theta_{ID^*_0}} \big( \prod_{j \in SI^*_{ID^*_0}}
                 g_1^{a^{N+1-j}}
                 \big)^{-1}
                 \prod_{j \in SI_{ID^*_0}} g_1^{a^{N+1-j}} \Big)^b
           \big( h_{1,0} \prod_{i=1}^t h_{1,i,T[i]} \big)^{r_{1,0}} \\
        = &(g_1^b)^{\theta_{ID^*_0}}
           \Big( \prod_{j \in SI^*_{ID^*_0} \setminus SI_{ID^*_0}}
                 g_1^{-a^{N+1-j}}
           \prod_{j \in SI_{ID^*_0} \setminus SI^*_{ID^*_0}} g_1^{a^{N+1-j}}
                \Big)^b \\ &\times
           \big( g_1^{h'_0} g_1^{b \Delta T} \big)^{
                -( -\sum_{j \in SI^*_{ID^*_0} \setminus SI_{ID^*_0}} a^{N+1-j} +
                    \sum_{j \in SI_{ID^*_0} \setminus SI^*_{ID^*_0}} a^{N+1-j} )
                / \Delta T + r'_{1,0} } \\
        = &(g_1^b)^{\theta_{ID^*_0}}
           \Big( \prod_{j \in SI^*_{ID^*_0} \setminus SI_{ID^*_0}} g_1^{-a^{N+1-j}}
                 \prod_{j \in SI_{ID^*_0} \setminus SI^*_{ID^*_0}} g_1^{a^{N+1-j}}
           \Big)^{-h'_0 / \Delta T} H_1(T)^{r'_{1,0}},~ \db \\
    U_{\ell-1, 2} = &g_1^{r_{1,0}}
         = \Big( \prod_{j \in SI^*_{ID^*_0} \setminus SI_{ID^*_0}} g_1^{-a^{N+1-j}}
                 \prod_{j \in SI_{ID^*_0} \setminus SI^*_{ID^*_0}} g_1^{a^{N+1-j}}
           \Big)^{-1 / \Delta T} g_1^{r'_{1,0}}.
    \end{align*}
In case of $ID_{\ell-1} \in$ \textbf{Prefix}$(ID^*_{\ell^*})$ and $T\neq
T^*$, we have that the update key is correctly distributed from the setting
$\beta_{ID_{\ell-1}} = b + \hat{\beta}_{ID_{\ell-1}}$,
$\gamma_{ID^*_{\ell-1}} = \theta_{ID^*_{\ell-1}} - \sum_{j \in
SI^*_{ID^*_{\ell-1}}} a^{N+1-j}$ and $r_{0} = -(-\sum_{j \in
SI^*_{ID^*_{\ell-1}} \setminus SI_{ID^*_{\ell-1}}}$ $a^{N+1-j} + \sum_{j \in
SI_{ID^*_{\ell-1}} \setminus SI^*_{ID^*_{\ell-1}}}a^{N+1-j}) / \Delta T +
r'_{0}$ as the following equation
    \begin{align*}
    U_{\ell-1, 1}
        =&  \big( g_1^{\gamma_{ID^*_{\ell-1}}} \prod_{j \in SI_{ID_{\ell-1}}}
                  g_1^{\alpha^{N+1-j}}\big)^{\beta_{ID^*_{\ell-1}}} H_1(T)^{r_{0}} \\
        =&  \Big( g_1^{\theta_{ID^*_{\ell-1}}} \big( \prod_{j \in SI^*_{ID^*_{\ell-1}}}
                 g_1^{a^{N+1-j}}\big)^{-1}
                 \prod_{j \in SI_{ID^*_{\ell-1}}} g_1^{a^{N+1-j}} \Big)^{b +
                    \hat{\beta}_{ID_{\ell-1}}}
           \big( h_{1,0} \prod_{i=1}^t h_{1,i,T[i]} \big)^{r_{0}} \db \\
        =&  (g_1^{b + \hat{\beta}_{ID_{\ell-1}}})^{\theta_{ID^*_{\ell-1}}}
           \Big( \prod_{j \in SI^*_{ID^*_{\ell-1}} \setminus SI_{ID^*_{\ell-1}}}
                 g_1^{-a^{N+1-j}}
           \prod_{j \in SI_{ID^*_{\ell-1}} \setminus SI^*_{ID^*_{\ell-1}}} g_1^{a^{N+1-j}}
                \Big)^{b + \hat{\beta}_{ID_{\ell-1}}} \\ &\times
           \big( g_1^{h'_0} g_1^{b \Delta T} \big)^{
                -( -\sum_{j \in SI^*_{ID^*_{\ell-1}} \setminus SI_{ID^*_{\ell-1}}}
                a^{N+1-j} +
                    \sum_{j \in SI_{ID^*_{\ell-1}} \setminus SI^*_{ID^*_{\ell-1}}}
                    a^{N+1-j} ) / \Delta T + r'_{0} } \db \\
        =&  (g_1^{b + \hat{\beta}_{ID_{\ell-1}}})^{\theta_{ID^*_{\ell-1}}}
                 \Big( \prod_{j \in SI^*_{ID^*_{\ell-1}} \setminus SI_{ID^*_{\ell-1}}}
                 g_1^{-a^{N+1-j}}
           \prod_{j \in SI_{ID^*_{\ell-1}} \setminus SI^*_{ID^*_{\ell-1}}} g_1^{a^{N+1-j}}
                \Big)^{\hat{\beta}_{ID_{\ell-1}}}\\
          &\times
           \Big( \prod_{j \in SI^*_{ID^*_{\ell-1}} \setminus SI_{ID^*_{\ell-1}}}
           g_1^{-a^{N+1-j}}
                 \prod_{j \in SI_{ID^*_{\ell-1}} \setminus SI^*_{ID^*_{\ell-1}}}
                 g_1^{a^{N+1-j}}
           \Big)^{-h'_0 / \Delta T} H_1(T)^{r'_{0}},~ \db \\
    U_{\ell-1, 2} = &g_1^{r_{0}}
         = \Big( \prod_{j \in SI^*_{ID^*_{\ell-1}} \setminus SI_{ID^*_{\ell-1}}}
         g_1^{-a^{N+1-j}}
                 \prod_{j \in SI_{ID^*_{\ell-1}} \setminus SI^*_{ID^*_{\ell-1}}}
                 g_1^{a^{N+1-j}}
           \Big)^{-1 / \Delta T} g_1^{r'_{0}},~\\
    P_0 =&  g_2^{\alpha^{N+1}\beta_{\epsilon}} \cdot \prod_{i=1}^{\ell-1}
            F_{2,i}(I_i)^{r_{i}} \cdot H(T)^{r_{0}} \cdot
            g_2^{-\alpha^{N+1} \beta_{ID^*_{\ell-1}}} \\
        =&  g_2^{a^{N+1}b} \cdot \prod_{i=1}^{\ell-1}F_{2,i}(I_i)^{r_{i}}
            \cdot H(T)^{r_{0}} \cdot g_2^{-a^{N+1}
            (b + \hat{\beta}_{ID^*_{\ell-1}})} \\
        =&  \prod_{i=1}^{\ell-1}F_{2,i}(I_i)^{r_{i}} \cdot H(T)^{r_{0}}
            \cdot g_2^{-a^{N+1} \hat{\beta}_{ID^*_{\ell-1}}}
        =   \prod_{i=1}^{\ell-1}F_{2,i}(I_i)^{r_{i}} \cdot H(T)^{r_{0}}
            \cdot e (g_1^a, g_1^{a^N})^{-\hat{\beta}_{ID^*_{\ell-1}}}.
    \end{align*}
In case of $ID_{\ell-1} = ID_0$ and $T = T^*$, we have that the update key is
correctly distributed from the setting $\beta_{ID^*_0} = b$ and $\gamma_{ID^*_0}
= \theta_{ID^*_0} - \sum_{j \in SI^*_{ID_0}} a^{N+1-j}$ as the following equation
    \begin{align*}
    U_{\ell-1, 1}  &= \big( g_1^{\gamma_{ID^*_0}} \prod_{j \in SI^*_{ID^*_0}}
                 g_1^{\alpha^{N+1-j}}
                 \big)^{\beta_{ID^*_0}} \cdot H_1(T^*)^{r_2} \\
         &= \Big( g_1^{\theta_{ID^*_0}} \big( \prod_{j \in SI^*{ID^*_0}}
                  g_1^{a^{N+1-j}}
                 \big)^{-1} \cdot \prod_{j \in SI^*_{ID^*_0}} g_1^{a^{N+1-j}}
                 \Big)^b H_1(T^*)^{r_2}
         = (g_1^b)^{\theta} H_1(T^*)^{r_2}.
    \end{align*}
In case of $ID_{\ell-1} \in$ \textbf{Prefix}$(ID^*_{\ell^*})$ and  $T = T^*$,
we have that the update key is correctly distributed from the setting
$\beta_{\epsilon} = b$, $\beta_{ID^*_{\ell-1}} = b +
\hat{\beta}_{ID^*_{\ell-1}}$ and $\gamma_{ID^* _{\ell-1}} =
\theta_{ID^*_{\ell-1}} - \sum_{j \in SI^*_{ID^*_{\ell-1}}} a^{N+1-j}$ as the
following equation
    \begin{align*}
    U_{\ell-1, 1} &= \big( g_1^{\gamma_{ID^*_{\ell-1}}} \prod_{j \in SI^*_{ID^*_{\ell-1}}}
                 g_1^{\alpha^{N+1-j}} \big)^{\beta_{ID^*_{\ell-1}}}
                 H_1(T)^{\hat{r}_0}
          = (g_1^{\theta_{ID^*_{\ell-1}}})^{b + \hat{\beta}_{ID^*_{\ell-1}}}
                 H_1(T)^{\hat{r}_0}  \\
          &= (g_1^b \cdot g_1^{\hat{\beta}_{ID^*_{\ell-1}}})^{\theta_{ID^*_{\ell-1}}}
               H_1(T^*)^{\hat{r}_0},~
    U_{\ell-1, 2}  = g_1^{\beta_{ID^*_{\ell-1}}}
         = g_1^b \cdot g_1^{\hat{\beta}_{ID^*_{\ell-1}}},~ \db \\
    P_0 &= g_2^{\alpha^{N+1}\beta_{\epsilon}} \cdot \prod_{i=1}^{\ell-1}
                   F_{2,i}(I_i)^{r_{i}} \cdot
                    H(T)^{r_{0}} \cdot g_2^{-\alpha^{N+1}\beta_{ID^*_{\ell-1}}}
             = g_2^{a^{N+1}b} \cdot \prod_{i=1}^{\ell-1}F_{2,i}(I_i)^{r_{i}}
                   \cdot H(T)^{r_{0}} \cdot g_2^{-a^{N+1}
                   (b + \hat{\beta}_{ID^*_{\ell-1}})} \\
            &= \prod_{i=1}^{\ell-1}F_{2,i}(I_i)^{r_{i}} \cdot H(T)^{r_{0}}
                    \cdot g_2^{-a^{N+1} \hat{\beta}_{ID^*_{\ell-1}}}
             = \prod_{i=1}^{\ell-1}F_{2,i}(I_i)^{r_{i}} \cdot H(T)^{r_{0}}
                   \cdot e (g_1^a, g_1^{a^N})^{-\hat{\beta}_{ID^*_{\ell-1}}}.~\\
    \end{align*}

Otherwise, the component $C$ of the challenge ciphertext is independent of
$\delta$ in the $\mc{A}$'s view since $Z_1$ is a random element in $\G_3$.
This completes our proof.
\end{proof}

\section{Conclusion}

In this paper, we first proposed an RHIBE scheme via history-preserving
updates with $O(\ell)$ number of private key elements and update key elements
by combining the BB-HIBE scheme and the BGW-PKBE scheme. Next, we proposed
another RHIBE scheme via history-free updates that reduces the number of
private key elements from $O(\ell)$ to $O(1)$.
An interesting open problem is to build an adaptive secure RHIBE scheme with
$O(\ell)$ number of private key elements and update key elements. Another one
is to construct an RHIBE scheme with $O(\ell)$ number of private key elements
and update key elements that can handle exponential number of users in the
system.

\bibliographystyle{plain}
\bibliography{rhibe-from-mlmaps}


\end{document}